\numberwithin{equation}{section}
\newtheorem{theorem}{Theorem}[section]
\newtheorem{definition}{Definition}[section]
\newtheorem{remark}{Remark}[section]
\newtheorem{proposition}{Proposition}[section]
\newtheorem{lemma}{Lemma}[section]
\newcommand{\Ex} {
\mathbb{E}}
\newcommand{\Px} {
\mathbb{P}}
\newcommand{\G} {
\mathcal{G}}
\newcommand{\R} {
\mathbb{R}}
\newcommand{\N}{\mathbb{N}}
\def\esssup_#1{\underset{#1}{\Xi}}
\def\essinf_#1{\underset{#1}{\mathrm{ess\,inf\, }}}
\def\argmax_#1{\underset{#1}{\mathrm{arg\,max\, }}}
\def\argmin_#1{\underset{#1}{\mathrm{arg\,min\, }}}
\newcommand{\Gx}{\mathbb{G}}
\newcommand{\Hx}{\mathbb{H}}
\newcommand{\Zx}{\mathbb{Z}}
\newcommand{\Fx}{\mathbb{F} }
\newcommand{\Gam}{\Gamma}
\definecolor{Red}{rgb}{1.00, 0.00, 0.00}
\definecolor{DRed}{rgb}{0.5, 0.00, 0.00}
\definecolor{Blue}{rgb}{0.00, 0.00, 1.00}
\definecolor{Green}{rgb}{0.0, 0.4, 0.0}
\definecolor{Magenta}{rgb}{1.0, 0, 1.0}
\title{}
\author{}
\title{Risk Sensitive Portfolio Optimization with Default Contagion and Regime-Switching}
\author{Lijun Bo \thanks{Email: lijunbo@ustc.edu.cn, School of Mathematical Sciences, University of Science and Technology of China, Hefei, Anhui
Province, 230026, China, and Wu Wen Tsun Key
Laboratory of Mathematics, Chinese Academy of Science, Hefei, Anhui Province 230026, China.}\and
Huafu Liao \thanks{Email: lhflhf@mail.ustc.edu.cn, School of Mathematical Sciences, University of Science and Technology of China, Hefei, Anhui
Province, 230026, China.} \and
Xiang Yu \thanks{E-mail: xiang.yu@polyu.edu.hk, Department of Applied Mathematics, The Hong Kong Polytechnic University, Hung Hom, Kowloon, Hong Kong.}}
\begin{document}

\maketitle

\begin{abstract}

We study an open problem of risk-sensitive portfolio allocation in a regime-switching credit market with default contagion. The state space of the Markovian regime-switching process is assumed to be a countably infinite set. To characterize the value function, we investigate the corresponding recursive infinite-dimensional nonlinear dynamical programming equations (DPEs) based on default states. We propose to work in the following procedure: Applying the theory of monotone dynamical system, we first establish the existence and uniqueness of classical solutions to the recursive DPEs by a truncation argument in the finite state space. The associated optimal feedback strategy is characterized by developing a rigorous verification theorem. Building upon results in the first stage, we construct a sequence of approximating risk sensitive control problems with finite states and prove that the resulting smooth value functions will converge to the classical solution of the original system of DPEs. The construction and approximation of the optimal feedback strategy for the original problem are also thoroughly discussed.

\vspace{0.3 cm}

\noindent{\textbf{AMS 2000 subject classifications}: 3E20, 60J20.}

\vspace{0.3 cm}

\noindent{\textbf{Keywords and phrases}:}\quad {Default contagion; regime switching; countably infinite states; risk sensitive control; recursive dynamical programming equations; verification theorems}.

\end{abstract}

\ \\
\section{Introduction}

One ultimate goal for the community of financial mathematics is to characterize the sophisticated investment environment using tractable probabilistic or stochastic models. For example, the market trend is usually described by some random factors such as Markov chains. 
In particular, the so-called regime-switching model is widely accepted and usually proposed to capture the influence on the behavior of the market caused by transitions in the macroeconomic system or the macroscopic readjustment and regulation. For instance, the empirical results by Ang and Bekaert~\cite{AngBeK02b} illustrate the existence of two regimes characterized by different levels of volatility. It is well known that default events modulated by the regime-switching process have an impact on the distress state of the surviving securities in the portfolio. More specifically, by an empirical study of the corporate bond market over 150 years, Giesecke et al.~\cite{GieSchStr11} suggest the existence of three regimes corresponding to high, middle, and low default risk. With finitely many economical regimes, Capponi and Figueroa-L\'opez~\cite{CapLop14a} investigate the classical utility maximization problem from terminal wealth based on a defaultable security, and Capponi, Figueroa-L\'opez and Nisen~\cite{CapLopNis14b} obtain a Poisson series representation for the arbitrage-free price process of vulnerable contingent claims.

On the other hand, the importance of considering the defaultable underlying assets has attracted a lot of attention, especially after the systemic failure caused by some global financial crisis. Some recent developments extend the early model of single defaultable security to default contagion effects on portfolio allocations. The research of these mutual contagion influence opens the door to provide possible answers to some empirical puzzles like the high mark-to-market variations in prices of credit sensitive assets. For example, Kraft and Steffensen~\cite{Kraf} discuss the contagion effects on defaultable bonds. Callegaro, Jeanblanc and Runggaldier~\cite{Call12} consider an optimal investment problem with multiple defaultable assets which depend on a partially observed exogenous factor process. Jiao, Kharroubi and Pham~\cite{Jiao13} study the model in which multiple jumps and default events are allowed. Recently, Bo and Capponi~\cite{Bo16} examine the optimal portfolio problem of a power utility investor who allocates the wealth between credit default swaps and a money market for which the contagion risk is modeled via interacting default intensities.

Apart from the celebrated Merton's model on utility maximization, there has been an increasing interest in the risk-sensitive stochastic control criterion in the portfolio management during recent years, see, e.g., Davis and Lleo~\cite{DavisLIeo04} for an overview of the theory and practice of risk-sensitive asset management. In a typical risk sensitive portfolio optimization problem, the investor maximizes the long run growth rate of the portfolio, adjusted by a measure of volatility. In particular, the classical utility maximization from terminal wealth can be transformed to the risk-sensitive control criterion by introducing a change of measure and a so-called risk-sensitive parameter which characterizes on the degree of risk tolerance of investors, see, e.g., Bielecki and Pliska~\cite{BiePliska99} and Nagai and Peng~\cite{PengNagai}. We will only name a small portion of the vast literature, for instance, the risk sensitive criterion can be linked to the dynamic version of Markowitz's mean-variance optimization by Bielecki and Pliska~\cite{BiePliska99}, to differential games by Fleming~\cite{Fleming06} and more recently by Bayraktar and Yao~\cite{BayraktarYao13} for the connection to zero-sum stochastic differential games using BSDEs and the weak dynamic programming principle. Hansen, et al.~\cite{HansenNoa} further connect the risk-sensitive objective to a robust criteria in which perturbations are characterized by the relative entropy. Bayraktar and Cohen~\cite{BayraktarCohen16} later examine a risk sensitive control version of the lifetime ruin probability problem.

Despite many existing work on the risk-sensitive control, optimal investment with credit risk or regime switching respectively, it remains an open problem of the risk-sensitive portfolio allocation with both scenarios of default risk and regime-switching. Our paper aims to fill this gap and considers an interesting case when the default contagion effect can depend on regime states, possibly infinitely many. For some recent related work, it is worth noting that in the default-free market with finite regime states, Andruszkiewicz, Davis and Lleo~\cite{AndDavLIeo} study the existence and uniqueness of the solution to the risk-sensitive asset maximization problem, and provide an ODE for the optimal value function, which may be efficiently solved numerically. Meanwhile, Das, Goswami and Rana~\cite{DasGosRan} consider a risk-sensitive portfolio optimization problem with multiple stocks modeled as a multi-dimensional jump diffusion whose coefficients are modulated by an age-dependent semi-Markov process. They also establish the existence and uniqueness of classical solutions to the corresponding HJB equations. In the context of theoretical stochastic control, we also note that Kumar and Pal~\cite{KumarPal} derive the dynamical programming principle for a class of risk-sensitive control problem of pure jump process with near monotone cost. To model hybrid diffusions, Nguyen and Yin~\cite{NguyenYin} propose a switching diffusion system with countably infinite states. The existence and uniqueness of the solution to the hybrid diffusion with past-dependent switching are obtained. Back to the practical implementation in financial markets with stochastic factors, the regime-switching model or continuous time Markov chain is frequently used to approximate the dynamics of time-dependent market parameter or factors. The continuous state space of the parameter or factor is usually discretized which lead to infinite states of the approximating Markov chain (see, e.g., Ang and Timmermann~\cite{AngTim}). This mainly motivates us to consider the countable regime states in this work and it is shown that this technical difficulties can eventually be reconciled using an appropriate approximation by counterparts with finite states. Therefore, our analytical conclusions for regime-switching can potentially provide theoretical foundations for numerical treatment of risk sensitive portfolio optimization with defaults and stochastic factor processes.

Our contributions are twofold. From the modeling perspective, it is considered that the correlated stocks are subject to credit events, and in particular, the dynamics of defaultable stocks, namely the drift, the volatility and the default intensity coefficients, all depend on the macroeconomic regimes. As defaults can occur sequentially, the default contagion is modeled in the sense that default intensities of surviving names are affected simultaneously by default events of other stocks as well as on current regimes states. This set up in our model enables us to analyze the joint complexity rooted in the investor's risk sensitivity, the regime changes and the default contagion among stocks. From the mathematical perspective, the resulting dynamic programming equation (DPE) can be viewed as a recursive infinite-dimensional nonlinear dynamical system in terms of default states. The depth of the recursion equals the number of stocks in the portfolio. Our recipe to study this new type of recursive dynamical system can be summarized in the following scheme: First, it is proposed to truncate the countably infinite state space of the regime switching process and consider the recursive DPE only with a finite state space. Second, for the finite state case, the existence and uniqueness of the solutions of the recursive DPE are analyzed based upon a backward recursion, namely from the state in which all stocks are defaulted toward the state in which all stocks are alive. It is worth noting that no bounded constraint is reinforced on the trading strategies of securities or control variables as in Andruszkiewicz, Davis and Lleo~\cite{AndDavLIeo} and Kumar and Pal~\cite{KumarPal}. As a price to pay, the nonlinearities of the HJB dynamical systems are not globally Lipschitz continuous. To overcome this new challenge, we develop a truncation technique by proving a comparison theorem based on the theory of monotone dynamical systems documented in Smith~\cite{smith08}. Then, we establish a unique classical solution of the recursive DPE by showing that the solution of truncated system has a uniform (strictly positive) lower bound independent of the truncation level. This also enables us to characterize the optimal admissible feedback trading strategy in the verification theorem. Next, when the states are relaxed to be countably infinite, the results in the finite state case can be applied to construct a sequence of approximating risk sensitive control problems to the original problem and obtain elegant uniform estimates to conclude that the sequence of associated smooth value functions will successfully converge to the classical solution of the original recursive DPE. We also contribute to the existing literature by exploring the possible construction and approximation of the optimal feedback strategy in some rigorous verification theorems.


The rest of the paper is organized as follows. Section \ref{sec:model} describes the credit market model with default contagion and regime switching. Section \ref{risksens} formulates the risk-sensitive stochastic control problem and introduces the corresponding DPE. We analyze the existence and uniqueness of the classical global solution of recursive infinite-dimensional DPEs and develop rigorous verification theorems in Section \ref{sec:mainres}. 
For the completeness, some auxiliary results and proofs are delegated to the Appendix~\ref{app:proof1}.

\section{The Model} \label{sec:model}

We consider a model of the financial market consisting of $N\geq1$ defaultable stocks and a risk-free money market account on a given complete filtered probability space $(\Omega,{\mathcal G},{\Gx},\Px)$. Let $Y=(Y(t))_{t\in[0,T]}$ be a regime-switching process which will be introduced precisely later. The global filtration $\Gx=\Fx\vee{\Hx}$ augmented by all $\Px$-null sets satisfies the usual conditions. The filtration $\Fx=({\mathcal{F}}_t)_{t\in[0,T]}$ is jointly generated by the regime-switching process $Y$ and an independent $d\geq1$-dimensional Brownian motions denoted by
$W=(W_j(t);\ j=1,\ldots,d)_{t\in[0,T]}^{\top}$. We use $\top$ to denote the transpose operator. The time horizon of the investment is given by $T>0$.

The price process of the money market account $B(t)$ satisfies $dB(t)= r(Y(t))B(t)dt$, where $r(Y(t))\geq0$ is interest rate modulated by the regime-switching process $Y$. The filtration $\Hx$ is generated by a $N$-dimensional default indicator process $Z=(Z_j(t);\ j=1,\ldots,N)_{t\in[0,T]}$ which
takes values in ${\cal S}:=\{0,1\}^N$. The default indicator process $Z$ links to the default times of the $N$ defaultable stocks via $\tau_j := \inf\{t\geq0;\ Z_j(t)=1\}$
for $j=1,\ldots,N$. The filtration $\Hx=({\mathcal{H}}_t)_{t\in[0,T]}$ is defined by ${\cal H}_t=\bigvee_{j=1}^N{\sigma(Z_j(s);\ s\leq t)}$.
Hence $\Hx$ contains all information about default events until the terminal time $T$. The market model is specified in detail in the following subsections.

\subsection{Regime-Switching Process}\label{sub:RSP} The regime-switching process is described by a continuous time (conservative) Markov chain $Y=(Y(t))_{t\in[0,T]}$ with countable state space
$\Zx_+:=\N\setminus\{0\}=\{1,2,\ldots\}$. The generator of the Markov chain $Y$ is given by the $Q$-matrix $Q=(q_{ij})_{ij\in\Zx_+}$. This yields that $q_{ii}\leq0$ for $i\in\Zx_+$, $q_{ij}\geq0$ for $i\neq j$, and $\sum_{j=1}^{\infty}q_{ij}=0$ for $i\in\Zx_+$ (i.e., $\sum_{j\neq i}q_{ij}=-q_{ii}$ for $i\in\Zx_+$).

\subsection{Credit Risk Model} The joint process $(Y,Z)$ of the regime-switching process and the default indicator process is a Markov process on the state space $\Zx_+\times\mathcal{S}$.
Moreover, at time $t$, the default indicator process transits from a state $Z(t):=(Z_1(t),\ldots,Z_{j-1}(t),Z_j(t),Z_{j+1}(t),\ldots,Z_N(t))$
in which the obligor $j$ is alive ($Z_j(t)=0$) to the neighbor state ${Z}^j(t):=(Z_1(t),\ldots,Z_{j-1}(t),1-Z_j(t),Z_{j+1}(t),\ldots,Z_N(t))$ in which the obligor $j$ has defaulted at a strictly positive stochastic rate $\lambda_{j}(Y(t),Z(t))$. We assume that $Y$ and $Z_1,\ldots,Z_N$ will not jump simultaneously. Therefore, the default intensity of the $j$-th stock may change either if any other stock in the portfolio defaults (contagion effect), or if there are regime-switchings. Our default model thus belongs to the rich class of interacting intensity models, introduced by Frey and Backhaus~\cite{FreyBackhaus04}. We set $\lambda(i,z)=(\lambda_j(i,z);\ j=1,\ldots,N)^{\top}$ for $(i,z)\in\Zx_+\times{\cal S}$.

\subsection{Price Processes} The price process of the $N$ defaultable stocks is denoted by the vector process $\tilde{P}=(\tilde{P}^j(t);\ j=1,\ldots,N)_{t\in[0,T]}^{\top}$. Here the price process of the $j$-th stock is given by, for $t\in[0,T]$,
\begin{equation}\label{eq:pricedef}
\tilde{P}_j(t)=(1-Z_j(t))P_j(t), \ \ \ j = 1,\ldots,N,
\end{equation}
where $P=(P_j(t);\ j=1,\ldots,N)_{t\in[0,T]}^{\top}$ represents the pre-default price of the $N$ stocks. In particular, the price of the $j$-th stock is given by the pre-default price $P_j(t)$ up to ${\tau_j}-$, and jumps to $0$ at default time ${\tau_j}$ and remains at $0$ afterwards. The pre-default price process $P$ of the $N$ defaultable stocks is assumed to satisfy
\begin{align}\label{eq:P}
dP(t) = {\rm diag}(P(t)) [(\mu(Y(t))+\lambda(Y(t),Z(t))) dt + \sigma(Y(t))dW(t)],
\end{align}
where, ${\rm diag}(P(t))$ is the diagonal $N\times N$-dimensional matrix with diagonal elements $P_i(t)$. For each $i\in\Zx_+$, the vector $\mu(i)$ is $\R^N$-valued column vector and $\sigma(i)$ is $\R^{N\times d}$-valued matrices such that $\sigma(i)\sigma(i)^\top$ is positive definite.
By Eq.s~\eqref{eq:pricedef}, \eqref{eq:P} and integration by parts, the price dynamics of defaultable stocks satisfies that
\begin{align}\label{eq:tildeP}
d\tilde{P}(t) = {\rm diag}(\tilde{P}(t)) [\mu(Y(t))dt +  \sigma(Y(t))dW(t)-dM(t)].
\end{align}
Here, $M=(M_j(t);\ j=1,\ldots,N)_{t\in[0,T]}^{\top}$ is a pure jump $\Gx=(\G_t)_{t\in[0,T]}$-martingale given by
\begin{align}\label{eq:taui}
M_j(t)&:= Z_j(t) - \int_0^{t\wedge\tau_j}\lambda_j(Y(s),Z(s))ds,\ \ \ \ \ \ t\in[0,T].
\end{align}
By the construction of the default indicator process $Z$ in Bo and Capponi~\cite{BoCapponi18}, it can be seen that $W$ is also a $\Gx$-Brownian motion using the
condition (M.2a) in Section 6.1.1 of Chapter 6 in Bielecki and Rutkowski~\cite{BieRut04}.

\section{Dynamic Optimization Problem} \label{risksens}

In this section, we formally derive the dynamic programming equation (DPE) associated with the risk sensitive stochastic control problem. We first reformulate the risk sensitive portfolio optimization problem in an equivalent form in Section \ref{sec:wealth}. The corresponding DPE will be derived and analyzed in Section \ref{sec:DPE}.

\subsection{Formulation of Portfolio Optimization Problem} \label{sec:wealth}

Let us first introduce the set up and formulate the risk sensitive portfolio optimization problem. For $t\in[0,T]$, let $\phi_B(t)$ represent the number of shares of the risk-free asset and let $\phi_j(t)$ denote the number of shares of the $j$-th stock at time $t$ held by the investor. The resulting wealth process is given by
\begin{align*}
X^{\phi}(t) = \sum_{j=1}^N\phi_j(t)\tilde{P}_j(t) + \phi_B(t)B(t),\ \ \ t\in[0,T].
\end{align*}
Using the price representation~\eqref{eq:pricedef} of stocks, the above wealth process can be rewritten as:
\begin{align*}
X^{\phi}(t)=\sum_{j=1}^N\phi_j(t) {(1-Z_j(t))} P_j(t)+\phi_B(t)B(t),\ \ \ t\in[0,T].
\end{align*}
For a given positive wealth process, we can consider the fractions of wealth invested in the stocks and money market account as follows: for $j=1,\ldots,N$, let us define $\tilde{\pi}_j(t)=\frac{\phi_j(t)\tilde{P}_j(t-)}{X^{\phi}(t-)}$ and $\tilde{\pi}_B(t)=1-\tilde{\pi}(t)^{\top}e_N$, where $\tilde{\pi}(t)=(\tilde{\pi}_i(t);\ i=1,\ldots,N)^{\top}$,
and $e_N = \big(\underbrace{1,1,\ldots,1}_{N \; ones}\big)^{\top}$.
Noting that the price of the $j$-th stock jumps to zero when the $j$-th stock defaults, the fraction of wealth held by the investor in this stock is zero after it defaults.
In particular, the following equality holds {$\tilde{\pi}_j(t)=(1-Z_j(t-))\tilde{\pi}_j(t)$ for $j=1,\ldots,N$}. Therefore, the self-financing condition leads to wealth dynamics in the following form: $X^{\tilde{\pi}}(0)=x\in\R_+:=(0,\infty)$, and
\begin{align}\label{eq:wealth}
dX^{\tilde{\pi}}(t) &= X^{\tilde{\pi}}(t-)\tilde{\pi}(t)^{\top}{\rm diag}(\tilde{P}(t-))^{-1}d\tilde{P}(t) + X^{\tilde{\pi}}(t)(1-\tilde{\pi}(t)^{\top}e_N)\frac{dB(t)}{B(t)}\\
&=X^{\tilde{\pi}}(t)\big[r(Y(t))+\tilde{\pi}(t)^{\top}(\mu(Y(t))-r(Y(t))e_N)\big]dt+ X^{\tilde{\pi}}(t-)\tilde{\pi}(t)^{\top}[\sigma(Y(t))dW(t)-dM(t)].\nonumber
\end{align}

We next introduce the definition of the set of all admissible controls used in the paper.
\begin{definition}\label{def:add-con}
The admissible control set $\tilde{\cal U}$ is a class of $\Gx$-predictable feedback strategies $\tilde{\pi}(t)=(\tilde{\pi}_j(t);\ j=1,\ldots,N)^{\top}$, $t\in[0,T]$,
given by $\tilde{\pi}_j(t)=\pi_j(t,X^{\tilde{\pi}}(t-),Y(t-),Z(t-))$ such that SDE~\eqref{eq:wealth} admits a unique positive (strong) solution for $X^{\tilde{\pi}}(0)=x\in\R_+$ (i.e. the feedback strategies $\tilde{\pi}(t)$ should take values in $U:=(-\infty,1)^N$). Furthermore, the control $\tilde{\pi}=(\tilde{\pi}(t))_{t\in[0,T]}$ is required to make the positive process $\Gamma^{\tilde{\pi},\theta}=(\Gamma^{\tilde{\pi},\theta}(t))_{t\in[0,T]}$
defined later by \eqref{eq:Gam} to be a $\Px$-martingale.
\end{definition}
We will prove the martingale property of $\Gamma^{\tilde{\pi}^*,\theta}$ for a candidate optimal strategy $\tilde{\pi}^*$ by verifying the generalized Novikov's condition in Section~\ref{sec:mainres}. We consider the following {risk-sensitive} objective functional. For $\tilde{\pi}\in\tilde{\cal U}$, and given the initial values $(X(0),Y(0),Z(0))=(x,i,z)\in\R_+\times\Zx_+\times{\cal S}$, we define
\begin{align}\label{eq:J0}
{\cal J}(\tilde{\pi};T,x,i,z) := -\frac{2}{\theta}\log\Ex\left[\exp\left(-\frac{\theta}{2}\log X^{\tilde{\pi}}(T)\right)\right]=-\frac{2}{\theta}\log\Ex\left[(X^{\tilde{\pi}}(T))^{-\frac{\theta}{2}}\right].
\end{align}
The investor aims to maximize the objective functional ${\cal J}$ over all strategies $\tilde{\pi}\in\tilde{\cal U}$. Let us only focus on the case when $\theta\in(0,\infty)$ for a risk-sensitive investor.
The case $\theta\in(-2,0)$ is ignored as it is associated to a risk-seeking behavior which is less encountered in practise.
The objective functional \eqref{eq:J0} has been considered in the existing literature (see, e.g., Bielecki and Pliska~\cite{BiePliska99})
for dynamic asset allocations in the presence of market risk, however, it is still an open problem in the setting with default risk and regime-switching which motivates our research of this project. Eq.~(1.1) in Bielecki and Pliska~\cite{BiePliska99} in our case can be read as: for $\theta$ close to $0$,
\begin{align}\label{eq:rem-000}
{\cal J}(\tilde{\pi};T,x,y,z)=\Ex\left[\log\left(X^{\tilde{\pi}}(T)\right)\right]-\frac{\theta}{4}{\rm Var}\left(\log(X^{\tilde{\pi}}(T))\right)+o(\theta^2),
\end{align}
where $o(\theta^2)$ will typically depend on the terminal horizon $T$. 
Then ${\cal J}(\tilde{\pi};T,x,y,z)$ may be interpreted as the growth rate of the investor's wealth minus a penalty term proportional to the variance of the realized rate, with an error that is proportional to $\theta^2$. This establishes a link between the risk-sensitive control problem and the robust decision making rule. A risk-sensitive investor would like to design a decision rule which protects him against large deviations of the growth rate from its expectation, and he achieves this by choosing higher values of the parameter $\theta$.

We next rewrite the objective functional as the exponential of an integral criterion (similar to Nagai and Peng~\cite{PengNagai}, and Capponi et al.~\cite{CappPascucci}) which will turn out to be convenient for the analysis of the dynamic programming equation. For all $\tilde{\pi}\in\tilde{\cal U}$, the wealth process solving SDE~\eqref{eq:wealth} is given by
\begin{align*}
X^{\tilde{\pi}}(T)=&x\exp\Bigg\{\int_0^T\big[r(Y(s))+\tilde{\pi}^{\top}(s)({\mu}(Y(s))-r(Y(s))e_N)\big]ds+\int_0^T\tilde{\pi}^{\top}(s)\sigma(Y(s))dW(s)\nonumber\\
&-\frac{1}{2}\int_0^T\left\|\sigma(Y(s))^{\top}\tilde{\pi}(s)\right\|^2ds+\sum_{j=1}^N\int_0^T\log(1-\tilde{\pi}_j(s))dM_j(s)\nonumber\\
&+\sum_{j=1}^N\int_0^{T\wedge\tau_j}\lambda_j(Y(s),Z(s))\big[\tilde{\pi}_j(s)+\log(1-\tilde{\pi}_j(s))\big]ds\Bigg\},
\end{align*}
and consequently
\begin{align}\label{eq:solution}
\left(X^{\tilde{\pi}}(T)\right)^{-\frac{\theta}{2}}
&=x^{-\frac{\theta}{2}}\Gam^{\tilde{\pi},\theta}(T)\exp\left(\frac{\theta}{2}\int_0^TL(\tilde{\pi}(s);Y(s),Z(s))ds\right),
\end{align}
where, for $(\pi,i,z)\in U\times\Zx_+\times{\cal S}$, the risk sensitive function $L(\pi;i,z)$ is defined by
\begin{align}\label{eq:L0}
L(\pi;i,z)&:= -r(i)-\pi^{\top}(\mu(i)-r(i)e_N)+\frac{1}{2}\left(1+\frac{\theta}{2}\right)\left\|\sigma(i)^{\top}\pi\right\|^2\nonumber\\
&\quad-\sum_{j=1}^N(1-z_j)\left[\frac{2}{\theta}+\pi_j-\frac{2}{\theta}(1-\pi_j)^{-\frac{\theta}{2}}\right]\lambda_j(i,z).
\end{align}
Here, the positive density process is given by, for $t\in[0,T]$,
\begin{align}\label{eq:Gam}
\Gam^{\tilde{\pi},\theta}(t)&:={\cal E}(\Pi^{\tilde{\pi},\theta})_t,\\
\Pi^{\tilde{\pi},\theta}(t)&:=-\frac{\theta}{2}\int_0^t\tilde{\pi}(s)^{\top}\sigma(Y(s))dW(s)+\sum_{j=1}^N\int_0^t\{(1-\tilde{\pi}_j(s))^{-\frac{\theta}{2}}-1\}dM_j(s),\nonumber
\end{align}
where ${\cal E}(\cdot)$ denotes the stochastic exponential.

As $\tilde{\pi}\in\tilde{\cal U}$, we have that $\Gam^{\tilde{\pi},\theta}=(\Gam^{\tilde{\pi},\theta}(t))_{t\in[0,T]}$ is a $\Px$-martingale.
We can thus define the following change of measure given by
\begin{align}\label{eq:change-measure}
\frac{d\Px^{\tilde{\pi},\theta}}{d\Px}\big|_{\G_t}=\Gam^{\tilde{\pi},\theta}(t),\ \ \ \ \ \ t\in[0,T],
\end{align}
under which
\begin{align}\label{eq:BMtheta}
W^{\tilde{\pi},\theta}(t):=W(t)+\frac{\theta}{2}\int_0^t\sigma(Y(s))^{\top}\tilde{\pi}(s)ds,\ \ \ \ \ \ t\in[0,T]
\end{align}
is a $d$-dimensional Brownian motion, while under $\Px^{\tilde{\pi},\theta}$, for $j=1,\ldots,N$, it holds that
\begin{align}\label{eq:Girjump}
M_j^{\tilde{\pi},\theta}(t):=Z_j(t)-\int_0^{t\wedge\tau_j}(1-\tilde{\pi}_j(s))^{-\frac{\theta}{2}}\lambda_j(Y(s),Z(s))ds,\qquad t\in[0,T]
\end{align}
is a martingale. The definition of $\Px^{\tilde{\pi},\theta}$ enables us to rewrite the above {risk-sensitive} objective functional~\eqref{eq:J0} in an exponential form. From~\eqref{eq:solution}, we deduce that
\begin{align*}\label{eq:J2}
{\cal J}(\tilde{\pi};T,x,i,z) &= -\frac{2}{\theta}\log\Ex\left[\left(X^{\tilde{\pi}}(T)\right)^{-\frac{\theta}{2}}\right]=-\frac{2}{\theta}\log\Ex\left[x^{-\frac{\theta}{2}}\Gam^{\tilde{\pi},\theta}(T)\exp\left(\frac{\theta}{2}\int_0^TL(\tilde{\pi}(s);Y(s),Z(s))ds\right)\right]\nonumber\\
&=\log x -\frac{2}{\theta}\log\Ex^{\tilde{\pi},\theta}\left[\exp\left(\frac{\theta}{2}\int_0^TL(\tilde{\pi}(s);Y(s),Z(s))ds\right)\right]=:\log x + \bar{{\cal J}}(\tilde{\pi};T,i,z).
\end{align*}
Here $\Ex^{\tilde{\pi},\theta}$ represents the expectation w.r.t. $\Px^{\tilde{\pi},\theta}$ defined by \eqref{eq:change-measure}.
Thanks to the relationship between ${\cal J}$ and $\bar{{\cal J}}$, our original problem is equivalent to maximize $\bar{{\cal J}}$ over $\tilde{\pi}\in\tilde{\cal U}$. We can therefore reformulate the value function of the risk-sensitive control problem as:
\begin{equation}\label{eq:value-fcn}
V(T,i,z) = \sup_{\tilde{\pi}\in\tilde{\cal U}} \bar{{\cal J}}(\tilde{\pi};T,i,z)=-\frac{2}{\theta}\inf_{\tilde{\pi}\in\tilde{\cal U}} \log\Ex^{\tilde{\pi},\theta}\left[\exp\left(\frac{\theta}{2}\int_0^TL(\tilde{\pi}(s);Y(s),Z(s))ds\right)\right],
\end{equation}
for $(i,z)\in\Zx_+\times{\cal S}$.

\subsection{Dynamic Programming Equations} \label{sec:DPE}
In this section, we will first derive the dynamic programming equation (DPE) satisfied by the value function~\eqref{eq:value-fcn} using heuristic arguments in Birge et al.~\cite{BirBoCap17}. It will be postponed in the next section to show that the solution of DPE indeed coincides with the value function of our risk sensitive control problem in rigorous verification theorems.

Let $(t,i,z)\in[0,T]\times\Zx_+\times{\cal S}$ and define
\begin{equation}\label{eq:J}
\bar{V}(t,i,z) :=-\frac{2}{\theta}\inf_{\tilde{\pi}\in\tilde{\cal U}}\log J(\tilde{\pi};t,i,z):= -\frac{2}{\theta}\inf_{\tilde{\pi}\in\tilde{\cal U}}\log\Ex_{t,i,z}^{\tilde{\pi},\theta}\left[\exp\left(\frac{\theta}{2}\int_t^TL(\tilde{\pi}(s);Y(s),Z(s))ds\right)\right],
\end{equation}
where $\Ex_{t,i,z}^{\tilde{\pi},\theta}[\cdot]:=\Ex^{\tilde{\pi},\theta}[\cdot|Y(t)=i,Z(t)=z]$. This yields the relation ${V}(T,i,z)=\bar{V}(0,i,z)$.
For $0\leq t<s\leq T$, the dynamic programming principle leads to
\begin{equation}\label{eq:dpp}
\bar{V}(t,i,z)= -\frac{2}{\theta}\inf_{\tilde{\pi}\in\tilde{\cal U}}\log\Ex_{t,i,z}^{\tilde{\pi},\theta}\left[\exp\left(-\frac{\theta}{2}\bar{V}(s,Y(s),Z(s))+\frac{\theta}{2}\int_t^sL(\tilde{\pi}(u);Y(u),Z(u))du\right)\right].\nonumber
\end{equation}
{Using heuristic arguments in Birge et al.~\cite{BirBoCap17}, we have the following DPE satisfied by $\bar{V}$, i.e., for all $(t,i,z)\in[0,T)\times\Zx_+\times{\cal S}$,
\begin{align}\label{eq:dpe2}
0=&\frac{\partial \bar{V}(t,i,z)}{\partial t}-\frac{2}{\theta}\sum_{l\neq i}q_{il}\left[\exp\left(-\frac{\theta}{2}\big(\bar{V}(t,l,z)-\bar{V}(t,i,z)\big)\right)-1\right]\nonumber\\
&+\sup_{\pi\in{\cal U}}H\left(\pi;i,z,(\bar{V}(t,i,z^j);\ j=0,1,\ldots,N)\right)
\end{align}
with terminal condition $\bar{V}(T,i,z)=0$ for all $(i,z)\in\Zx_+\times{\cal S}$. In the above equation, the function $H$ is defined by, for $(\pi,i,z)\in U\times\Zx_+\times{\cal S}$,
\begin{align}\label{eq:H}
H(\pi;i,z,\bar{f}(z)):=&-\frac{2}{\theta}\sum_{j=1}^N\left[\exp\left(-\frac{\theta}{2}(f({z}^j)-f(z))\right)-1\right](1-z_j)(1-\pi_j)^{-\frac{\theta}{2}}\lambda_j(i,z)\nonumber\\
&+r(i)+\pi^{\top}(\mu(i)-r(i)e_N)-\frac{1}{2}\left(1+\frac{\theta}{2}\right)\left\|\sigma(i)^{\top}\pi\right\|^2\nonumber\\
&+\sum_{j=1}^N\left[\frac{2}{\theta}+\pi_j-\frac{2}{\theta}(1-\pi_j)^{-\frac{\theta}{2}}\right](1-z_j)\lambda_j(i,z).
\end{align}
Here $\bar{f}(z)=(f(z^j);\ j=0,1,\ldots,N)$ for any measurable function $f(z)$. Above, we used the notation ${z}^j:=(z_1,\ldots,z_{j-1},1-z_j,z_{j+1},\ldots,z_N)$ for $z\in{\cal S}$.}

Eq.~\eqref{eq:dpe2} is in fact a recursive system of DPEs. We consider the following Cole-Hopf transform of the solution given by
\begin{align}\label{eq:exp-trnas}
\varphi(t,i,z):=\exp\left(-\frac{\theta}{2}\bar{V}(t,i,z)\right),\qquad (t,i,z)\in[0,T]\times\Zx_+\times{\cal S}.
\end{align}
Then $\frac{\partial \varphi(t,i,z)}{\partial t}=-\frac{\theta}{2}\varphi(t,i,z)\frac{\partial \bar{V}(t,i,z)}{\partial t}$ for $(t,i,z)\in[0,T]\times\Zx_+\times{\cal S}$. Plugging it into Eq.~\eqref{eq:dpe2}, we get that
\begin{align}\label{eq:dpe3}
0=&\frac{\partial \varphi(t,i,z)}{\partial t}+\sum_{l\neq i}q_{il}\left[\varphi(t,l,z)-\varphi(t,i,z)\right]+\inf_{\pi\in U}\tilde{H}\left(\pi;i,z,(\varphi(t,i,z^j);\ j=0,1,\ldots,N)\right)
\end{align}
with terminal condition $\varphi(T,i,z)=1$ for all $(i,z)\in\Zx_+\times{\cal S}$. In the above equation, the function $\tilde{H}$ is defined by
\begin{align}\label{eq:H}
\tilde{H}(\pi;i,z,\bar{f}(z))
:=&\Bigg\{-\frac{\theta}{2}r(i)-\frac{\theta}{2}\pi^{\top}(\mu(i)-r(i)e_N)+\frac{\theta}{4}\left(1+\frac{\theta}{2}\right)\left\|\sigma(i)^{\top}\pi\right\|^2
\\
&+\sum_{j=1}^N\left(-1-\frac{\theta}{2}\pi_j\right)(1-z_j)\lambda_j(i,z)\Bigg\}f(z)+\sum_{j=1}^Nf(z^j)(1-z_j)(1-\pi_j)^{-\frac{\theta}{2}}\lambda_j(i,z).\nonumber
\end{align}

\section{Main Results and Verification Theorems}\label{sec:mainres}

We analyze the existence of global solutions of the recursive system of DPEs \eqref{eq:dpe3} in a two-step procedure. Firstly, we investigate the existence and uniqueness of classical solutions of Eq.~\eqref{eq:dpe3} as a dynamical system when the Markov chain $Y$ takes values in the finite state space. Secondly, we proceed to study the countably infinite state case using approximation arguments.

Let us introduce some notations which will be used frequently in this section. Let $n\in\Zx_+$. For $x\in\mathbb{R}^n$, we write $x=(x_1,...,x_n)^{\top}$. For any $x,y\in\R^n$, we write $x\leq y$ if $x_i\leq y_i$ for all $i=1,\ldots,n$, while write $x<y$ if $x\leq y$ and there exists some $i\in\{1,\ldots,n\}$ such that $x_i<y_i$. In particular, $x\ll y$ if $x_i<y_i$ for all $i=1,...,n$. Recall that $e_{N}$ denotes the $N$-dimensional column vector whose all entries are ones. For the general default state $z\in{\cal S}$, we here introduce a general default state representation $z=0^{j_1,\ldots,j_k}$ for indices $j_1\neq\cdots\neq j_k$ belonging to $\{1,\ldots,N\}$, and $k\in\{0,1,\ldots,N\}$. Such a vector $z$ is obtained by flipping the entries $j_1,\ldots,j_k$ of the zero vector to one, i.e. $z_{j_1}=\cdots=z_{j_k}=1$, and $z_{j}=0$ for $j\notin\{j_1,\ldots,j_k\}$ (if $k=0$, we set $z=0^{j_1,\ldots,j_k}=0$). Clearly $0^{j_1,\ldots,j_{N}}=e_N^{\top}$.

\subsection{Finite State Case of Regime-Switching Process}\label{sec:finite-states}
In this section, we study the case where the regime-switching process $Y$ is defined on a finite state space given by $D_n=\{1,\ldots,n\}$. Here $n\in\Zx_+$ is a fixed number. The corresponding $Q$-matrix of the Markov chain $Y$ is given by $Q_n=(q_{ij})_{i,j\in D_n}$ satisfying $\sum_{j\in D_n}q_{ij}=0$ for $i\in D_n$ and $q_{ij}\geq0$ when $i\neq j$. It is worth noting that $q_{ij}$, $i,j\in D_n$ here may be different from what is given in Subsection~\ref{sub:RSP}. With slight abuse of notation, we still use $q_{ij}$ here only for convenience.

Let $\varphi(t,z):=(\varphi(t,i,z);\ i=1,\ldots,n)^{\top}$ be a column vector of the solution for $(t,z)\in[0,T]\times{\cal S}$. Then, we can rewrite Eq.~\eqref{eq:dpe3} as the following dynamical system:
\begin{align}\label{eq:hjbeqn}
\left\{
\begin{aligned}
\frac{\partial \varphi(t,z)}{\partial t}+\big(Q_n+{\rm diag}(\nu(z))\big)\varphi(t,z)+G(t,\varphi(t,z),z)=&0,\quad t\in[0,T)\times{\cal S};\\
\varphi(T,z)=&e_n,\quad \text{for all }z\in{\cal S}.
\end{aligned}
\right.
\end{align}
Here, the vector of function $G(t,x,z)=(G_i(t,x,z);\ i=1,\ldots,n)^{\top}$ is given by, for each $i\in D_n$ and $(t,x,z)\in[0,T]\times\R^n\times{\cal S}$,
\begin{align}
G_i(t,x,z)=&\inf_{\pi\in U}\Bigg\{\sum_{j=1}^N\varphi(t,i,z^j)(1-z_j)(1-\pi_j)^{-\frac\theta2}\lambda_j(i,z)\\
&+\bigg[\frac\theta4(1+\frac\theta2)\left\|\sigma(i)^{\top}\pi\right\|^2-\frac\theta2\pi^\top(\mu(i)-r(i)e_N)-\frac{\theta}{2}\sum_{j=1}^N\pi_j(1-z_j)\lambda_j(i,z)\bigg]x_i\Bigg\}.\nonumber
\end{align}
The vector of coefficient $\nu(z)=(\nu_i(z);\ i=1,\ldots,n)^{\top}$ for $z\in{\cal S}$ is given by, for each $i\in D_n$,
\begin{align}\label{eq:nuz}
\nu_i(z)=-\frac{\theta}{2}r(i)-\sum_{j=1}^N(1-z_j)\lambda_j(i,z).
\end{align}

Recall the recursive system given by \eqref{eq:hjbeqn} in terms of default states $z=0^{j_1,\ldots,j_k}\in{\cal S}$ (where $k=0,1,\ldots,N$). The solvability can in fact be analyzed in the recursive form on default states. Therefore, our strategy for analyzing the system is based on a recursive procedure, starting from the default state $z=e_N^{\top}$ (i.e., all stocks have defaulted) and proceeding backward to the default state $z=0$ (i.e., all stocks are alive).
\begin{itemize}
  \item[(i)] $k=N$ (i.e., all stocks have defaulted). In this default state, it is clear that the investor will not invest in stocks and hence the optimal fraction strategy in stocks for this case is given by $\pi_1^*=\cdots=\pi_N^*=0$ by virtue of Definition~\ref{def:add-con}. Let $\varphi(t,e_N^{\top})=(\varphi(t,i,e_N^{\top});\ i=1,\ldots,n)^{\top}$. As a consequence, the dynamical system \eqref{eq:hjbeqn} can be written as
      \begin{align}\label{eq:hjben}
\left\{
\begin{aligned}
\frac d{dt}\varphi(t,e_N^{\top})=&-A^{(N)}\varphi(t,e_N^{\top}),\quad\text{ in }[0,T);\\
\varphi(T,e_N^{\top})=&e_n.
\end{aligned}
\right.
\end{align}
The matrix of coefficients $A^{(N)}:=Q_n+{\rm diag}(\nu(e_N^{\top}))$.
\end{itemize}
In order to establish the unique positive solution to the above dynamical system \eqref{eq:hjben}, we need the following auxiliary result.
\begin{lemma}\label{lem:sol-hjben2}
Let $g(t)=(g_i(t);\ i=1,\ldots,n)^{\top}$ satisfy the following dynamical system:
\begin{align*}
\left\{
\begin{aligned}
\frac d{dt}g(t)=&Bg(t)\quad\text{ in }(0,T];\\
g(0)=&\xi.
\end{aligned}
\right.
\end{align*}
If $B=(b_{ij})_{n\times n}$ satisfies $b_{ij}\geq 0$ for $i\neq j$ and $\xi\gg0$,
then we have $g(t)\gg0$ for all $t\in[0,T]$.
\end{lemma}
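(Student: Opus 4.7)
The plan is to exploit the Metzler (essentially nonnegative) structure of $B$ via a standard shift-and-exponentiate trick. Choose $\alpha>0$ large enough that $\tilde B:=B+\alpha I$ has all entries nonnegative; for instance, $\alpha:=\max_{1\le i\le n}|b_{ii}|$ works, since the off-diagonal entries of $B$ are already nonnegative by assumption. Defining $h(t):=e^{\alpha t}g(t)$, one computes $\dot h(t)=(B+\alpha I)h(t)=\tilde B h(t)$ with $h(0)=\xi$, so $h(t)=e^{\tilde B t}\xi$. Since $g(t)=e^{-\alpha t}h(t)$, it suffices to establish strict positivity of $h$.

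The key step is to read off strict positivity of $h(t)$ directly from the power-series expansion of the matrix exponential. Writing
\begin{align*}
h(t)=e^{\tilde B t}\xi=\xi+\sum_{k=1}^{\infty}\frac{t^{k}}{k!}\,\tilde B^{k}\xi,
\end{align*}
each $\tilde B^{k}\xi$ is the product of the entry-wise nonnegative matrix $\tilde B$ (iterated) with the entry-wise nonnegative vector $\xi$, hence entry-wise nonnegative. The tail of the series therefore only adds nonnegative contributions to the leading term $\xi$, so $h(t)\ge\xi$ component-wise for all $t\ge 0$. Since $\xi\gg 0$ by assumption, this yields $h(t)\gg 0$, and consequently $g(t)=e^{-\alpha t}h(t)\ge e^{-\alpha t}\xi\gg 0$ on $[0,T]$.

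I do not anticipate a genuine obstacle here: this is a standard positivity result for linear ODEs driven by Metzler matrices, and the shift trick reduces it to a transparent matrix-exponential series argument. The only minor subtlety is preserving the \emph{strict} inequality $\gg$ rather than the weak inequality $\ge$; this is why the argument tracks the explicit lower bound $e^{-\alpha t}\xi$ instead of merely appealing to nonnegativity of $e^{\tilde B t}$. An alternative route via a "first-touching-zero" contradiction is also possible, but it is less clean because at such a first hitting time one only obtains $\dot g_{i}(t^{*})\ge 0$ rather than a strict inequality, forcing a further Perron-Frobenius style argument; the exponential-series approach bypasses this entirely.
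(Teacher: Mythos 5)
Your proof is correct, and it takes a genuinely different route from the paper's. The paper proves this lemma by defining $f(x)=Bx$, checking that $f$ satisfies the type $K$ (Kamke/cooperative) condition because $b_{ij}\ge 0$ for $i\ne j$, and then invoking Proposition~1.1 of Chapter~3 in Smith's monotone dynamical systems monograph to conclude that the flow preserves the strict vector order, so $g(t)\gg 0$. You instead work directly with the closed-form solution $g(t)=e^{Bt}\xi$: shifting by $\alpha I$ to make $\tilde B=B+\alpha I$ entrywise nonnegative, expanding $e^{\tilde Bt}\xi$ as a power series whose every summand beyond $\xi$ is componentwise nonnegative, and reading off the explicit lower bound $g(t)\ge e^{-\alpha t}\xi\gg 0$. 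This is more elementary and fully self-contained (it needs only linearity, the Metzler structure of $B$, and nonnegativity of products of nonnegative matrices), whereas the paper's argument appeals to an abstract comparison result but has the advantage of generalizing unchanged to the nonlinear cooperative systems that appear later (indeed, Lemma~4.3 in the paper reuses the type-$K$ machinery for the nonlinear truncated system, where your exponential-series trick would not apply). Your closing observation about the first-touching-time argument is also apt: at such a time one only gets $\dot g_i(t^*)\ge 0$, which is not by itself a contradiction, and this is precisely the subtlety that the explicit exponential bound (or, in the paper's later comparison lemma, a $1/p$-perturbation) is designed to circumvent.
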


\noindent{\it Proof.}\quad Define $f(x)=Bx$ for $x\in\R^n$. By virtue of Proposition 1.1 of Chapter 3 in \cite{smith08}, it suffices to verify that $f:\R^n\to\R^n$ is of type $K$, i.e., for any $x,y\in\R^n$ satisfying $x\leq y$ and $x_i=y_i$ {for some $i=1,\ldots,n$}, then it holds that $f_i(x)\leq f_i(y)$. Notice that $b_{ij}\geq0$ for all $i\neq j$. Then, we have that
\begin{align}\label{eq:111}
f_i(x)&=(Bx)_i=\sum_{j=1}^nb_{ij}x_j=b_{ii}x_i+\sum_{j=1,j\neq i}^nb_{ij}x_j\nonumber\\
&=b_{ii}y_i+\sum_{j=1,j\neq i}^nb_{ij}x_j
\leq b_{ii}y_i+\sum_{j=1,j\neq i}^nb_{ij}y_j=f_i(y),
\end{align}
and hence $f$ is of type $K$. Thus, we complete the proof of the lemma. \hfill$\Box$\\

The next result is consequent on the previous lemma.
\begin{lemma}\label{lem:sol-hjben}
The dynamical system \eqref{eq:hjben} admits a unique solution which is given by
\begin{align}\label{eq:varphien}
\varphi(t,e_N^{\top})=  e^{A^{(N)}(T-t)}e_n=\sum_{i=0}^{\infty}\frac{(A^{(N)})^i(T-t)^i}{i!}e_n,\quad t\in[0,T],
\end{align}
where the $n\times n$-dimensional matrix $A^{(N)}= Q_n+{\rm diag}(\nu(e_N^{\top}))=Q_n-\frac{\theta}{2}{\rm diag}(r)$ with $r=(r(i);\ i=1,\ldots,n)^{\top}$. Moreover, it holds that $\varphi(t,e_N^{\top})\gg 0$ for all $t\in[0,T]$.
\end{lemma}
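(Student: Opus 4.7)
The plan is to dispatch the existence and uniqueness by standard linear ODE theory and then to reduce the strict positivity assertion to Lemma~\ref{lem:sol-hjben2} via a time reversal.

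First, I would observe that when $z=e_N^\top$ all obligors have defaulted, so $\sum_{j=1}^N(1-z_j)\lambda_j(i,e_N^\top)=0$ and the formula \eqref{eq:nuz} collapses to $\nu_i(e_N^\top)=-\frac{\theta}{2}r(i)$. Consequently $A^{(N)}=Q_n-\frac{\theta}{2}\,\mathrm{diag}(r)$ is a constant $n\times n$ matrix, and \eqref{eq:hjben} is a linear autonomous ODE with prescribed terminal value $e_n$. Existence and uniqueness then follow from the elementary theory of matrix exponentials, and the solution is immediately $\varphi(t,e_N^\top)=e^{A^{(N)}(T-t)}e_n$; the series representation in \eqref{eq:varphien} is just the Taylor expansion of $e^{A^{(N)}(T-t)}$.

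For the strict positivity $\varphi(t,e_N^\top)\gg 0$ I would introduce the time-reversed function $g(s):=\varphi(T-s,e_N^\top)$ for $s\in[0,T]$. Differentiation and the equation in \eqref{eq:hjben} give
\begin{align*}
\frac{d}{ds}g(s)=-\frac{d}{dt}\varphi(t,e_N^\top)\Big|_{t=T-s}=A^{(N)}g(s),\qquad g(0)=e_n.
\end{align*}
This brings us into the format required by Lemma~\ref{lem:sol-hjben2} with $B=A^{(N)}$ and $\xi=e_n$. The initial vector is $e_n\gg 0$, and the off-diagonal entries of $A^{(N)}$ coincide with those of $Q_n$ (because $\mathrm{diag}(\nu(e_N^\top))$ affects only the diagonal), which are non-negative by the defining property of a $Q$-matrix. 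Lemma~\ref{lem:sol-hjben2} then yields $g(s)\gg 0$ on $[0,T]$, and reversing the time parametrization gives the claim.

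The proof is essentially a corollary of the preceding lemma, so there is no serious obstacle; the only point requiring care is the sign of the drift. A direct application of Lemma~\ref{lem:sol-hjben2} to \eqref{eq:hjben} would involve the matrix $-A^{(N)}$, whose off-diagonal entries $-q_{ij}$ are non-positive and therefore fail the type-$K$ hypothesis. Running the time backwards flips this sign and restores the structure needed to invoke the monotone dynamical systems argument from Smith~\cite{smith08}.
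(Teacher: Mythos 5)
Your proof is correct and follows essentially the same route as the paper: existence and uniqueness come from the matrix-exponential representation, and strict positivity follows by applying Lemma~\ref{lem:sol-hjben2} after reversing time, using that the off-diagonal entries of $A^{(N)}$ coincide with those of $Q_n$ and are therefore nonnegative, with initial vector $e_n\gg 0$. Your explicit time-reversal step and the remark about the sign of the drift merely spell out what the paper's proof leaves implicit.
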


\noindent{\it Proof.}\quad The representation of the solution $\varphi(t,e_N^{\top})$ given by \eqref{eq:varphien} is obvious. Note that $e_n\gg0$ and $q_{ij}\geq0$ for all $i\neq j$ as $Q_n=(q_{ij})_{n\times n}$ is the generator of the Markov chain. Then in order to prove $\varphi(t,e_N^{\top})\gg0$ for all $t\in[0,T]$, using Lemma~\ref{lem:sol-hjben2}, it suffices to verify $[A^{(N)}]_{ij}\geq0$ for all $i\neq j$. However $[A^{(N)}]_{ij}=q_{ij}$ for all $i\neq j$ and the condition given in Lemma~\ref{lem:sol-hjben2} is therefore verified which implies that $\varphi(t,e_N^{\top})\gg0$ for all $t\in[0,T]$. \hfill$\Box$\\

We next consider the general default case with $z=0^{j_1,\ldots,j_{k}}$ for $0\leq k\leq N-1$, i.e. the stocks $j_1,\ldots,j_{k}$ have defaulted but the stocks $\{j_{k+1},\ldots,j_N\}:=\{1,\ldots,N\}\setminus\{j_1,\ldots,j_k\}$ remain alive. Then we have
\begin{itemize}
  \item[(ii)] Because the stocks $j_1,\ldots,j_k$ have defaulted, the optimal fraction strategies for the stocks $j_1,\ldots,j_{k}$ are given by $\pi_j^{(k,*)}=0$ for $j\in\{j_1,\ldots,j_{k}\}$ by virtue of Definition~\ref{def:add-con}. Let $\varphi^{(k)}(t)=(\varphi(t,i,0^{j_1,\ldots,j_{k}});\ i=1,\ldots,n)^{\top}$ and $\lambda^{(k)}_j(i)=\lambda_j(i,0^{j_1,\ldots,j_{k}})$ for $j\notin\{j_1,\ldots,j_k\}$ and $i=1,\ldots,n$. Then, the corresponding DPE \eqref{eq:hjbeqn} to this default case is given by
       \begin{align}\label{eq:hjbn-1}
\left\{
\begin{aligned}
\frac d{dt}\varphi^{(k)}(t)=&-A^{(k)}\varphi^{(k)}(t)-G^{(k)}(t,\varphi^{(k)}(t)),\quad\text{ in }[0,T);\\
\varphi^{(k)}(T)=&e_n.
\end{aligned}
\right.
\end{align}
Here, the $n\times n$-dimensional matrix $A^{(k)}$ is given by
\begin{align}\label{eq:An-1}
A^{(k)}={\rm diag}\left[\left(-\frac{\theta}{2} r(i)-\sum_{j\notin\{j_1,\ldots,j_{k}\}}\lambda_{j}^{(k)}(i);\ i=1,\ldots,n\right)\right]+Q_n.
\end{align}
The coefficient $G^{(k)}(t,x)=(G^{(k)}_i(t,x);\ i=1,\ldots,n)^{\top}$ for $(t,x)\in[0,T]\times\R^{n}$ is given by, for $i\in D_n$,
\begin{align}\label{eq:Gin-1}
G^{(k)}_i(t,x):=&\inf_{\pi^{(k)}\in U^{(k)}}\left\{\sum_{j\notin\{j_1,\ldots,j_k\}} \varphi^{(k+1),j}(t,i)\big(1-\pi_{j}^{(k)}\big)^{-\frac{\theta}{2}}\lambda_{j}^{(k)}(i)+H^{(k)}(\pi^{(k)};i)x_i\right\}.
\end{align}
where, for $(\pi^{(k)},i)\in U^{(k)}\times D_n$, the function $H^{(k)}$ is given by
\begin{align}\label{eq:Hk}
H^{(k)}(\pi^{(k)};i):=&\frac{\theta}{4}\big(1+\frac{\theta}{2}\big)\left\|\sigma^{(k)}(i)^{\top}\pi^{(k)}\right\|^2
-\frac{\theta}{2}(\pi^{(k)})^{\top}\big(\mu^{(k)}(i)-r(i)e_{N-k}\big)\nonumber\\
&-\frac{\theta}{2}\sum_{j\notin\{j_1,\ldots,j_k\}}\pi_{j}^{(k)}\lambda_{j}^{(k)}(i).
\end{align}
The policy space of this state is $U^{(k)}=(-\infty,1)^{N-k}$, and $\varphi^{(k+1),j}(t,i):=\varphi(t,i,0^{j_1,\ldots,j_k,j})$ for $j\notin\{j_1,\ldots,j_k\}$ corresponds to the $i$-th element of the positive solution vector of Eq.~\eqref{eq:hjbeqn} at the default state $z=0^{j_1,\ldots,j_k,j}$.
Here, for each $i=1,\ldots,n$, we have also used notations: $\pi^{(k)}=(\pi_j^{(k)};\ j\notin\{j_1,\ldots,j_k\})^{\top}$, $\theta^{(k)}(i)=(\theta_j(i);\ j\notin\{j_1,\ldots,j_k\})^{\top}$, $\sigma^{(k)}(i)=(\sigma_{j\kappa}(i);\ j\notin\{j_1,\ldots,j_k\},\kappa\in\{1,\ldots,d\})$, and $\mu^{(k)}(i)=(\mu_j(i);\ j\notin\{j_1,\ldots,j_k\})^{\top}$.
\end{itemize}

From the expression of $G_i^{(k)}(t,x)$ given by \eqref{eq:Gin-1}, it can be seen that the solution $\varphi^{(k)}(t)$ on $t\in[0,T]$ of DPE \eqref{eq:hjbeqn} at the default state $z=0^{j_1,\ldots,j_k}$ in fact depends on the solution $\varphi^{(k+1),j}(t)$ on $t\in[0,T]$ of DPE~\eqref{eq:hjbeqn} at the default state
$z=0^{j_1,\ldots,j_k,j}$ for $j\notin\{j_1,\ldots,j_k\}$. In particular when $k=N-1$, the solution $\varphi^{(k+1),j}(t)=\varphi(t,e_N^{\top})\gg0$ corresponds to the solution to \eqref{eq:hjbeqn} at the default state $z=e_N$ (i.e., $k=N$), which has been obtained by Lemma~\ref{lem:sol-hjben}.
This suggests us to solve DPE~\eqref{eq:hjbeqn} backward recursively in terms of default states $z=0^{j_1,\ldots,j_k}$. Thus, in order to study the existence and uniqueness of a positive (classical) solution to the dynamical system \eqref{eq:hjbn-1}, we first assume that \eqref{eq:hjbeqn} admits a positive unique (classical) solution $\varphi^{(k+1),j}(t)$ on $t\in[0,T]$ for $j\notin\{j_1,\ldots,j_k\}$.

We can first obtain an estimate on $G^{(k)}(t,x)$, which is presented in the following lemma.
\begin{lemma}\label{lem:Gkesti}
For each $k=0,1,\ldots,N-1$, let us assume that DPE~\eqref{eq:hjbeqn} admits a positive unique (classical) solution $\varphi^{(k+1),j}(t)$ on $t\in[0,T]$ for $j\notin\{j_1,\ldots,j_k\}$. Then, for any $x,y\in\R^n$ satisfying $x,y\geq\varepsilon e_n$ with $\varepsilon>0$, there exists a positive constant $C=C(\varepsilon)$ which only depends on $\varepsilon>0$ such that
\begin{align}\label{eq:Gkesti}
\left\|G^{(k)}(t,x)-G^{(k)}(t,y)\right\|\leq C\left\|x-y\right\|.
\end{align}
Here $\|\cdot\|$ denotes the Euclidian norm.
\end{lemma}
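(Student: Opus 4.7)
The plan is to exploit the fact that, for each state $i \in D_n$, the coordinate $G_i^{(k)}(t,x)$ depends on $x$ only through its $i$-th entry $x_i$, and admits the structure
$$G_i^{(k)}(t,x) \;=\; \inf_{\pi^{(k)} \in U^{(k)}} \bigl[ A_i(\pi^{(k)}) + B_i(\pi^{(k)})\, x_i \bigr],$$
where $A_i(\pi^{(k)}) := \sum_{j\notin\{j_1,\ldots,j_k\}} \varphi^{(k+1),j}(t,i)(1-\pi_j^{(k)})^{-\theta/2}\lambda_j^{(k)}(i)$ is nonnegative under the standing hypothesis $\varphi^{(k+1),j}(t,i)>0$, and $B_i(\pi^{(k)}) := H^{(k)}(\pi^{(k)};i)$ is a strictly convex quadratic in $\pi^{(k)}$ by positive definiteness of $\sigma^{(k)}(i)\sigma^{(k)}(i)^\top$. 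Thus $G_i^{(k)}(t,\cdot)$ is concave as an infimum of affine functions of $x_i$. First I would check that the infimum is attained: the map $\pi^{(k)} \mapsto A_i(\pi^{(k)}) + B_i(\pi^{(k)}) x_i$ is convex and coercive on the open set $U^{(k)}=(-\infty,1)^{N-k}$ (since $A_i \to +\infty$ as any $\pi_j^{(k)} \to 1^-$ and $B_i \to +\infty$ as $\|\pi^{(k)}\|\to\infty$), so a minimizer $\pi^\ast(t,x_i,i)$ exists in $U^{(k)}$.

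By the standard infimum comparison (swapping the minimizer for $x_i$ with that for $y_i$ and vice versa), for any $x_i,y_i\geq\varepsilon$ one obtains
$$\bigl|G_i^{(k)}(t,x) - G_i^{(k)}(t,y)\bigr| \;\leq\; \max\bigl\{\,|B_i(\pi^\ast(t,x_i,i))|,\ |B_i(\pi^\ast(t,y_i,i))|\,\bigr\}\cdot|x_i-y_i|,$$
so the whole argument reduces to bounding $|B_i(\pi^\ast)|$ uniformly for $x_i\geq\varepsilon$, by a constant depending only on $\varepsilon$ and on the finite data. The key estimate would be obtained by testing optimality against the trivial point $\pi^{(k)}=0$: since $B_i(0)=0$ and $A_i(0)=\sum_j \varphi^{(k+1),j}(t,i)\lambda_j^{(k)}(i)=:C_0(i)$, the inequality $A_i(\pi^\ast)+B_i(\pi^\ast)x_i \leq C_0(i)$ combined with $A_i\geq 0$ yields $B_i(\pi^\ast) \leq C_0(i)/\varepsilon$. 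The matching lower bound is free: the strictly convex quadratic $B_i$ has a finite global minimum, say $-C_{B,i}$, so $B_i(\pi^\ast)\geq -C_{B,i}$ automatically. Combining these two estimates, taking the maximum over the finitely many states $i\in D_n$, and assembling the coordinatewise Lipschitz inequalities delivers the asserted constant $C=C(\varepsilon)$.

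The one delicate point is the upper bound $B_i(\pi^\ast)\leq C_0(i)/\varepsilon$, which is nontrivial because $U^{(k)}$ is unbounded and open: a priori the minimizer $\pi^\ast(t,x_i,i)$ could escape either to the boundary (some coordinate $\pi_j^\ast\to 1^-$) or to infinity as $x_i$ grows. Both possibilities are ruled out by the test-point argument, since the uniform bound $F_i(\pi^\ast,x_i)\leq C_0(i)$ prevents simultaneous blow-up of $A_i$ near $\partial U^{(k)}$ and of the quadratic term $B_i(\pi^{(k)}) x_i$ at $\|\pi^{(k)}\|=\infty$. The strict positivity $x_i\geq\varepsilon>0$ enters at exactly this point, converting $B_i(\pi^\ast)x_i\leq C_0(i)$ into a uniform control on $B_i(\pi^\ast)$, which explains why the Lipschitz constant must genuinely depend on $\varepsilon$.
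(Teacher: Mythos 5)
Your proof is correct and uses essentially the same argument as the paper: both hinge on testing the infimum at $\pi^{(k)}=0$ to obtain the upper bound $C_0(i)=\sum_j\varphi^{(k+1),j}(t,i)\lambda_j^{(k)}(i)$, and then combine nonnegativity of the $A_i$-term, the lower bound $x_i\geq\varepsilon$, and the coercive quadratic structure of $H^{(k)}$ to control the effective Lipschitz slope. The only cosmetic difference is that the paper expresses this by localizing the infimum to a compact ball $\{\|\pi^{(k)}\|\leq C_2(\varepsilon)\}$ and bounding $H^{(k)}$ there, whereas you bound $B_i(\pi^\ast)$ directly at the minimizer; these are equivalent ways of arriving at the same $\varepsilon$-dependent constant.
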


\noindent{\it Proof.}\quad  It suffices to prove that, for each $i=1,\ldots,n$, $|G^{(k)}_i(t,x)-G^{(k)}_i(t,y)|\leq C(\varepsilon)\|x-y\|$ for any $x,y\in\R^n$ satisfying $x,y\geq\varepsilon e_n$ with $\varepsilon>0$, where $C(\varepsilon)>0$ is independent of time $t$. By the recursive assumption, $\varphi^{(k+1),j}(t)$ on $t\in[0,T]$ is the unique positive (classical) solution to \eqref{eq:hjbeqn} for $j\notin\{j_1,\ldots,j_k\}$. Then, it is continuous on $[0,T]$ which implies the existence of a constant $C_0>0$ independent of $t$ such that $\sup_{t\in[0,T]}\|\varphi^{(k+1),j}(t)\|\leq C_0$ for $j\notin\{j_1,\ldots,j_k\}$. Thus, by \eqref{eq:Gin-1}, and thanks to $H^{(k)}(0;i)=0$ for all $i\in D_n$ using \eqref{eq:Hk}, it follows that, for all $(t,x)\in[0,T]\times\R^n$,
\begin{align}\label{eq:gless}
G^{(k)}_i(t,x)\leq&\left[\sum_{j\notin\{j_1,\ldots,j_k\}} \varphi^{(k+1),j}(t,i)(1-\pi_{j}^{(k)})^{-\frac{\theta}{2}}\lambda_{j}^{(k)}(i)+H^{(k)}(\pi^{(k)};i)x_i\right]\Bigg|_{\pi^{(k)}=0}\nonumber\\
=&\sum_{j\notin\{j_1,\ldots,j_k\}} \varphi^{(k+1),j}(t,i)\lambda_{j}^{(k)}(i)\leq C_0 \sum_{j\notin\{j_1,\ldots,j_k\}}\lambda_{j}^{(k)}(i).
\end{align}
On the other hand, as $\sigma^{(k)}(i)^\top\sigma^{(k)}(i)$ is positive-definite, there exists a positive constant $\delta>0$ such that $\big\|\sigma^{(k)}(i)^{\top}\pi^{(k)}\|^2\geq\delta\|\pi^{(k)}\|^2$ for all $i\in D_n$. Hence, the following estimate holds:
\begin{align}\label{eq:esti1}
&H^{(k)}(\pi^{(k)};i)\geq\frac{\theta}{4}(1+\frac{\theta}{2})\delta\left\|\pi^{(k)}\right\|^2-\frac{\theta}{2}\left(\left\|\mu^{(k)}(i)-r(i)e_{N-k}\right\|+\sum_{j\notin\{j_1,\ldots,j_k\}}
\lambda_{j}^{(k)}(i)\right)\left\|\pi^{(k)}\right\|.
\end{align}
We next take the positive constant defined as
\[
C_1:=2\frac{\left\|\mu^{(k)}(i)-r(i)e_{N-k}\right\|+\sum_{j\notin\{j_1,\ldots,j_k\}}\lambda_j^{(k)}(i)}{(1+\frac\theta2)\delta}.
\]
For all $\pi^{(k)}\in\{\pi^{(k)}\in U^{(k)};\ \|\pi^{(k)}\|\geq C_1\}$, it holds that
\begin{align}\label{eq:large0}
H^{(k)}(\pi^{(k)};i)\geq 0,\qquad i\in D_n.
\end{align}
This yields that, for all $\pi^{(k)}\in\{\pi^{(k)}\in U^{(k)};\ \|\pi^{(k)}\|\geq C_1\}$ and all $x\geq\varepsilon e_n$, we deduce from \eqref{eq:esti1} and \eqref{eq:large0} that
\begin{align*}
&\sum_{j\notin\{j_1,\ldots,j_k\}} \varphi^{(k+1),j}(t,i)(1-\pi_{j}^{(k)})^{-\frac{\theta}{2}}\lambda_{j}^{(k)}(i)+H^{(k)}(\pi^{(k)};i)x_i\geq H^{(k)}(\pi^{(k)};i)x_i\\
&\qquad\geq H^{(k)}(\pi^{(k)};i)\varepsilon\\
&\qquad\geq\varepsilon\left[\frac\theta4(1+\frac\theta2)\delta\left\|\pi^{(k)}\right\|^2-\frac\theta2\left(\left\|\mu^{(k)}(i)-r(i)e_{N-k}\right\|+\sum_{j\notin\{j_1,\ldots,j_k\}}\lambda_{j}^{(k)}(i)\right)
\left\|\pi^{(k)}\right\|\right].
\end{align*}
We shall choose another positive constant depending on $\varepsilon>0$ as
\[
C_2(\varepsilon):=\frac{C_1}2+\sqrt{\frac{C_1^2}4+\frac8{\varepsilon\theta(2+\theta)\delta}C_0\sum_{j\notin\{j_1,\ldots,j_k\}}\lambda_{j}^{(k)}(i)}.
\]
Then, for all $\pi^{(k)}\in\{\pi\in U^{(k)};\ \|\pi\|\geq C_2(\varepsilon)\}$ and all $x\geq\varepsilon e_n$, it holds that
\begin{align}\label{eq:esti002}
&\sum_{j\notin\{j_1,\ldots,j_k\}} \varphi^{(k+1),j}(t,i)(1-\pi_{j}^{(k)})^{-\frac{\theta}{2}}\lambda_{j}^{(k)}(i)+H^{(k)}(\pi^{(k)};i)x_i\geq C_0\sum_{j\notin\{j_1,\ldots,j_k\}}\lambda_{j}^{(k)}(i).
\end{align}
By \eqref{eq:gless}, we have that $G^{(k)}_i(t,x)\leq C_0\sum_{j\notin\{j_1,\ldots,j_k\}}\lambda_{j}^{(k)}(i)$ for all $(t,x)\in[0,T]\times\R^n$. Thus, it follows from \eqref{eq:esti002} that
\begin{align}\label{eq:G2}
G^{(k)}_i(t,x)&=\inf_{\pi^{(k)}\in U^{(k)}}\left\{\sum_{j\notin\{j_1,\ldots,j_k\}} \varphi^{(k+1),j}(t,i)(1-\pi_{j}^{(k)})^{-\frac{\theta}{2}}\lambda_{j}^{(k)}(i)+H^{(k)}(\pi^{(k)};i)x_i\right\}\\
&=\inf_{\substack{\pi^{(k)}\in\{\pi\in U^{(k)}:\\ \|\pi\|\leq C_2(\varepsilon)\}}}\left\{\sum_{j\notin\{j_1,\ldots,j_k\}} \varphi^{(k+1),j}(t,i)(1-\pi_{j}^{(k)})^{-\frac{\theta}{2}}\lambda_{j}^{(k)}(i)+H^{(k)}(\pi^{(k)};i)x_i\right\}.\nonumber
\end{align}
In virtue of \eqref{eq:G2}, it holds that
\begin{align}\label{eq:Gxy}
G^{(k)}_i(t,x)&=\inf_{\substack{\pi^{(k)}\in\{\pi\in U^{(k)}:\\ \|\pi\|\leq C_2(\varepsilon)\}}}\Bigg\{\sum_{j\notin\{j_1,\ldots,j_k\}} \varphi^{(k+1),j}(t,i)(1-\pi_{j}^{(k)})^{-\frac{\theta}{2}}\lambda_{j}^{(k)}(i)\nonumber\\
&\qquad\qquad\qquad\qquad+H^{(k)}(\pi^{(k)};i)y_i+H^{(k)}(\pi^{(k)};i)(x_i-y_i)\Bigg\}\nonumber\\
&\leq\inf_{\substack{\pi^{(k)}\in\{\pi\in U^{(k)}:\\ \|\pi\|\leq C_2(\varepsilon)\}}}\Bigg\{\sum_{j\notin\{j_1,\ldots,j_k\}} \varphi^{(k+1),j}(t,i)(1-\pi_{j}^{(k)})^{-\frac{\theta}{2}}\lambda_{j}^{(k)}(i)\nonumber\\
&\qquad\qquad\qquad\qquad+H^{(k)}(\pi^{(k)};i)y_i\Bigg\}+C(\varepsilon)|x_i-y_i|\nonumber\\
&= G^{(k)}_i(t,y)+C(\varepsilon)|x_i-y_i|.
\end{align}
Here, the finite positive constant $C(\varepsilon)=\max_{i=1,\ldots,n}C^{(i)}(\varepsilon)$, where for $i\in D_n$,
\begin{align}\label{eq:Cepsilon}
C^{(i)}(\varepsilon)&:=\sup_{\substack{\pi^{(k)}\in\{\pi\in U^{(k)}:\\ \|\pi\|\leq C_2(\varepsilon)\}}}H^{(k)}(\pi^{(k)};i).
\end{align}
Note that the constant $C^{(i)}(\varepsilon)$ given above is nonnegative and finite for each $i\in D_n$. By \eqref{eq:Gxy}, we get that
$|G^{(k)}_i(t,x)-G^{(k)}_i(t,y)|\leq C(\varepsilon)\|x-y\|$ for any $x,y\in\R^n$ satisfying $x,y\geq\varepsilon e_n$ with $\varepsilon>0$, which completes the proof of the lemma.
\hfill$\Box$\\

We move on to study the existence and uniqueness of the global (classical) solution to the dynamical system \eqref{eq:hjbn-1}. To this end, we prepare the following comparison results of two types of dynamical systems with the type $K$ condition introduced in Smith~\cite{smith08}:
\begin{lemma}\label{comparison}
Let $g_{\kappa}(t)=(g_{\kappa i}(t);\ i=1,\ldots,n)^{\top}$ with $\kappa=1,2$ satisfy the following dynamical systems on $[0,T]$, respectively
\begin{align*}
\left\{
\begin{aligned}
\frac d{dt}g_1(t)=&f(t,g_1(t))+\tilde{f}(t,g_1(t)),\ \text{ in }(0,T];\\
g_1(0)=&\xi_1,
\end{aligned}
\right.\qquad\qquad
\left\{
\begin{aligned}
\frac d{dt}g_2(t)=&f(t,g_2(t)),\ \text{ in }(0,T];\\
g_2(0)=&\xi_2.
\end{aligned}
\right.
\end{align*}
Here, the functions $f(t,x),\,\tilde{f}(t,x):[0,T]\times\R^n\to\R^n$ are assumed to be Lipschitz continuous w.r.t. $x\in\R^m$ uniformly in $t\in[0,T]$. The function
$f(t,\cdot)$ satisfies the type $K$ condition for each $t\in[0,T]$ (i.e., for any $x,y\in\R^n$ satisfying $x\leq y$ and $x_i=y_i$ for some $i=1,\ldots,n$,
it holds that $f_i(t,x)\leq f_i(t,y)$ for each $t\in[0,T]$). If
$\tilde{f}(t,x)\geq0$ for $(t,x)\in[0,T]\times\R^n$ and $\xi_1\geq\xi_2$, then $g_1(t)\geq g_2(t)$ for all $t\in[0,T]$.
\end{lemma}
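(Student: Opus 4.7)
The plan is to prove the comparison result by a standard perturbation argument that exploits the type $K$ condition on $f$, in the spirit of Proposition 1.1 of Chapter 3 in Smith~\cite{smith08}. The key idea is to first establish a strict inequality for a perturbed system, and then pass to the limit. Global existence and uniqueness of $g_1,g_2$ on $[0,T]$ are guaranteed by the uniform Lipschitz assumption, so we may focus entirely on the order-preserving aspect.

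First, for each $\varepsilon>0$, I will introduce the perturbed system
\begin{align*}
\tfrac{d}{dt}g_1^{\varepsilon}(t)=f(t,g_1^{\varepsilon}(t))+\tilde{f}(t,g_1^{\varepsilon}(t))+\varepsilon e_n,\quad g_1^{\varepsilon}(0)=\xi_1+\varepsilon e_n,
\end{align*}
which again admits a unique solution on $[0,T]$ by Lipschitz continuity, and depends continuously on $\varepsilon$. The claim I would establish first is the strict inequality $g_1^{\varepsilon}(t)\gg g_2(t)$ for all $t\in[0,T]$. At $t=0$ this holds since $\xi_1+\varepsilon e_n\gg\xi_2$. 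Suppose for contradiction the strict inequality fails, and set $t_0:=\inf\{t\in(0,T]:\, g_1^{\varepsilon}(t)\gg g_2(t)\text{ does not hold}\}$. By continuity $t_0>0$, and there exists an index $i\in\{1,\ldots,n\}$ with $g_{1i}^{\varepsilon}(t_0)=g_{2i}(t_0)$ while $g_1^{\varepsilon}(t_0)\geq g_2(t_0)$ in every other component.

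Next, at such a touching point $t_0$, the type $K$ condition on $f(t_0,\cdot)$ gives $f_i(t_0,g_2(t_0))\leq f_i(t_0,g_1^{\varepsilon}(t_0))$, and $\tilde{f}(t_0,g_1^{\varepsilon}(t_0))\geq 0$. Therefore
\begin{align*}
\tfrac{d}{dt}\bigl(g_{1i}^{\varepsilon}-g_{2i}\bigr)(t_0)
&=\bigl[f_i(t_0,g_1^{\varepsilon}(t_0))-f_i(t_0,g_2(t_0))\bigr]+\tilde{f}_i(t_0,g_1^{\varepsilon}(t_0))+\varepsilon\\
&\geq\varepsilon>0.
\end{align*}
On the other hand, by the definition of $t_0$, the function $t\mapsto g_{1i}^{\varepsilon}(t)-g_{2i}(t)$ is strictly positive on $[0,t_0)$ and vanishes at $t_0$, so its left derivative at $t_0$ must be $\leq 0$. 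This contradiction yields $g_1^{\varepsilon}(t)\gg g_2(t)$ for every $t\in[0,T]$ and every $\varepsilon>0$.

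Finally, I would pass to the limit $\varepsilon\downarrow 0$. By the uniform Lipschitz continuity of $f$ and $\tilde{f}$ and Gronwall's inequality, $g_1^{\varepsilon}(t)\to g_1(t)$ uniformly on $[0,T]$, so taking $\varepsilon\to 0$ in $g_1^{\varepsilon}(t)\geq g_2(t)$ preserves the inequality and gives $g_1(t)\geq g_2(t)$ for all $t\in[0,T]$, proving the lemma. The main delicate point is the touching-point argument: one must ensure that the strict inequality cannot be violated at the infimum time $t_0$, and this is exactly what the perturbation $\varepsilon e_n$ is designed to prevent; without the perturbation, one can only obtain a non-strict derivative inequality at a touching point, which is insufficient to rule out the equality.
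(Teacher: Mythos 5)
Your proof is correct and follows essentially the same approach as the paper: introduce an $\varepsilon$-perturbed version of the first system (the paper uses $1/p$ in place of $\varepsilon$), establish the strict inequality $g_1^{\varepsilon}\gg g_2$ by a contradiction at the first touching time $t_0$ using the type $K$ condition, the nonnegativity of $\tilde f$ and the strictly positive forcing $\varepsilon$, and then pass to the limit $\varepsilon\downarrow 0$ via Gronwall's inequality. The touching-point derivative comparison and the reason the perturbation is needed to get a contradiction are both handled exactly as in the paper.
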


\noindent{\it Proof.}\quad For $p>0$, let $g_{1}^{(p)}(t)=(g_{1i}^{(p)}(t);\ i=1,\ldots,n)^{\top}$ be the solution to the following dynamical system given by
\begin{equation}
\left\{
\begin{aligned}
\frac d{dt}g_{1}^{(p)}(t)=&f(t,g_{1}^{(p)}(t))+\tilde{f}(t,g^{(p)}_{1}(t))+\frac{1}{p}e_n^{\top},\ \text{ in }(0,T];\\
g_{1}^{(p)}(0)=&\xi_1+\frac{1}{p}e_n^{\top}.
\end{aligned}
\right.
\end{equation}
Then, for all $t\in(0,T]$, it holds that
\begin{align*}
\|g_{1}^{(p)}(t)-g_1(t)\|\leq&\|g_{1}^{(p)}(0)-g_1(0)\|+\int_0^t\big\|f(s,g_{1}^{(p)}(s))-f(s,g_1(s))\big\|ds\nonumber\\
&+\int_0^t\big\|\tilde{f}(s,g_{1}^{(p)}(s))-\tilde{f}(s,g_1(s))\big\|ds+\frac1p\int_0^t\|e_n\|ds\nonumber\\
\leq&\frac{1+T}p\|e_n\|+(C+\tilde{C})\int_0^t\big\|g_{1}^{(p)}(s)-g_1(s)\big\|ds.
\end{align*}
Here $C>0$ and $\tilde{C}>0$ are Lipschitz constant coefficients for $f(t,x)$ and $\tilde{f}(t,x)$, respectively. The Gronwall's lemma yields that $g_{1}^{(p)}(t)\to g_1(t)$ for all $t\in[0,T]$ as $p\to\infty$. We claim that $g_{1}^{(p)}(t)\gg g_2(t)$ for all $t\in[0,T]$. Suppose that the claim does not hold, the fact that $g_{1}^{(p)}(0)\gg g_2(0)$, and $g_1^{(p)}(t),g_2(t)$ are continuous on $[0,T]$ imply that there exists a $t_0\in(0,T]$ such that $g_{1}^{(p)}(s)\geq g_2(s)$ on $s\in[0,t_0]$ and $g_{1i}^{(p)}(t_0)=g_{2i}(t_0)$ for some $i\in\{1,\ldots,n\}$. Because for $t_0>0$, $g_1^{(p)}(t),g_2(t)$ are differentiable on $(0,T]$, it follows that
\begin{align*}
\frac d{dt}g_{1i}^{(p)}(t)\big|_{t=t_0}=\lim_{\epsilon\to0}\frac{g_{1i}^{(p)}(t_0)-g_{1i}^{(p)}(t_0-\epsilon)}{\epsilon}
\leq\lim_{\epsilon\to0}\frac{g_{2i}(t_0)-g_{2i}(t_0-\epsilon)}{\epsilon}= \frac d{dt}g_{2i}(t)\big|_{t=t_0}.
\end{align*}
On the other hand, as $f(t,\cdot)$ satisfies the type $K$ condition for each $t\in[0,T]$ and $\tilde{f}(t,x)\geq0$ for all $(t,x)\in[0,T]\times\R^n$, for the above $i$, we also have that
\begin{align}
\frac d{dt}g_{1i}^{(p)}(t)\big|_{t=t_0}=&f_i(t_0,g_{1i}^{(p)}(t_0))+\tilde{f}_i(t_0,g_{1}^{(p)}(t_0))+\frac1p\nonumber\\
>&f_i(t_0,g_{1i}^{(p)}(t_0))\geq f_i(t_0,g_2(t_0))=\frac d{dt}g_{2i}(t)\big|_{t=t_0}.
\end{align}
We obtain a contradiction, and hence $g_{1}^{(p)}(t)\gg g_2(t)$ for all $t\in[0,T]$. It therefore holds that $g_1(t)\geq g_2(t)$ for all $t\in[0,T]$ by passing $p$ to infinity. \hfill$\Box$\\

Now we are ready to present the following existence and uniqueness result for the positive (classical) solution of Eq.~\eqref{eq:hjbn-1}.
\begin{theorem}\label{thm:solutionk}
For each $k=0,1,\ldots,N-1$, assume that DPE~\eqref{eq:hjbeqn} admits a positive unique (classical) solution $\varphi^{(k+1),j}(t)$ on $t\in[0,T]$ for $j\notin\{j_1,\ldots,j_k\}$. Then, there exists a unique positive (classical) solution $\varphi^{(k)}(t)$ on $t\in[0,T]$ of \eqref{eq:hjbeqn}
at the default state $z=0^{j_1,\ldots,j_k}$ (i.e., Eq.~\eqref{eq:hjbn-1} admits a unique positive (classical) solution).
\end{theorem}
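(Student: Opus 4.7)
The plan is a truncation-plus-comparison scheme, mirroring the strategy announced in the introduction. Working in forward time $s := T - t$, the difficulty is that the map $\pi_j \mapsto (1-\pi_j)^{-\theta/2}$ in $G^{(k)}$ is only locally Lipschitz and blows up at $\pi_j = 1$, so $G^{(k)}(t,\cdot)$ is Lipschitz only on strictly positive cones (Lemma~\ref{lem:Gkesti}). First I truncate: define $G^{(k),\varepsilon}(t,x) := G^{(k)}(t,\, x \vee \varepsilon e_n)$ with the componentwise maximum, so the argument is forced above $\varepsilon e_n$. By Lemma~\ref{lem:Gkesti}, $G^{(k),\varepsilon}$ is globally Lipschitz in $x$ with constant $C(\varepsilon)$, uniformly in $t$. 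Since $A^{(k)}$ is bounded, Picard--Lindel\"of yields a unique classical solution $\varphi^{(k),\varepsilon}$ on $[0,T]$ of the truncated terminal-value problem
\begin{equation*}
\tfrac{d}{dt}\varphi^{(k),\varepsilon}(t) = -A^{(k)}\varphi^{(k),\varepsilon}(t) - G^{(k),\varepsilon}\bigl(t,\varphi^{(k),\varepsilon}(t)\bigr), \qquad \varphi^{(k),\varepsilon}(T) = e_n.
\end{equation*}

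The crux is a strictly positive lower bound on $\varphi^{(k),\varepsilon}$ that is uniform in $\varepsilon$. The coercivity of $\pi \mapsto H^{(k)}(\pi;i)$ (from positive-definiteness of $\sigma^{(k)}(i)^{\top}\sigma^{(k)}(i)$ as $\|\pi\| \to \infty$, and from the blow-up of $(1-\pi_j)^{-\theta/2}$ at $\pi_j = 1$) makes $\gamma := \max_{i \in D_n} \bigl|\inf_{\pi \in U^{(k)}} H^{(k)}(\pi;i)\bigr|$ finite. Combined with the induction hypothesis $\varphi^{(k+1),j}(t) \gg 0$, which forces $A_i(\pi,t) := \sum_{j \notin \{j_1,\dots,j_k\}} \varphi^{(k+1),j}(t,i)(1-\pi_j)^{-\theta/2}\lambda^{(k)}_j(i) \geq 0$, one obtains the pointwise bound $G^{(k),\varepsilon}_i(t,x) \geq -\gamma\,(x_i \vee \varepsilon)$ for all $x \in \R^n$. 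I then split the forward-time system $\Psi^\varepsilon(s) := \varphi^{(k),\varepsilon}(T-s)$ as $\tfrac{d}{ds}\Psi^\varepsilon = f(s,\Psi^\varepsilon) + \tilde f(s,\Psi^\varepsilon)$ with
\begin{equation*}
f(s,x) := A^{(k)} x - \gamma\, x^+ - \gamma\varepsilon e_n, \qquad \tilde f(s,x) := \gamma\, x^+ + G^{(k),\varepsilon}(T-s,x) + \gamma\varepsilon e_n,
\end{equation*}
where $x^+$ is the componentwise positive part. A short case analysis (on $x_i \geq \varepsilon$, $0 \leq x_i < \varepsilon$, and $x_i < 0$) shows $\tilde f \geq 0$ globally, while $f(s,\cdot)$ is of type $K$ because the off-diagonal entries of $A^{(k)}$ are $q_{ij} \geq 0$ and the modification $-\gamma x^+$ only couples the $i$-th component to itself. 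Lemma~\ref{comparison} then gives $\Psi^\varepsilon(s) \geq \Phi(s)$ on $[0,T]$, where $\Phi$ solves $\Phi' = f(s,\Phi)$ with $\Phi(0) = e_n$. Once $\Phi \geq 0$ is verified, the reference system becomes the linear ODE $\Phi' = (A^{(k)} - \gamma I)\Phi - \gamma\varepsilon e_n$, whose explicit solution is $\Phi(s) = e^{(A^{(k)}-\gamma I)s} e_n - \gamma\varepsilon \int_0^s e^{(A^{(k)}-\gamma I)(s-u)} e_n\, du$. Lemma~\ref{lem:sol-hjben2} applied to $A^{(k)} - \gamma I$ (whose off-diagonal entries are still $q_{ij} \geq 0$) and to the initial vector $e_n \gg 0$ gives $e^{(A^{(k)}-\gamma I)s}e_n \geq c_0 e_n$ on $[0,T]$ for some $c_0 > 0$, so there exists $\varepsilon_0 > 0$ with $\Phi(s) \geq \tfrac{c_0}{2} e_n$ for every $\varepsilon \in (0,\varepsilon_0)$.

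For $\varepsilon < \min(\varepsilon_0,\, c_0/2)$, the bound $\varphi^{(k),\varepsilon}(t) \geq \tfrac{c_0}{2} e_n \gg \varepsilon e_n$ deactivates the truncation, hence $G^{(k),\varepsilon}(t,\varphi^{(k),\varepsilon}(t)) = G^{(k)}(t,\varphi^{(k),\varepsilon}(t))$ and $\varphi^{(k),\varepsilon}$ is a classical strictly positive solution of \eqref{eq:hjbn-1}. For uniqueness, any two classical positive solutions $\varphi_1,\varphi_2$ on $[0,T]$ are continuous and hence share a common componentwise lower bound $\varepsilon' > 0$; on $\{x \geq \varepsilon' e_n\}$ the function $G^{(k)}$ is Lipschitz by Lemma~\ref{lem:Gkesti}, and a standard Gronwall argument applied to $\|\varphi_1(t) - \varphi_2(t)\|$ yields $\varphi_1 \equiv \varphi_2$. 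The main obstacle is exactly the $\varepsilon$-uniform strict positivity of the truncated solution: because $G^{(k)}$ is only locally Lipschitz and can take negative values, a naive linear comparison with $e^{-A^{(k)}(T-t)}e_n$ is insufficient, so the possible negativity of $G^{(k)}$ must first be absorbed into the type-$K$ shift $A^{(k)} \to A^{(k)} - \gamma I$ before Lemma~\ref{comparison} and the monotone dynamical systems machinery of Smith~\cite{smith08} can deliver a lower bound that survives as $\varepsilon \downarrow 0$.
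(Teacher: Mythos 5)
Your proposal follows the paper's overall blueprint: truncate $G^{(k)}$ by $G^{(k),\varepsilon}(t,x)=G^{(k)}(t,x\vee\varepsilon e_n)$, obtain global Lipschitz continuity from Lemma~\ref{lem:Gkesti}, reverse time, split the right-hand side into a type-$K$ piece $f$ and a nonnegative piece $\tilde f$ so that Lemma~\ref{comparison} yields a pointwise lower bound by a reference ODE, show this lower bound is strictly positive and insensitive to $\varepsilon$ so the truncation can be removed for $\varepsilon$ small, and finish with a Gronwall argument on the positive cone for uniqueness. What is genuinely different is the choice of split. The paper estimates $G^{(k)}_i(T-t,x\vee ae_n)\geq -\beta_i\{|x_i|\vee 1\}$ and compares with the nonlinear reference $\psi'_i=(A^{(k)}\psi)_i-\beta_i\{|\psi_i|\vee 1\}$; the claimed strict positivity of $\psi^{(k)}$ on $[0,T]$ is attributed to Lemma~\ref{lem:sol-hjben2}, which is stated only for linear systems, and since this reference carries a fixed additive offset $-\beta_i$ whenever $|\psi_i|\leq 1$, its positivity on a long horizon is in fact not automatic. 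You instead exploit the sharper estimate $G^{(k),\varepsilon}_i(t,x)\geq-\gamma(x_i\vee\varepsilon)\geq-\gamma x_i^+-\gamma\varepsilon$, which turns the reference, on the positive cone, into $\Phi'=(A^{(k)}-\gamma I)\Phi-\gamma\varepsilon e_n$: an $O(\varepsilon)$ affine perturbation of a linear type-$K$ system whose strict positivity on $[0,T]$ is exactly what Lemma~\ref{lem:sol-hjben2} delivers. You pay for this with an $\varepsilon$-dependent reference, but your $c_0$ comes from the unperturbed linear exponential and is therefore $\varepsilon$-free, so the lower bound $\tfrac{c_0}{2}e_n$ survives uniformly for small $\varepsilon$. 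This is a cleaner and more robust route for the single step on which the whole theorem hinges.

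One sentence is written too tersely. You invoke the explicit affine formula for $\Phi$ ``once $\Phi\geq 0$ is verified,'' but $\Phi\geq 0$ is precisely what the formula is meant to establish. The repair is a standard continuation argument that you should spell out: set $s_0:=\inf\{s\in[0,T]:\Phi_i(s)=0 \text{ for some } i\}$; on $[0,s_0)$ one has $\Phi^+=\Phi$, so the formula $\Phi(s)=e^{(A^{(k)}-\gamma I)s}e_n-\gamma\varepsilon\int_0^s e^{(A^{(k)}-\gamma I)(s-u)}e_n\,du$ holds there, and by continuity it forces $\Phi(s_0)\geq\tfrac{c_0}{2}e_n\gg 0$ once $\varepsilon<\varepsilon_0$, contradicting $s_0\leq T$. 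With that line inserted, the proof is complete and, I would argue, actually tighter than the published one on the positivity point.
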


\noindent{\it Proof.}\quad For any constant $a\in(0,1]$, let us consider the truncated dynamical system given by
\begin{align}\label{eq:truneqn}
\left\{
\begin{aligned}
\frac d{dt} \varphi_a^{(k)}(t)+A^{(k)}\varphi_a^{(k)}(t) + G_a^{(k)}(t,\varphi_a^{(k)}(t))=&0,\ \text{ in }[0,T);\\
\varphi^{(k)}_a(T)=&e_n.
\end{aligned}
\right.
\end{align}
Here $\varphi_a^{(k)}(t)=(\varphi_a^{(k)}(t,i);\ i=1,\ldots,n)^{\top}$ is the vector-valued solution and the $n\times n$-dimensional matrix $A^{(k)}$ is given by \eqref{eq:An-1}. The vector-valued function $G_a^{(k)}(t,x)$ is defined as:
\begin{align}\label{eq:Ga}
G_a^{(k)}(t,x) := G^{(k)}(t,x\vee a e_n),\qquad (t,x)\in[0,T]\times\R^n.
\end{align}
Thanks to Lemma~\ref{lem:Gkesti}, there exists a positive constant $C=C(a)$ which only depends on $a>0$ such that, for all $t\in[0,T]$,
\begin{align}\label{eq:Lip-Ga}
\big\|G_a^{(k)}(t,x)-G_a^{(k)}(t,y)\big\|\leq C\|x-y\|,\qquad x,y\in\R^n,
\end{align}
i.e., $G^{(k)}_a(t,x)$ is globally Lipschitz continuous w.r.t. $x\in\R^m$ uniformly in $t\in[0,T]$. By reversing the time, let us consider $\tilde{\varphi}_a^{(k)}(t):=\varphi_a^{(k)}(T-t)$ for $t\in[0,T]$. Then, $\tilde{\varphi}_a^{(k)}(t)$ satisfies the following dynamical system given by
\begin{align}\label{eq:truneq2}
\left\{
\begin{aligned}
\frac{d}{dt}\tilde{\varphi}_a^{(k)}(t)=&A^{(k)}\tilde{\varphi}^{(k)}_a(t)+G^{(k)}_{a}(T-t,\tilde{\varphi}_a^{(k)}(t)),\ \text{ in }(0,T];\\
\tilde{\varphi}_a^{(k)}(0)=&e_n^{\top}.
\end{aligned}
\right.
\end{align}
In virtue of the globally Lipschitz continuity condition \eqref{eq:Lip-Ga}, for each $a\in(0,1]$, it follows that the system~\eqref{eq:truneq2} has a unique (classical) solution $\tilde{\varphi}_a^{(k)}(t)$ on $[0,T]$.
In order to apply Lemma~\ref{comparison}, we rewrite the above system \eqref{eq:truneq2} in the following form:
\begin{align}\label{eq:truneq3}
\left\{
\begin{aligned}
\frac{d}{dt}\tilde{\varphi}_a^{(k)}(t)=&f^{(k)}(\tilde{\varphi}^{(k)}_a(t))+\tilde{f}_a^{(k)}(t,\tilde{\varphi}_a^{(k)}(t)),\ \text{ in }(0,T];\\
\tilde{\varphi}_a^{(k)}(0)=&e_n.
\end{aligned}
\right.
\end{align}
Here, the Lipschitz continuous functions $f^{(k)}(x)=(f_i^{(k)}(x);\ i=1,\ldots,n)^{\top}$ and $\tilde{f}_a^{(k)}(t,x)=(\tilde{f}^{(k)}_{a,i}(t,x);\ i=1,\ldots,n)^{\top}$ on $(t,x)\in[0,T]\times\R^n$ are given respectively by
\begin{align}\label{eq:f}
f_i^{(k)}(x)&=\sum_{j=1}^nq_{ij}x_j-\left(\frac{\theta}{2} r(i)+\sum_{j\notin\{j_1,\ldots,j_{k}\}}h_{j}^{(k)}(i)\right)x_i-\beta_i\{|x_i|\vee1\},\nonumber\\
\tilde{f}_{a,i}^{(k)}(t,x)&=G_a^{(k)}(T-t,x)+\beta_i\{|x_i|\vee1\},\quad i=1,\ldots,n.
\end{align}
The positive constants $\beta_i$ for $i\in D_n$ are given by
\begin{align}\label{eq:betai}
\beta_i=&-\inf_{\pi^{(k)}\in U^{(k)}}H^{(k)}(\pi^{(k)};i),
\end{align}
where, for $i\in D_n$, $H^{(k)}(\pi^{(k)};i)$ is defined by \eqref{eq:Hk}. It is not difficult to see that $\beta_i$ is a nonnegative and finite constant for each $i\in D_n$ using \eqref{eq:Hk}. By the recursive assumption that $\varphi^{(k+1),j}(t)\gg0$ on $[0,T]$ for $j\notin\{j_1,\ldots,j_k\}$, for any $a\in(0,1]$, we have that, for each $i\in D_n$, and all $(t,x)\in[0,T]\times\R^n$,
\begin{equation}\label{eq:Gapositive}
\begin{split}
&G^{(k)}_i(T-t,x\vee ae_n)\\
=&\inf_{\pi^{(k)}\in U^{(k)}}\left\{\sum_{j\notin\{j_1,\ldots,j_k\}}\varphi^{(k+1),j}(T-t,i)(1-\pi_{j}^{(k)})^{-\frac\theta2}\lambda_{j}^{(k)}(i)+H^{(k)}(\pi^{(k)};i)(x_i\vee a)\right\}\\
\geq&\{x_i\vee a\}\inf_{\pi^{(k)}\in U^{(k)}}H^{(k)}(\pi^{(k)};i)\geq-\beta_i\{|x_i|\vee 1\}.
\end{split}
\end{equation}
Thus, from \eqref{eq:f}, it follows that, for all $(t,x)\in[0,T]\times\R^n$,
\begin{align}\label{eq:onftilde}
\tilde{f}_{a,i}^{(k)}(t,x)=G^{(k)}_i(T-t,x\vee ae_n)+\beta_i\{|x_i|\vee1\}\geq0,\quad i\in D_n.
\end{align}

We next verify that the vector-valued function $f^{(k)}(x)=(f_i^{(k)}(x);\ i=1,\ldots,n)^{\top}$ given by \eqref{eq:f} is of type $K$. Namely we need to verify that, for any $x,y\in\R^n$ satisfying $x\leq y$ and $x_{i_0}=y_{i_0}$ for some $i_0=1,\ldots,n$, it holds that $f_{i_0}^{(k)}(x)\leq f_{i_0}^{(k)}(y)$. In fact, by \eqref{eq:f}, we have that, for any $x,y\in\R^n$ satisfying $x\leq y$ and $x_{i_0}=y_{i_0}$ for some $i_0=1,\ldots,n$,
\begin{align}\label{eq:condK}
f_{i_0}^{(k)}(x)&=\sum_{j=1}^nq_{i_0j}x_j-\left(\frac{\theta}{2} r(i_0)+\sum_{j\notin\{j_1,\ldots,j_{k}\}}\lambda_{j}^{(k)}(i_0)\right)x_{i_0}-\beta_i\{|x_{i_0}|\vee1\}\nonumber\\
&=q_{i_0i_0}x_{i_0}-\left(\frac{\theta}{2} r(i_0)+\sum_{j\notin\{j_1,\ldots,j_{k}\}}\lambda_{j}^{(k)}(i_0)\right)x_{i_0}-\beta_{i_0}\{|x_{i_0}|\vee1\}+\sum_{j\neq i_0}q_{i_0j}x_j\nonumber\\
&=q_{i_0i_0}y_{i_0}-\left(\frac{\theta}{2} r(i_0)+\sum_{j\notin\{j_1,\ldots,j_{k}\}}\lambda_{j}^{(k)}(i_0)\right)y_{i_0}-\beta_{i_0}\{|y_{i_0}|\vee1\}+\sum_{j\neq i_0}q_{i_0j}x_j\nonumber\\
&\leq q_{i_0i_0}y_{i_0}-\left(\frac{\theta}{2} r(i_0)+\sum_{j\notin\{j_1,\ldots,j_{k}\}}\lambda_{j}^{(k)}(i_0)\right)y_{i_0}-\beta_{i_0}\{|y_{i_0}|\vee1\}+\sum_{j\neq i_0}q_{i_0j}y_j\nonumber\\
&=f_{i_0}^{(k)}(y),
\end{align}
where we used the fact that for all $j\neq i_0$, $q_{i_0j}\geq0$ as $Q_n=(q_{ij})_{n\times n}$ is the generator of the Markov chain $Y$ and hence $\sum_{j\neq i_0}q_{i_0j}x_j\leq \sum_{j\neq i_0}q_{i_0j}y_j$ for all $x\leq y$. Hence, using Proposition 1.1 of Chapter 3 in Smith \cite{smith08} and Lemma~\ref{lem:sol-hjben2}, we deduce that the following dynamical system
\begin{align}\label{eq:truneq4}
\left\{
\begin{aligned}
\frac{d}{dt}{\psi}^{(k)}(t)=&f^{(k)}({\psi}^{(k)}(t)),\ \text{ in }(0,T];\\
{\psi}^{(k)}(0)=&e_n
\end{aligned}
\right.
\end{align}
admits a unique (classical) solution ${\psi}^{(k)}(t)=(\psi_i^{(k)}(t);\ i=1,\ldots,n)^{\top}$ on $t\in[0,T]$, and moreover it holds that ${\psi}^{(k)}(t)\gg0$ for $t\in[0,T]$. Let us set
\begin{align}\label{eq:epsilonk}
\varepsilon^{(k)}:=\min_{i=1,\ldots,n}\left\{\inf_{t\in[0,T]}\psi_i^{(k)}(t)\right\}.
\end{align}
The continuity of $\psi^{(k)}(t)$ in $t\in[0,T]$ and $\psi^{(k)}(t)\gg0$ for all $t\in[0,T]$ lead to $\varepsilon^{(k)}>0$. On the other hand, it follows from \eqref{eq:onftilde} that
the vector-valued function $f_a^{(k)}(t,x)\geq0$ on $[0,T]\times\R^n$. Because the vector-valued function $f^{(k)}(x)$ is also of type $K$ proved by \eqref{eq:condK}, we can apply Lemma~\ref{comparison} to the dynamical systems \eqref{eq:truneq3} and \eqref{eq:truneq4} and derive that
\begin{align}\label{eq:comparison0}
\tilde{\varphi}_a^{(k)}(t)\geq {\psi}^{(k)}(t)\geq\varepsilon^{(k)}e_n,\qquad \forall\ t\in[0,T],
\end{align}
as $\tilde{\varphi}_a^{(k)}(0)={\psi}^{(k)}(0)=e_n$. Note that the positive constant $\varepsilon^{(k)}$ given by \eqref{eq:epsilonk} above is independent of the constant $a\in(0,1]$. We can therefore choose $a\in(0,\varepsilon^{(k)}\wedge1)$ and it holds that $G_a^{(k)}(T-t,\tilde{\varphi}_a^{(k)}(t))=G^{(k)}(T-t,\tilde{\varphi}_a^{(k)}(t)\vee ae_n)=G^{(k)}(T-t,\tilde{\varphi}_a^{(k)}(t))$ on $[0,T]$. By \eqref{eq:truneq2} with $a\in(0,\varepsilon^{(k)}\wedge1)$, it follows that $\tilde{\varphi}_a^{(k)}(t)\geq\varepsilon^{(k)}e_n$ on $[0,T]$ is the unique (classical) solution to the dynamical system \eqref{eq:hjbn-1} and the proof of the theorem is complete. \hfill$\Box$\\

As an important implication of Theorem~\ref{thm:solutionk}, we present one of our major contributions to the existing literature in the next proposition as the characterization of the optimal strategy $\pi^{(k)}\in{U}^{(k)}$ at the default state $z=0^{j_1,\ldots,j_k}$ where $k=0,1,\ldots,N-1$.
\begin{proposition}\label{coro:optimal-strategy}
For each $k=0,1,\ldots,N-1$, assume that DPE~\eqref{eq:hjbeqn} admits a positive unique (classical) solution $\varphi^{(k+1),j}(t)$ on $t\in[0,T]$ for $j\notin\{j_1,\ldots,j_k\}$. Let $\varphi^{(k)}(t)=(\varphi^{(k)}(t,i);\ i=1,\ldots,n)^{\top}$ be the unique (classical) solution of DPE \eqref{eq:hjbn-1}. Then, there exists a unique optimal feedback strategy $\pi^{(k,*)}=\pi^{(k,*)}(t,i)$ for $(t,i)\in[0,T]\times D_n$ which is given explicitly by
\begin{align}\label{eq:optimal-strategy}
\pi^{(k,*)}=&\pi^{(k,*)}(t,i)\\
=&\argmin_{\pi^{(k)}\in U^{(k)}}\left\{\sum_{j\notin\{j_1,\ldots,j_k\}} \varphi^{(k+1),j}(t,i)\big(1-\pi_{j}^{(k)}\big)^{-\frac{\theta}{2}}\lambda_{j}^{(k)}(i)+H^{(k)}(\pi^{(k)};i)\varphi^{(k)}(t,i)\right\}\nonumber\\
=&\argmin_{\substack{\pi^{(k)}\in\{\pi\in U^{(k)}:\\ \|\pi\|\leq C\}}}\Bigg\{\sum_{j\notin\{j_1,\ldots,j_k\}} \varphi^{(k+1),j}(t,i)\big(1-\pi_{j}^{(k)}\big)^{-\frac{\theta}{2}}\lambda_{j}^{(k)}(i)+H^{(k)}(\pi^{(k)};i)\varphi^{(k)}(t,i)\Bigg\},\nonumber
\end{align}
for some positive constant $C>0$.
\end{proposition}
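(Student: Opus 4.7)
The plan is to read the asserted identity as an inner static minimization problem at each $(t,i)$, with $x_i$ frozen at the value $\varphi^{(k)}(t,i)$ produced by Theorem~\ref{thm:solutionk}. First I would invoke that theorem to record $\varphi^{(k)}(t,i)\ge \varepsilon^{(k)}>0$ uniformly on $[0,T]\times D_n$; this turns the bracket inside the argmin in \eqref{eq:optimal-strategy} into a deterministic real-valued function of $\pi^{(k)}\in U^{(k)}$ whose coefficients are continuous in $t$, and for which the coefficient multiplying the quadratic term is strictly positive. Crucially, the recursive hypothesis gives $\varphi^{(k+1),j}(t,i)>0$ bounded on $[0,T]$, which will be used in both the coercivity and blow-up steps below.

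Next I would re-run, verbatim, the coercivity computation \eqref{eq:esti1}--\eqref{eq:esti002} from the proof of Lemma~\ref{lem:Gkesti} with $\varepsilon=\varepsilon^{(k)}$. This produces a threshold $C:=\max_{i\in D_n}C_2(\varepsilon^{(k)})$ (finite because $D_n$ is finite) such that every $\pi^{(k)}$ with $\|\pi^{(k)}\|>C$ gives an objective value strictly larger than the value at $\pi^{(k)}=0$. Hence the infimum over $U^{(k)}$ coincides with the infimum over the bounded set $\{\pi\in U^{(k)}:\|\pi\|\le C\}$, which is exactly the second equality in \eqref{eq:optimal-strategy}.

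Existence of a minimizer on $\{\pi\in U^{(k)}:\|\pi\|\le C\}$ then follows from a standard continuity-plus-blow-up argument. The objective is continuous on the relative interior, and as any component $\pi_j^{(k)}\uparrow 1$ the factor $(1-\pi_j^{(k)})^{-\theta/2}\to+\infty$ with strictly positive coefficient $\varphi^{(k+1),j}(t,i)\lambda_j^{(k)}(i)$, so the objective diverges along each face $\{\pi_j^{(k)}=1\}$. Consequently the infimum is attained at an interior point. For uniqueness I would invoke strict convexity in $\pi^{(k)}$: the Hessian of the quadratic part $H^{(k)}(\pi^{(k)};i)\varphi^{(k)}(t,i)$ equals $\tfrac{\theta}{2}\bigl(1+\tfrac{\theta}{2}\bigr)\varphi^{(k)}(t,i)\,\sigma^{(k)}(i)\sigma^{(k)}(i)^{\top}$, which is positive definite because $\sigma(i)\sigma(i)^{\top}$ is positive definite (so $\sigma(i)$ has full row rank $N$, whence the row submatrix $\sigma^{(k)}(i)$ has full row rank $N-k$), and each map $\pi_j^{(k)}\mapsto (1-\pi_j^{(k)})^{-\theta/2}$ is strictly convex on $(-\infty,1)$ for $\theta>0$, weighted by the strictly positive coefficient $\varphi^{(k+1),j}(t,i)\lambda_j^{(k)}(i)$. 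The objective is therefore the sum of a strictly convex quadratic and strictly convex univariate terms, so the minimizer is unique.

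The only genuine obstacle is the non-compactness of $U^{(k)}=(-\infty,1)^{N-k}$, which is why neither direct compactness nor a one-sided bound is enough. One needs simultaneously the coercive estimate of Lemma~\ref{lem:Gkesti} (using the uniform positive lower bound $\varepsilon^{(k)}$ on $\varphi^{(k)}$) to cut the problem down to a bounded set, and the boundary blow-up of $(1-\pi_j^{(k)})^{-\theta/2}$ to rule out limit points on $\{\pi_j^{(k)}=1\}$. Once both ingredients are in place, existence, uniqueness and the explicit argmin representation of $\pi^{(k,*)}$ follow at once, and the feedback map $(t,i)\mapsto \pi^{(k,*)}(t,i)$ inherits measurability (in fact continuity in $t$) from continuity of $\varphi^{(k)}(t,i)$ and $\varphi^{(k+1),j}(t,i)$ together with the strict convexity established above.
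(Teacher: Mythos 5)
Your proposal is correct and follows essentially the same route as the paper's proof: it uses the uniform lower bound $\varphi^{(k)}(t,i)\ge\varepsilon^{(k)}>0$ from Theorem~\ref{thm:solutionk}, the coercivity estimate of Lemma~\ref{lem:Gkesti} (i.e.\ Eq.~\eqref{eq:G2}) to reduce the infimum over $U^{(k)}$ to the bounded set $\{\|\pi\|\le C\}$, the blow-up of $(1-\pi_j^{(k)})^{-\theta/2}$ near $\pi_j^{(k)}=1$ to keep the minimizer in the interior, and strict convexity of the objective for uniqueness. The only differences are presentational: the paper packages existence via compactness of the truncated closed ball $\{\pi\in\bar U^{(k)}:\|\pi\|\le C\}$ together with extended-real-valued continuity of $h^{(k)}$ (with $h^{(k)}=+\infty$ on $\bar U^{(k)}\setminus U^{(k)}$), whereas you argue coercivity and boundary blow-up directly; and your remark that any $\|\pi^{(k)}\|>C$ gives a value \emph{strictly} larger than at $\pi^{(k)}=0$ slightly overstates \eqref{eq:esti002}, which only yields a weak inequality, but this is harmless since a weak inequality already suffices to restrict the infimum to the bounded set.
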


\noindent{\it Proof.}\quad Let us first recall Eq.~\eqref{eq:hjbn-1}, i.e.,
\begin{align*}
\left\{
\begin{aligned}
\frac d{dt}\varphi^{(k)}(t)=&-A^{(k)}\varphi^{(k)}(t)-G^{(k)}(t,\varphi^{(k)}(t)),\quad\text{ in }[0,T);\\
\varphi^{(k)}(T)=&e_n.
\end{aligned}
\right.
\end{align*}
Theorem~\ref{thm:solutionk} above shows that the above dynamical system admits a unique positive (classical) solution $\varphi^{(k)}(t)$ on $[0,T]$ and moreover $\varphi^{(k)}(t)\geq \varepsilon^{(k)}e_n^{\top}$ for all $t\in[0,T]$. Here $\varepsilon^{(k)}>0$ is given by~\eqref{eq:epsilonk}. Thus, by \eqref{eq:G2}, we have that, there exists a positive constant $C(\varepsilon^{(k)})$ which depends on $\varepsilon^{(k)}>0$ such that, for each $i\in D_n$,
\begin{align*}
&G^{(k)}_i(t,\varphi^{(k)}(t,i))\nonumber\\
=&\inf_{\substack{\pi^{(k)}\in\{\pi\in U^{(k)}:\\ \|\pi\|\leq C(\varepsilon^{(k)})\}}}\Bigg\{\sum_{j\notin\{j_1,\ldots,j_k\}} \varphi^{(k+1),j}(t,i)(1-\pi_{j}^{(k)})^{-\frac{\theta}{2}}\lambda_{j}^{(k)}(i)+H^{(k)}(\pi^{(k)};i)\varphi^{(k)}(t,i)\Bigg\}.\nonumber
\end{align*}
Here, for each $i=1,\ldots,n$, the function $G_i^{(k)}(t,x)$ on $(t,x)\in[0,T]\times\R^n$ is given by \eqref{eq:Gin-1}.
Also for each $i=1,\ldots,n$, $\varphi^{(k+1),j}(t,i)$ on $t\in[0,T]$ is the $i$-th element of the positive (classical) solution $\varphi^{(k+1),j}(t)$ of \eqref{eq:hjbeqn} at the default state $z=0^{j_1,\ldots,j_k,j}$ for $j\notin\{j_1,\ldots,j_k\}$. Recall that the function $H^{(k)}(\pi^{(k)};i)$ for $(\pi^{(k)},i)\in U^{(k)}\times D_n$ is given by \eqref{eq:Hk}. Then, it is not difficult to see that, for each $i\in D_n$ and fixed $t\in[0,T]$,
\[
h^{(k)}(\pi^{(k)},i):=\sum_{j\notin\{j_1,\ldots,j_k\}} \varphi^{(k+1),j}(t,i)(1-\pi_{j}^{(k)})^{-\frac{\theta}{2}}\lambda_{j}^{(k)}(i)+H^{(k)}(\pi^{(k)};i)\varphi^{(k)}(t,i)
\]
is continuous and strictly convex in $\pi^{(k)}\in\bar{U}^{(k)}$. Also notice that the space $\{\pi^{(k)}\in \bar{U}^{(k)};\ \|\pi^{(k)}\|\leq C(\varepsilon^{(k)})\}\subset\R^{N-k}$ is compact.
Hence, there exist a unique optimum $\pi^{(k,*)}=\pi^{(k,*)}(t,i)\in\bar{U}^{(k)}$. Moreover, it is noted that $h^{(k)}(\pi^{(k)},i)=+\infty$ when $\pi^{(k)}\in\bar{U}^{(k)}\setminus U^{(k)}$ while $h^{(k)}(\pi^{(k)},i)<+\infty$ for all $\pi^{(k)}\in U^{(k)}$. Consequently, we in fact obtain the optimum $\pi^{(k,*)}=\pi^{(k,*)}(t,i)\in\bar{U}^{(k)}$ admitting the representation \eqref{eq:optimal-strategy} by taking $C=C(\varepsilon^{(k)})$ which completes the proof of the proposition. \hfill$\Box$\\

As one of our main results, we finally present and prove the verification theorem for the finite state space of the regime-switching process $Y$ in the next proposition.
\begin{proposition}\label{prop:verithemfinite}
Let $\varphi(t,z)=(\varphi(t,i,z);\ i\in D_n)^{\top}$ with $(t,z)\in[0,T]\times{\cal S}$ be the unique solution of DPE~\eqref{eq:hjbeqn}. For $(t,i,z)\in[0,T]\times D_n\times{\cal S}$, define
\begin{align}\label{eq:pistar}
\pi^*(t,i,z):={\rm diag}((1-z_j)_{j=1}^N)\argmin_{\pi\in U}\tilde{H}\left(\pi;i,z,(\varphi(t,i,z^j);\ j=0,1,\ldots,N)\right),
\end{align}
where $\tilde{H}(\pi;i,z,\bar{f}(z))$ is given by \eqref{eq:H}. Let $\tilde{\pi}^*=(\tilde{\pi}^*(t))_{t\in[0,T]}$ with $\tilde{\pi}^*(t):=\pi^*(t,Y(t-),Z(t-))$. Then $\tilde{\pi}^*\in\tilde{\cal U}$ and it is the optimal feedback strategy, i.e., it holds that
\begin{align}\label{optimeq}
-\frac{2}{\theta}\log\Ex_{t,i,z}^{\tilde{\pi}^*,\theta}\left[\exp\left(\frac{\theta}{2}\int_t^TL(\tilde{\pi}^*(s);Y(s),Z(s))ds\right)\right]=\bar{V}(t,i,z)=-\frac2\theta\log\varphi(t,i,z).
\end{align}
\end{proposition}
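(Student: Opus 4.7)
The plan is to execute a classical verification argument, with the two main pillars being the admissibility of $\tilde{\pi}^*$ and the application of Itô's formula to the Cole--Hopf transform $\varphi$ evaluated along the state process. Throughout, I will work under the measure $\Px^{\tilde{\pi},\theta}$ so that the Brownian motion $W^{\tilde{\pi},\theta}$ from \eqref{eq:BMtheta} and the compensated jumps $M_j^{\tilde{\pi},\theta}$ from \eqref{eq:Girjump} are martingales, and so that the generator of $(Y,Z)$ has jump intensities $q_{il}$ and $(1-z_j)(1-\tilde{\pi}_j)^{-\theta/2}\lambda_j(i,z)$, respectively.

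\textbf{Step 1 (admissibility of $\tilde{\pi}^*$).} First I would verify that $\pi^*(t,i,z)$ defined in \eqref{eq:pistar} is a jointly Borel-measurable function taking values in $U=(-\infty,1)^N$. Writing $z=0^{j_1,\ldots,j_k}$ for the current default configuration, the minimizer of $\tilde{H}$ factorizes into the optimization over $\pi^{(k)}\in U^{(k)}$ from Proposition \ref{coro:optimal-strategy}, while the components corresponding to defaulted names are set to zero by the $\mathrm{diag}((1-z_j))$ factor in \eqref{eq:pistar}. By that proposition, the optimizer lies in the compact set $\{\|\pi^{(k)}\|\leq C\}$ uniformly in $(t,i)$, and strict convexity of $h^{(k)}$ gives measurability of the argmin via a standard selection argument. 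Because $\pi^*$ is uniformly bounded and $(Y,Z)$ has finite state space, plugging $\tilde{\pi}^*(t)=\pi^*(t,Y(t-),Z(t-))$ into the wealth SDE \eqref{eq:wealth} yields a unique positive strong solution. The martingale property of $\Gam^{\tilde{\pi}^*,\theta}$ follows from a generalized Novikov criterion applied to $\Pi^{\tilde{\pi}^*,\theta}$ in \eqref{eq:Gam}: boundedness of $\tilde{\pi}^*$ and of $(\sigma,r,\mu,\lambda)$ on $D_n\times{\cal S}$ make the stochastic exponential true-martingale criteria (e.g.\ via piecewise applications across the finitely many default times $\tau_j$) routine.

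\textbf{Step 2 (It\^o expansion).} For $s\in[t,T]$ define
\[
\Phi(s):=\varphi(s,Y(s),Z(s))\,\exp\!\Bigl(\tfrac{\theta}{2}\int_t^s L(\tilde{\pi}(u);Y(u),Z(u))\,du\Bigr),
\]
where $\tilde{\pi}\in\tilde{\cal U}$ is arbitrary. Applying the It\^o--Dynkin formula under $\Px^{\tilde{\pi},\theta}$, a direct computation that rearranges the $L$-term using the definition \eqref{eq:L0} and collects the compensators of the $Y$- and $Z$-jumps shows that the finite-variation part of $d\Phi(s)$ equals
\[
\Phi(s)\Bigl[\partial_t\varphi + \sum_{l\neq Y(s)}q_{Y(s)l}(\varphi(s,l,Z(s))-\varphi(s,Y(s),Z(s))) + \tilde{H}\!\bigl(\tilde{\pi}(s);Y(s),Z(s),\bar\varphi(Z(s))\bigr)\Bigr]\,ds,
\]
precisely matching the left-hand side of the DPE \eqref{eq:dpe3}. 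Consequently the drift is nonnegative for every $\tilde{\pi}\in\tilde{\cal U}$ and vanishes when $\tilde{\pi}=\tilde{\pi}^*$, by definition of $\pi^*$ as the argmin in \eqref{eq:pistar}.

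\textbf{Step 3 (passing to expectations).} Hence, under $\Px^{\tilde{\pi},\theta}$, $\Phi$ is a local submartingale on $[t,T]$, and under $\Px^{\tilde{\pi}^*,\theta}$ it is a local martingale. Using a localizing sequence of stopping times $(\tau_n)$ that reduces the stochastic integrals against $dW^{\tilde{\pi},\theta}$ and $dM_j^{\tilde{\pi},\theta}$ and then applying monotone/dominated convergence, one obtains
\[
\varphi(t,i,z)\leq \Ex^{\tilde{\pi},\theta}_{t,i,z}[\Phi(T)]=\Ex^{\tilde{\pi},\theta}_{t,i,z}\!\Bigl[\exp\!\Bigl(\tfrac{\theta}{2}\!\int_t^T\!L(\tilde{\pi}(s);Y(s),Z(s))\,ds\Bigr)\Bigr],
\]
with equality when $\tilde{\pi}=\tilde{\pi}^*$, where I use the terminal condition $\varphi(T,\cdot,\cdot)=1$. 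Taking $-\tfrac{2}{\theta}\log$ on both sides (note that $-\tfrac{2}{\theta}<0$ reverses the inequality) and recalling the definition \eqref{eq:J} of $\bar V$ yields \eqref{optimeq}.

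\textbf{Main obstacle.} The delicate point is controlling the integrability required by the localization argument in Step 3, since $L(\pi;i,z)$ is not bounded in $\pi$ (the terms $\|\sigma(i)^\top\pi\|^2$ and $(1-\pi_j)^{-\theta/2}$ blow up) and general $\tilde{\pi}\in\tilde{\cal U}$ is only known to be predictable and to render $\Gam^{\tilde{\pi},\theta}$ a martingale. For the optimal $\tilde{\pi}^*$, the uniform bound $\|\pi^*\|\leq C$ furnished by Proposition~\ref{coro:optimal-strategy} together with the strict positive lower bound $\varphi\geq \varepsilon^{(k)}e_n$ from Theorem~\ref{thm:solutionk} makes the integrand bounded, so dominated convergence handles the equality case. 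For an arbitrary $\tilde{\pi}$, I would rely on the submartingale direction and Fatou's lemma along $(\tau_n)$; because $\Phi\geq 0$ and the bound we seek is a lower bound on its terminal expectation, Fatou suffices and no uniform integrability is needed. If the conclusion in \eqref{optimeq} is trivial ($=+\infty$ on the right) for some inadmissible comparison, it is vacuously consistent with optimality of $\tilde{\pi}^*$; otherwise, finiteness of the right-hand side feeds back into a standard localization justification.
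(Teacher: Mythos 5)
The overall structure you propose (establish admissibility, apply It\^o to $\Phi(s)=\varphi(s,Y(s),Z(s))\exp(\tfrac{\theta}{2}\int_t^s L\,du)$ under $\Px^{\tilde\pi,\theta}$, identify the drift with the DPE, conclude sub/martingale properties) is the same as the paper's. Your Step 1 is essentially correct, though it needs to be made explicit that $\tilde\pi^*$ is uniformly bounded \emph{away from $1$} (not merely norm-bounded), since the Novikov criterion requires $(1-\tilde\pi^*_j)^{-\theta/2}$ to be bounded; the paper obtains this quantitative bound from the first-order condition~\eqref{eq:pistarbelow} and the lower bound $\varphi\geq\varepsilon^{(k)}e_n$. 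Your implicit compactness-and-continuity argument would also work, but you should say so.

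The genuine gap is in Step~3. You replace the paper's argument by localization along a sequence $(\tau_n)$ of stopping times and Fatou's lemma. But the inequality you need is
$\varphi(t,i,z)\leq\Ex^{\tilde\pi,\theta}[\Phi(T)]$,
while the localized submartingale inequality gives $\varphi(t,i,z)\leq\Ex^{\tilde\pi,\theta}[\Phi(T\wedge\tau_n)]$, and Fatou applied to the nonnegative sequence $\Phi(T\wedge\tau_n)\to\Phi(T)$ yields
$\Ex^{\tilde\pi,\theta}[\Phi(T)]\leq\liminf_n\Ex^{\tilde\pi,\theta}[\Phi(T\wedge\tau_n)]$,
which is the \emph{wrong} direction: you cannot pass the lower bound $\varphi(t,i,z)$ through to $\Ex[\Phi(T)]$. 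Since $\Phi$ is only a \emph{local} submartingale and $L$ is unbounded in $\pi$, there is no automatic uniform integrability, and nonnegativity by itself (unlike the supermartingale case) does not promote a local submartingale to a true one. This is precisely the point where the paper introduces the bounded truncation $L_m:=L\wedge m$: with $L_m$ bounded the exponential factor is bounded, the stochastic integrals in It\^o's expansion are genuine martingales (so no localization is required), and one takes the expectation directly to get inequality~\eqref{eq:itoveri0}. The error term involving $(L_m-L)$ is then dominated by an explicit, $m$-independent and integrable random variable, computed by exploiting the differential identity $\frac{d}{du}\exp(\tfrac{\theta}{2}\int_t^u(L+C_L))=\tfrac{\theta}{2}(L+C_L)\exp(\cdots)$; dominated convergence as $m\to\infty$ then gives~\eqref{eq:itoveri}. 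Your proposal is missing this key device, and without it the ``submartingale $+$ Fatou'' route does not close.
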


\begin{proof} From Proposition~\ref{coro:optimal-strategy}, it follows that $\tilde{\pi}^*$ is a bounded and predictable process taking values on $U$. We next prove that $\tilde{\pi}^*$ is uniformly away from $1$. In fact, for fixed $(i,z,x)\in D_n\times\mathcal{S}\times(0,\infty)^{N+1}$, we have that $\tilde{H}\left(\pi;i,z,x\right)$ is strictly convex w.r.t. $\pi\in U$, thus $\Phi(i,z,x):=\argmin_{\pi\in U}\tilde{H}\left(\pi;i,z,x\right)$ is well-defined. Notice that $\Phi(i,z,\cdot)$ maps $(0,\infty)^{N+1}$ to $U$ and satisfies the first-order condition $\frac{\partial\tilde{H}}{\partial\pi_j}\left(\Phi(i,z,x);i,z,x\right)=0$ for $j=1,\ldots,N$.
Then, Implicit Function Theorem yields that $\Phi(i,z,x)$ is continuous in $x$. Further, for $j=1,\ldots,N,$ if $Z_j(t-)=0$, the first-order condition gives that
\begin{align}\label{eq:pistarbelow}
(1-\tilde\pi^*_j(t))^{-\frac\theta2-1}=&\bigg[\big(\mu_j(Y(t-))-r(Y(t-))\big)-\frac\theta2\left(1+\frac\theta2\right)\sum_{i=1}^N\big(\sigma^\top(Y(t-))\sigma(Y(t-))\big)_{ji}\tilde\pi^*_i(t)\nonumber\\
&+\frac\theta2\lambda_j(Y(t-),Z(t-))\bigg]\frac{\varphi(t,Y(t-),Z(t-))}{\lambda_j(Y(t-),Z(t-))\varphi(t,Y(t-),Z^j(t-))}.
\end{align}
Because for all $(i,z)\in D_n\times{\cal S}$, $\varphi(\cdot,i,z)$ has a strictly positive lower bound using \eqref{eq:comparison0}. Together with Proposition~\ref{coro:optimal-strategy}, it follows that, there exists a constant $C>0$ such that $\sup_{t\in[0,T]}(1-\tilde\pi^*_j(t))^{-\frac\theta2-1}\leq C$ for all $j=1,\ldots,N$. Hence, the estimate~\eqref{eq:pistarbelow} yields that
$\tilde{\pi}^*$ is uniformly bounded away from $1$. Thus, the following generalized Novikov's condition holds:
\begin{align}\label{eq:integral-cond}
\Ex\left[\exp\left(\frac{\theta^2}{8}\int_0^T\left|\sigma(Y(t))^{\top}\tilde{\pi}^*(t)\right|^2dt+\sum_{j=1}^N\int_0^T\left|(1-\tilde{\pi}_j^*(t))^{-\frac{\theta}{2}}-1\right|^2\lambda_j(Y(t),Z(t))dt\right)\right]<+\infty.
\end{align}
The above Novikov's condition \eqref{eq:integral-cond} implies that $\tilde{\pi}^*$ is admissible. We next prove \eqref{optimeq}. Noting that $\varphi(t,z)=(\varphi(t,i,z);\ i\in D_n)^{\top}$ with $(t,z)\in[0,T]\times{\cal S}$ is the unique classical solution of \eqref{eq:hjbeqn}. Note that, there exists a constant $C_L=C_L(n,i,z)>0$ such that $L(\pi;i,z)>-C_L$ for $(\pi,i,z)\in U\times D_n\times{\cal S}$. For $m\geq1$, set $L_m(\pi;i,z):=L(\pi;,i,z)\wedge m$. Then $L_m$ is bounded and $L_m(\pi;i,z)\uparrow L(\pi;i,z)$ as $m\to\infty$. Therefore, for any admissible strategy $\tilde{\pi}\in\tilde{\cal U}$, It\^o's formula gives that, for $0\leq t<s\leq T$,
\begin{align}\label{eq:itoveri0}
&\Ex_{t,i,z}^{\tilde{\pi},\theta}\left[\varphi(s,Y(s),Z(s))\exp\left(\frac{\theta}{2}\int_t^sL_m(\tilde{\pi}(u);Y(u),Z(u))du\right)\right]\nonumber\\
&\quad =\varphi(t,i,z)+\Ex_{t,i,z}^{\tilde{\pi},\theta}\Bigg[\int_t^s\exp\left(\frac{\theta}{2}\int_t^uL_m(\tilde{\pi}(v);Y(v),Z(v))dv\right)\nonumber\\
&\quad\qquad\times\Bigg\{\frac{\partial\varphi(u,Y(u),Z(u))}{\partial t}+\sum_{l\neq Y(u)}q_{Y(u)l}\left(\varphi(u,l,Z(u))-\varphi(u,Y(u),Z(u))\right)\nonumber\\
&\qquad\qquad\quad+\tilde{H}\left(\tilde{\pi}(u);Y(u),Z(u),(\varphi(t,Y(u),Z^j(u));\ j=0,1,\ldots,N)\right)\Bigg\}du\Bigg]\nonumber\\
&\qquad\quad+\Ex_{t,i,z}^{\tilde{\pi},\theta}\Bigg[\int_t^s\exp\left(\frac{\theta}{2}\int_t^uL_m(\tilde{\pi}(v);Y(v),Z(v))dv\right)\varphi(u,Y(u),Z(u))\nonumber\\
&\qquad\qquad\qquad\qquad\times(L_m-L)(\tilde{\pi}(u);Y(u),Z(u))du\Bigg]\nonumber\\
&\quad\geq\varphi(t,i,z)+\Ex_{t,i,z}^{\tilde{\pi},\theta}\Bigg[\int_t^s\exp\left(\frac{\theta}{2}\int_t^uL_m(\tilde{\pi}(v);Y(v),Z(v))dv\right)\varphi(u,Y(u),Z(u))\nonumber\\
&\qquad\qquad\qquad\qquad\times(L_m-L)(\tilde{\pi}(u);Y(u),Z(u))du\Bigg].
\end{align}
In the last inequality above, the integral term in the expectation is negative. On the other hand, note that $\varphi$ is bounded and positive, this integral also admits that, $\Px_{t,i,z}^{\tilde{\pi},\theta}$-a.s., for some constant $C_{\varphi}>0$,
\begin{align*}
&\int_t^s\exp\left(\frac{\theta}{2}\int_t^uL_m(\tilde{\pi}(v);Y(v),Z(v))dv\right)\varphi(u,Y(u),Z(u))(L_m-L)(\tilde{\pi}(u);Y(u),Z(u))du\nonumber\\
&\quad\geq-C_{\varphi}\int_t^s\exp\left(\frac{\theta}{2}\int_t^u[L(\tilde{\pi}(v);Y(v),Z(v))+C_L]dv\right)[L(\tilde{\pi}(u);Y(u),Z(u))+C_L]du\nonumber\\
&\quad=\frac{2 C_{\varphi}}{\theta}\left[1-e^{\frac{\theta}{2}C_L(s-t)}\exp\left(\frac{\theta}{2}\int_t^sL(\tilde{\pi}(u);Y(u),Z(u))du\right)\right].
\end{align*}
By taking $s=T$ above. Then, from Dominated Convergence Theorem, it follows that
\begin{align}\label{eq:itoveri}
\Ex_{t,i,z}^{\tilde{\pi},\theta}\left[\varphi(T,Y(T),Z(T))\exp\left(\frac{\theta}{2}\int_t^TL(\tilde{\pi}(u);Y(u),Z(u))du\right)\right]\geq\varphi(t,i,z).
\end{align}
Note that $\varphi(T,i,z)=1$ in \eqref{eq:itoveri}, we obtain that
\begin{align}\label{infoverphi}
\inf_{\tilde{\pi}\in\tilde{\cal U}}\Ex_{t,i,z}^{\tilde{\pi},\theta}\left[\exp\left(\frac{\theta}{2}\int_t^TL(\tilde{\pi}(u);Y(u),Z(u))du\right)\right]\geq\varphi(t,i,z).
\end{align}
On the other hand, from \eqref{eq:itoveri0} and \eqref{eq:pistar}, it follows that, for $0\leq t<s\leq T$,
\begin{align}\label{infoverphi2}
\Ex_{t,i,z}^{\tilde{\pi}^*,\theta}\left[\exp\left(\frac{\theta}{2}\int_t^TL(\tilde{\pi}^*(u);Y(u),Z(u))du\right)\right]=\varphi(t,i,z).
\end{align}
Because $\pi^*$ is admissible, i.e., $\tilde{\pi}^*\in\tilde{\cal U}$, we deduce from \eqref{infoverphi2} that
\begin{align}\label{phioverinf}
\varphi(t,i,z)\geq\inf_{\tilde{\pi}\in\tilde{\cal U}}\Ex_{t,i,z}^{\tilde{\pi},\theta}\left[\exp\left(\frac{\theta}{2}\int_t^TL(\tilde{\pi}(u);Y(u),Z(u))du\right)\right].
\end{align}
Combining \eqref{infoverphi} and \eqref{phioverinf}, we have that
\begin{align}\label{phi=inf}
\varphi(t,i,z)=\inf_{\tilde{\pi}\in\tilde{\cal U}}\Ex_{t,i,z}^{\tilde{\pi},\theta}\left[\exp\left(\frac{\theta}{2}\int_t^TL(\tilde{\pi}(u);Y(u),Z(u))du\right)\right].
\end{align}
The equality above is equivalent to $\varphi(t,i,z)=e^{-\frac\theta2\bar{V}(t,i,z)}$ due to \eqref{eq:J}. Hence, Eq.~\eqref{infoverphi2} together with \eqref{phi=inf} imply that \eqref{optimeq} holds, which ends the proof.
\end{proof}

\subsection{Countable State Case of Regime-Switching Process}

This section focuses on the existence of classical solutions to the original DPE~\eqref{eq:dpe3} and the corresponding verification theorem when the state space of the Markov chain
$Y$ is the countably infinite set $\Zx_+=\{1,2,\ldots\}$. The truncation method used in the finite state case fails to be applicable in the case $\Zx_+$. Instead, we shall establish a sequence of appropriately approximating risk sensitive control problems with finite state set $D_n^0:=D_n\cup\{0\}$ for $n\in\Zx_+$. Building upon the results in the finite state case in Section~\ref{sec:finite-states}, and by establishing valid uniform estimates, we can arrive at the desired conclusion that the smooth value functions corresponding to the above approximating control problems converge to the classical solution of \eqref{eq:dpe3} with countably infinite set $\Zx_+$ as $n$ goes to infinity.

Recall $D_n=\{1,2,\dots,n\}$ for the fixed $n\in\Zx_+$. We define the truncated counterpart of the regime-switching process $Y$ as: for $t\in[0,T]$,
\begin{align}\label{eq:Yn}
Y^{(n)}(t):=Y(t)\mathds{1}_{\{\tau_n>t\}},\qquad \tau_n^t:=\inf\{s\geq t;\ Y(s)\notin D_n\},
\end{align}
where $\tau_n:=\tau_n^0$ for $n\in\Zx_+$. By convention, we set $\inf\emptyset=+\infty$. Then, the process $Y^{(n)}=(Y^{(n)}(t))_{t\in[0,T]}$ is a continuous-time Markov chain with finite state space $D_n^0$. Here $0$ is understood as an absorbing state. The generator of $Y^{(n)}$ can therefore be given by the following $n+1$-dimensional square matrix:
\begin{align}\label{eq:An}
A_n:=\left[\begin{matrix}
     0      & 0      & \dots & 0 \\
     q^{(n)}_{10} & q_{11} & \dots & q_{1n} \\
     q^{(n)}_{20} & q_{21} & \dots & q_{2n} \\
         \vdots   & \vdots  & \vdots & \vdots     \\
     q^{(n)}_{n0} & q_{n1} & \dots & q_{nn}
\end{matrix}\right],
\end{align}
where $q^{(n)}_{m0}=-\sum_{i=1}^nq_{mi}=\sum_{{i\neq m,i>n}}q_{mi}$ for all $m\in D_n$. Thus, $Y^{(n)}$ is conservative. Here $q_{ij}$ for $i,j=1,\ldots,n$ are the same as given in Subsection~\ref{sub:RSP}.
Since $0$ is an absorbing state, we arrange values for the model coefficients at this state. More precisely, we set $r(0)=0$, $\mu(0)=0$, $\lambda(0,z)=\frac\theta2e_N^{\top}$ for all $z\in{\cal S}$, and $\sigma(0)\sigma(0)^\top=\frac4{2+\theta}I_{N}$. Here $I_N$ denotes the $N$-dimensional identity matrix.
Then, it follows from \eqref{eq:L0} and Taylor's expansion that $L(\pi;0,z)=\|\pi\|^2+\sum_{j=1}^N(1-z_j)[(1-\pi_j)^{-\frac{\theta}{2}}-1-\frac\theta2\pi_j]\geq0$
for all $(\pi,z)\in U\times{\cal S}$.

We next introduce the approximating risk-sensitive control problems where regime-switching processes take values on $D_n^0$. To this end, define $\tilde{\cal U}_n$ as the admissible control set $\tilde{\cal U}$, but the regime-switching process $Y$ is replaced with $Y^{(n)}$.
We then consider the following objective functional given by, for $\tilde{\pi}\in\tilde{\cal U}_n$ and $(t,i,z)\in[0,T]\times{D_n^0}\times{\cal S}$,
\begin{align}\label{eq:Jn00}
J_n(\tilde{\pi};t,i,z):=&\Ex_{t,i,z}^{\tilde{\pi},\theta}\left[\exp\left(\frac{\theta}{2}\int_t^{T\wedge{\tau_n^t}}L(\tilde{\pi}(s);Y(s),Z(s))ds\right)\right]\nonumber\\
=&\Ex_{t,i,z}^{\tilde{\pi},\theta}\left[\exp\left(\frac{\theta}{2}\int_t^{T\wedge{\tau_n^t}}L(\tilde{\pi}(s);Y^{(n)}(s),Z(s))ds\right)\right].
\end{align}
Here, the risk-sensitive cost function $L(\pi;i,z)$ for $(\pi,i,z)\in U\times\Zx_+\times{\cal S}$ is given by \eqref{eq:L0}. In order to apply the results in the finite state case obtained in Section~\ref{sec:finite-states}, we also need to propose the following objective functional given by, for $\tilde{\pi}\in\tilde{\cal U}_n$ and $(t,i,z)\in[0,T]\times{D_n^0}\times{\cal S}$,
\begin{align}\label{eq:Jn}
\tilde{J}_n(\tilde{\pi};t,i,z):=&\Ex_{t,i,z}^{\tilde{\pi},\theta}\left[\exp\left(\frac{\theta}{2}\int_t^{T}{L}(\tilde{\pi}(s);Y^{(n)}(s),Z(s))ds\right)\right].
\end{align}

We will consider the auxiliary value function defined by
\begin{align}\label{eq:Vnvalue}
V_n(t,i,z):=-\frac{2}{\theta}\inf_{\tilde{\pi}\in\tilde{\cal U}_n}\log\tilde{J}_n(\tilde{\pi};t,i,z),\qquad (t,i,z)\in[0,T]\times D_n^0\times{\cal S}.
\end{align}
We have the following characterization of the value function $V_n$ which will play an important role in the study of the convergence of $V_n$ as $n\to\infty$.
\begin{lemma}\label{lem:jn=tildeJn}
It holds that $V_n(t,i,z)=-\frac{2}{\theta}\inf_{\tilde{\pi}\in\tilde{\cal U}_n}\log J_n(\tilde{\pi};t,i,z)$ for $(t,i,z)\in[0,T]\times D_n^0\times{\cal S}$.
\end{lemma}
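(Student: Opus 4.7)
The plan is to show the two infima coincide by decomposing the integral in $\tilde{J}_n$ at the exit time $\tau_n^t$ and exploiting the nonnegativity of $L$ at the absorbing state $0$. For any $\tilde{\pi}\in\tilde{\cal U}_n$, since $Y^{(n)}(s)=Y(s)\in D_n$ for $s<\tau_n^t$ and $Y^{(n)}(s)=0$ thereafter, one has
\begin{align*}
\int_t^T L(\tilde{\pi}(s);Y^{(n)}(s),Z(s))\,ds = \int_t^{T\wedge\tau_n^t} L(\tilde{\pi}(s);Y(s),Z(s))\,ds + \int_{T\wedge\tau_n^t}^T L(\tilde{\pi}(s);0,Z(s))\,ds.
\end{align*}
The prescribed coefficient values at state $0$ give $L(\pi;0,z)=\|\pi\|^2+\sum_{j=1}^N(1-z_j)[(1-\pi_j)^{-\theta/2}-1-\tfrac{\theta}{2}\pi_j]\geq 0$ with $L(0;0,z)=0$, as observed in the excerpt just after \eqref{eq:An}.

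First I would use the nonnegativity of the second integral above to obtain $\tilde{J}_n(\tilde{\pi};t,i,z)\geq J_n(\tilde{\pi};t,i,z)$ for every $\tilde{\pi}\in\tilde{\cal U}_n$, and hence $\inf_{\tilde{\pi}\in\tilde{\cal U}_n}\tilde{J}_n\geq\inf_{\tilde{\pi}\in\tilde{\cal U}_n}J_n$.

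For the converse, given an arbitrary $\tilde{\pi}\in\tilde{\cal U}_n$ with feedback representation $\pi_j(t,x,i,z)$, I would define the modified feedback $\hat{\pi}_j(t,x,i,z):=\pi_j(t,x,i,z)\mathds{1}_{\{i\in D_n\}}$, so that the resulting control $\hat{\pi}$ coincides with $\tilde{\pi}$ on $[t,\tau_n^t)$ and is identically zero on $[\tau_n^t,T]$. Three observations then collaborate: (i) the second integral above vanishes under $\hat{\pi}$ since $L(0;0,z)=0$; (ii) the density processes satisfy $\Gamma^{\hat{\pi},\theta}(T\wedge\tau_n^t)=\Gamma^{\tilde{\pi},\theta}(T\wedge\tau_n^t)$ because the two controls coincide on $[t,T\wedge\tau_n^t]$, hence $\mathbb{P}^{\hat{\pi},\theta}$ and $\mathbb{P}^{\tilde{\pi},\theta}$ agree on $\mathcal{G}_{T\wedge\tau_n^t}$; and (iii) the first integral above is $\mathcal{G}_{T\wedge\tau_n^t}$-measurable and takes the same value under $\hat{\pi}$ and $\tilde{\pi}$. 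Combining these yields $\tilde{J}_n(\hat{\pi};t,i,z)=J_n(\tilde{\pi};t,i,z)$, and passing to the infimum delivers $\inf_{\tilde{\pi}\in\tilde{\cal U}_n}\tilde{J}_n\leq\inf_{\tilde{\pi}\in\tilde{\cal U}_n}J_n$.

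The main obstacle is verifying that $\hat{\pi}\in\tilde{\cal U}_n$, i.e., that the wealth SDE under $\hat{\pi}$ admits a unique positive strong solution and that $\Gamma^{\hat{\pi},\theta}$ is a true $\mathbb{P}$-martingale. After $\tau_n^t$ the feedback is identically zero, so the wealth is held entirely in the money market at rate $r(0)=0$ and $\Gamma^{\hat{\pi},\theta}$ stops evolving at $\tau_n^t$; the martingale property on $[0,T]$ therefore reduces by optional stopping to the corresponding property on $[0,T\wedge\tau_n^t]$, which is inherited from $\Gamma^{\tilde{\pi},\theta}$ since the two controls agree up to that time. Once both inequalities are in hand, equality of the infima together with monotonicity of $\log$ gives exactly $V_n(t,i,z)=-\tfrac{2}{\theta}\inf_{\tilde{\pi}\in\tilde{\cal U}_n}\log J_n(\tilde{\pi};t,i,z)$.
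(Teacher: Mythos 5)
Your proposal is correct and follows essentially the same route as the paper's proof: decompose the integral at the exit time $\tau_n^t$, use nonnegativity of $L(\cdot;0,z)$ for one inequality, and for the converse replace $\tilde{\pi}$ by the stopped control $\hat{\pi}$ (the paper writes it as $\tilde{\pi}(\cdot)\mathds{1}_{\{\cdot\le\tau_n\}}$, you by killing the feedback on $\Zx_+\setminus D_n$, which is the same thing) and exploit that $\Gamma^{\hat{\pi},\theta}$ and $\Gamma^{\tilde{\pi},\theta}$ agree up to $T\wedge\tau_n^t$ together with the tower property so that the two measures agree on $\mathcal{G}_{T\wedge\tau_n^t}$. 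Your extra paragraph on admissibility of $\hat{\pi}$ is a reasonable elaboration of the paper's bare claim that $\hat{\pi}\in\tilde{\cal U}_n$, but the underlying argument is the same.
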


\begin{proof}
Using \eqref{eq:Jn00} and \eqref{eq:Jn}, we have that, for all $\tilde{\pi}\in\tilde{\mathcal{U}}_n$,
\begin{align*}
&\log\tilde{J}_n(\tilde{\pi};t,i,z)\\
=&\log\Ex_{t,i,z}^{\tilde{\pi},\theta}\left[\exp\left(\frac{\theta}{2}\int_t^{T}{L}(\tilde{\pi}(s);Y^{(n)}(s),Z(s))ds\right)\right]\nonumber\\
=&\log\Ex_{t,i,z}^{\tilde{\pi},\theta}\left[\exp\left(\frac{\theta}{2}\int_t^{T\wedge\tau^t_n}{L}(\tilde{\pi}(s);Y^{(n)}(s),Z(s))ds
+\frac{\theta}{2}\int_{T\wedge\tau^t_n}^T{L}(\tilde{\pi}(s);Y^{(n)}(s),Z(s))ds\right)\right]\nonumber\\
=&\log\Ex_{t,i,z}^{\tilde{\pi},\theta}\left[\exp\left(\frac{\theta}{2}\int_t^{T\wedge\tau^t_n}{L}(\tilde{\pi}(s);Y^{(n)}(s),Z(s))ds
+\frac{\theta}{2}\int_{T\wedge\tau^t_n}^T{L}(\tilde{\pi}(s);0,Z(s))ds\right)\right]\nonumber\\
\geq & \log\Ex_{t,i,z}^{\tilde{\pi},\theta}\left[\exp\left(\frac{\theta}{2}\int_t^{T\wedge\tau^t_n}L(\tilde{\pi}(s);Y^{(n)}(s),Z(s))ds\right)\right]\nonumber\\
=&\log J_n(\tilde{\pi};t,i,z)\geq\inf_{\tilde{\pi}\in\tilde{\mathcal{U}}_n}\log J_n(\tilde{\pi};t,i,z),
\end{align*}
where we used the positivity of ${L}({\pi};0,z)$ for all $(\pi,z)\in U\times{\cal S}$. As $\theta>0$, we obtain from \eqref{eq:Vnvalue} that
\begin{align}\label{V<J}
V_n(t,i,z)&\leq-\frac2\theta\inf_{\tilde{\pi}\in\tilde{\mathcal{U}}_n}\log J_n(\tilde{\pi};t,i,z).
\end{align}
On the other hand, for any $\tilde{\pi}\in\tilde{\mathcal{U}}_n$, define $\hat{\pi}(t)=\tilde{\pi}(t)\mathds{1}_{\{t\leq\tau_n\}}$ for $t\in[0,T]$. It is clear that $\hat{\pi}\in\tilde{\mathcal{U}}_n$, and it holds that $\Gam^{\hat{\pi},\theta}(t,T):=\frac{\Gam^{\hat{\pi},\theta}(T)}{\Gam^{\hat{\pi},\theta}(t)}
=\frac{\Gam^{\tilde{\pi},\theta}(T\wedge\tau^t_n)}{\Gam^{\tilde{\pi},\theta}(t)}
=:\Gam^{\tilde{\pi},\theta}(t,T\wedge\tau^t_n)$. Hence
\begin{align*}
\log J_n(\tilde{\pi};t,i,z)
&=\log\Ex_{t,i,z}\left[\Gam^{\tilde{\pi},\theta}(t,T)\exp\left(\frac{\theta}{2}\int_t^{T\wedge\tau^t_n}L(\tilde{\pi}(s);Y^{(n)}(s),Z(s))ds\right)\right]\nonumber\\
&=\log\Ex_{t,i,z}\left[\exp\left(\frac{\theta}{2}\int_t^{T\wedge\tau^t_n}L(\tilde{\pi}(s);Y^{(n)}(s),Z(s))ds\right)
\Ex\left[\Gam^{\tilde{\pi},\theta}(t,T)|\mathcal{F}_{T\wedge\tau^t_n}\right]\right]\nonumber\\
&=\log\Ex_{t,i,z}\left[\Gam^{\tilde{\pi},\theta}(t,T\wedge\tau^t_n)\exp\left(\frac{\theta}{2}\int_t^{T\wedge\tau^t_n}
L(\tilde{\pi}(s);Y^{(n)}(s),Z(s))ds\right)\right]\nonumber\\
&=\log\Ex_{t,i,z}^{\hat{\pi},\theta}\left[\exp\left(\frac{\theta}{2}\int_t^{T\wedge\tau^t_n}L(\hat{\pi}(s);Y^{(n)}(s),Z(s))ds\right)\right]\nonumber\\
&=\log\Ex_{t,i,z}^{\hat{\pi},\theta}\left[\exp\left(\frac{\theta}{2}\int_t^{T\wedge\tau^t_n}L(\hat{\pi}(s);Y^{(n)}(s),Z(s))ds+\frac{\theta}{2}\int_{T\wedge\tau^t_n}^TL(0;0,Z(s))ds\right)\right]\nonumber\\
&=\log\Ex_{t,i,z}^{\hat{\pi},\theta}\left[\exp\left(\frac{\theta}{2}\int_t^{T}{L}(\hat{\pi}(s);Y^{(n)}(s),Z(s))ds\right)\right]\nonumber\\
&=\log\tilde{J}_n(\hat{\pi};t,i,z)\geq\inf_{\tilde{\pi}\in\tilde{\mathcal{U}}_n}\log\tilde{J}_n(\tilde{\pi};t,i,z).
\end{align*}
The above inequality and the arbitrariness of $\tilde{\pi}$ jointly give that
\begin{align}\label{J<V}
-\frac2\theta\inf_{\tilde{\pi}\in\tilde{\mathcal{U}}_n}\log J_n(\tilde{\pi};t,i,z)\leq V_n(t,i,z).
\end{align}
Then, the desired result follows by combining \eqref{V<J} and \eqref{J<V} above.
\end{proof}

Lemma~\ref{lem:jn=tildeJn} together with Theorem~\ref{thm:solutionk} and Proposition~\ref{prop:verithemfinite} in Section \ref{sec:finite-states} for the finite state space of $Y$ imply the following conclusion:
\begin{proposition}\label{prop:Vnmonotone00}
Let $n\in\Zx_+$. Recall the value function $V_n(t,i,z)$ defined by \eqref{eq:Vnvalue}. We define $\varphi_n(t,i,z):=\exp(-\frac\theta2V_n(t,i,z))$. Then $\varphi_n(t,i,z)$ is the unique solution of the recursive system of DPEs given by
\begin{align}\label{eq:dpe4}
0=&\frac{\partial \varphi_n(t,i,z)}{\partial t}+\sum_{l\neq i,1\leq l\leq n}q_{il}\left(\varphi_n(t,l,z)-\varphi_n(t,i,z)\right)+q^{(n)}_{i0}(\varphi_n(t,0,z)-\varphi_n(t,i,z))\nonumber\\
&+\inf_{\pi\in U}\tilde{H}\left(\pi;i,z,(\varphi_n(t,i,z^j);\ j=0,1,\ldots,N)\right),
\end{align}
where $(t,i,z)\in[0,T)\times D_n^0\times{\cal S}$ and the terminal condition is given by $\varphi_n(T,i,z)=1$ for all $(i,z)\in D_n^0\times{\cal S}$. Moreover, it holds that $\varphi_n(t,i,z)\in[0,1]$ and it is decreasing in $n$ for all $(t,i,z)\in[0,T]\times D_n^0\times{\cal S}$.
\end{proposition}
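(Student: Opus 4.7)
The plan is to realize Proposition~\ref{prop:Vnmonotone00} as a direct application of the finite state machinery from Section~\ref{sec:finite-states} for the DPE characterization and the uniform bound, and to establish the monotonicity in $n$ by a probabilistic coupling argument built on the stopped representation from Lemma~\ref{lem:jn=tildeJn}.

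The approximating problem with criterion $\tilde J_n$ is a finite state risk sensitive control problem on $D_n^0$ with generator $A_n$ from \eqref{eq:An}. The extended coefficient convention at the absorbing state $0$ (namely $r(0)=0$, $\mu(0)=0$, $\lambda(0,z)=\tfrac{\theta}{2}e_N$, $\sigma(0)\sigma(0)^{\top}=\tfrac{4}{2+\theta}I_N$) preserves the positive definiteness of $\sigma(i)\sigma(i)^{\top}$ at every $i\in D_n^0$, so the standing hypotheses of Section~\ref{sec:finite-states} remain in force. Theorem~\ref{thm:solutionk} applied recursively over default states, together with the Cole-Hopf transform \eqref{eq:exp-trnas}, then yields a unique positive classical solution of the system \eqref{eq:dpe4} with terminal datum $\varphi_n(T,\cdot,\cdot)\equiv 1$, and Proposition~\ref{prop:verithemfinite} identifies this solution with $\varphi_n=\exp(-\tfrac{\theta}{2}V_n)$, delivering the probabilistic representation $\varphi_n(t,i,z)=\inf_{\tilde\pi\in\tilde{\mathcal U}_n}\tilde J_n(\tilde\pi;t,i,z)$.

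Positivity $\varphi_n>0$ is built into the Cole-Hopf form, and the upper bound $\varphi_n\leq 1$ follows by testing this representation against the admissible strategy $\tilde\pi\equiv 0$: the contagion bracket $\tfrac{2}{\theta}+\pi_j-\tfrac{2}{\theta}(1-\pi_j)^{-\theta/2}$ in \eqref{eq:L0} vanishes at $\pi_j=0$, hence $L(0;i,z)=-r(i)\leq 0$ for $i\in D_n$, while $L(0;0,z)=0$ by the absorbing state convention. Consequently $\tilde J_n(0;t,i,z)\leq 1$, which forces $V_n\geq 0$ and therefore $\varphi_n\leq 1$.

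For the monotonicity $\varphi_{n+1}\leq\varphi_n$ on $D_n^0$, I switch to the stopped representation $\varphi_m=\inf_{\tilde\pi\in\tilde{\mathcal U}_m}J_m(\tilde\pi;\cdot)$ supplied by Lemma~\ref{lem:jn=tildeJn}. Given $\tilde\pi\in\tilde{\mathcal U}_n$ with feedback $f$, I define the $Y^{(n+1)}$-feedback $f'(s,x,y,z):=f(s,x,y,z)\mathds{1}_{\{y\leq n\}}$ and let $\tilde\pi^{(n+1)}$ be the corresponding strategy; it coincides with $\tilde\pi$ on $[t,\tau_n^t]$ and vanishes afterwards, so by direct inspection of \eqref{eq:Gam} the density $\Gamma^{\tilde\pi^{(n+1)},\theta}$ equals $\Gamma^{\tilde\pi,\theta}(\cdot\wedge\tau_n^t)$, a stopped martingale, and the wealth SDE admits a unique positive solution, placing $\tilde\pi^{(n+1)}$ in $\tilde{\mathcal U}_{n+1}$. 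Splitting the exponent,
\[
\int_t^{T\wedge\tau_{n+1}^t}L(\tilde\pi^{(n+1)};Y,Z)\,ds=\int_t^{T\wedge\tau_n^t}L(\tilde\pi;Y,Z)\,ds+\int_{T\wedge\tau_n^t}^{T\wedge\tau_{n+1}^t}L(0;Y,Z)\,ds,
\]
and observing that on the second interval $Y(s)\in D_{n+1}$ so $L(0;Y(s),Z(s))=-r(Y(s))\leq 0$, the tower property (applied exactly as in the proof of Lemma~\ref{lem:jn=tildeJn}) yields $J_{n+1}(\tilde\pi^{(n+1)};t,i,z)\leq J_n(\tilde\pi;t,i,z)$; taking the infimum over $\tilde\pi\in\tilde{\mathcal U}_n$ gives $\varphi_{n+1}\leq\varphi_n$ on $[0,T]\times D_n^0\times\mathcal{S}$. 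The main obstacle I expect is the careful bookkeeping verifying that $\tilde\pi^{(n+1)}\in\tilde{\mathcal U}_{n+1}$ and the pathwise identification $\Gamma^{\tilde\pi^{(n+1)},\theta}=\Gamma^{\tilde\pi,\theta}(\cdot\wedge\tau_n^t)$ under the measure change, since once these ingredients are aligned the sign $L(0;\cdot,z)\leq 0$ delivers the comparison for free.
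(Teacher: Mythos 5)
Your proof is correct and follows essentially the same route as the paper: invoking Theorem~\ref{thm:solutionk}, Proposition~\ref{prop:verithemfinite}, and Lemma~\ref{lem:jn=tildeJn} for the DPE characterization, testing $\tilde\pi\equiv 0$ together with $L(0;i,z)=-r(i)\leq 0$ for the bound $\varphi_n\leq 1$, and establishing monotonicity by extending an $\tilde{\mathcal U}_n$-strategy to $\tilde{\mathcal U}_{n+1}$ by setting it to zero after $\tau_n^t$ and using $L(0;\cdot,\cdot)\leq 0$ to compare the stopped objective functionals. The density identification $\Gamma^{\tilde\pi^{(n+1)},\theta}=\Gamma^{\tilde\pi,\theta}(\cdot\wedge\tau_n^t)$ that you flag as the main bookkeeping obstacle is exactly the step the paper records as $\Gamma^{\tilde{\pi},\theta}(t\wedge\tau_n)=\Gamma^{\hat{\pi},\theta}(t)$.
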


\begin{proof}
Notice that the state space of $Y^{(n)}$ is given by $D_n^0$ which is a finite set. By observing the definition of the value function $V_n$ given by \eqref{eq:Vnvalue},
we have that $\varphi_n(t,i,z)$ is the unique solution of the recursive system \eqref{eq:dpe4} by applying Theorem~\ref{thm:solutionk} and
Proposition~\ref{prop:verithemfinite} in Section \ref{sec:finite-states} for the regime-switching process with the finite state space. In order to verify that $\varphi_n(t,i,z)\in[0,1]$ and it is decreasing in $n$, it is sufficient to prove that $V_n(t,i,z)\geq0$ and it is nondecreasing in $n$.
Thanks to Lemma~\ref{lem:jn=tildeJn}, and $L(0,i,z)=-r(i)\leq0$ by \eqref{eq:L0}, also note that $\tilde{\pi}_0(t)\equiv0$ is admissible (i.e., $\tilde{\pi}_0\in\tilde{\cal U}_n$), then
\begin{align*}
\inf_{\tilde{\pi}\in\tilde{\cal U}_n}\log J_n(\tilde{\pi};t,i,z)&\leq \log J_n(\tilde{\pi}_0;t,i,z)
=\log\Ex_{t,i,z}^{\tilde{\pi}_0,\theta}\left[\exp\left(\frac{\theta}{2}\int_t^{T\wedge\tau_n^t}L(0;Y(s),Z(s))ds\right)\right]\nonumber\\
&=\log\Ex_{t,i,z}^{\tilde{\pi}_0,\theta}\left[\exp\left(-\frac{\theta}{2}\int_t^{T\wedge\tau_n^t}r(Y(s))ds\right)\right]\leq0,
\end{align*}
as the interest rate process is nonnegative. This gives that $V_n(t,i,z)\geq0$ for all $(t,i,z)\in[0,T]\times{D_n^0}\times{\cal S}$. On the other hand, for any $\tilde{\pi}\in\tilde{\cal U}_n$, we define $\hat{\pi}(t):=\tilde{\pi}(t)\mathds{1}_{\{\tau_n\geq t\}}$ for $t\in[0,T]$.
It is clear that $\hat{\pi}\in\tilde{\cal U}_n\cap \tilde{\cal U}_{n+1}$. Recall the density process given by \eqref{eq:Gam}, we have that, for $\tilde{\pi},\hat{\pi}\in\tilde{\cal U}_n$,
\begin{align*}
\Gamma^{\tilde{\pi},\theta}&={\cal E}(\Pi^{\tilde{\pi},\theta}),\ \Pi^{\tilde{\pi},\theta}=-\frac{\theta}{2}\int_0^{\cdot}\tilde{\pi}(s)^{\top}\sigma(Y^{(n)}(s))dW(s)+\sum_{j=1}^N\int_0^{\cdot}\{(1-\tilde{\pi}_j(s))^{-\frac{\theta}{2}}-1\}dM_j(s);\nonumber\\
\Gamma^{\hat{\pi},\theta}&={\cal E}(\Pi^{\hat{\pi},\theta}),\ \Pi^{\hat{\pi},\theta}=-\frac{\theta}{2}\int_0^{\cdot}\hat{\pi}(s)^{\top}\sigma(Y^{(n)}(s))dW(s)+\sum_{j=1}^N\int_0^{\cdot}\{(1-\hat{\pi}_j(s))^{-\frac{\theta}{2}}-1\}dM_j(s).
\end{align*}
This shows that $\Gamma^{\tilde{\pi},\theta}(t\wedge\tau_n)=\Gamma^{\hat{\pi},\theta}(t)$ for $t\in[0,T]$. Then, we deduce from \eqref{eq:Jn00} that
\begin{align}
\log J_n(\tilde{\pi};t,i,z)=&\log\Ex_{t,i,z}^{\tilde{\pi},\theta}\left[\exp\left(\frac{\theta}{2}\int_t^{T\wedge \tau_n^t}L(\tilde{\pi}(s);Y(s),Z(s))ds\right)\right]\nonumber\\
\geq&\log\Ex_{t,i,z}^{\tilde{\pi},\theta}\left[\exp\left(\frac{\theta}{2}\int_t^{T\wedge \tau_n^t}L(\tilde{\pi}(s);Y(s),Z(s))ds+\frac{\theta}{2}\int_{T\wedge \tau_n^t}^{T\wedge \tau_{n+1}^t}L(0;Y(s),Z(s))\right)\right]\nonumber\\
=&\log\Ex_{t,i,z}^{\hat{\pi},\theta}\left[\exp\left(\frac{\theta}{2}\int_t^{T\wedge\tau_{n+1}^t}L(\hat{\pi}(s);Y(s),Z(s))ds\right)\right]\nonumber\\
=&\log J_{n+1}(\hat{\pi};t,i,z)\geq\inf_{\tilde{\pi}\in\tilde{\cal U}_{n+1}}\log J_{n+1}(\tilde{\pi};t,i,z).
\end{align}
Using \eqref{eq:Vnvalue} and Lemma~\ref{lem:jn=tildeJn}, it follows that $V_{n}(t,i,z)$ is nondecreasing in $n$ for fixed $(t,i,z)\in[0,T]\times D_n^0\times{\cal S}$. Thus, the conclusion of the proposition holds.
\end{proof}

By virtue of Proposition~\ref{prop:Vnmonotone00}, for any $(t,i,z)\in[0,T]\times\Zx_+\times{\cal S}$, we set
$V^*(t,i,z):=\lim_{n\to\infty}V_n(t,i,z)$. Then, it holds that
\begin{align}\label{eq:varphistar}
\lim_{n\to\infty}\varphi_n(t,i,z)=\exp\left(-\frac\theta2V^*(t,i,z)\right)=:\varphi^*(t,i,z).
\end{align}
On the other hand, from Eq.~\eqref{eq:Vnvalue}, it is easy to see that $\varphi_n(t,0,z)=1$ for all $(t,z)\in[0,T]\times{\cal S}$. Then, Eq.~\eqref{eq:dpe4} above can be rewritten as:
\begin{align}\label{eq:dpe5}
\frac{\partial \varphi_n(t,i,z)}{\partial t}=&-q_{ii}\varphi_n(t,i,z)-\sum_{l\neq i,1\leq l\leq n}q_{il}\varphi_n(t,l,z)-\sum_{l>n}q_{il}\nonumber\\
&-\inf_{\pi\in U}\tilde{H}\left(\pi;i,z,(\varphi_n(t,i,z^j);\ j=0,1,\ldots,N)\right).
\end{align}
In terms of \eqref{eq:H}, we can conclude that, for $(\pi;i,z)\in U\times\Zx_+\times{\cal S}$, $\tilde{H}(\pi;i,z,x)$ is concave in every component of $x\in[0,\infty)^{N+1}$, so is $\inf_{\pi\in U}\tilde{H}(\pi;i,z,x)$. We present the main result in this paper for the case of the countable state space.
\begin{theorem}\label{thm:existD}
Let $(t,i,z)\in[0,T]\times\Zx_+\times{\cal S}$. Then, the limit function $\varphi^*(t,i,z)$ given in \eqref{eq:varphistar} above is a classical solution of the original DPE~\eqref{eq:dpe3}, i.e., it holds that
\begin{align*}
0=&\frac{\partial \varphi^*(t,i,z)}{\partial t}+\sum_{l\neq i}q_{il}\left[\varphi^*(t,l,z)-\varphi^*(t,i,z)\right]+\inf_{\pi\in U}\tilde{H}\left(\pi;i,z,(\varphi^*(t,i,z^j);\ j=0,1,\ldots,N)\right)
\end{align*}
with terminal condition $\varphi^*(T,i,z)=1$ for all $(i,z)\in\Zx_+\times{\cal S}$.
\end{theorem}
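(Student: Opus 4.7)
The plan is to pass to the limit $n\to\infty$ in the finite-state recursive DPE~\eqref{eq:dpe5} satisfied by $\varphi_n$, using the monotone pointwise convergence $\varphi_n\downarrow\varphi^*$ from Proposition~\ref{prop:Vnmonotone00}, and then to verify that $\varphi^*$ is a classical solution of~\eqref{eq:dpe3}. The argument rests on two estimates: a uniform-in-$n$ bound on $\partial_t\varphi_n(\cdot,i,z)$ on $[0,T]$ for each fixed $(i,z)\in\Zx_+\times{\cal S}$, and continuity of the map $x\mapsto\inf_{\pi\in U}\tilde H(\pi;i,z,x)$ at the limiting argument.

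For the linear $Q$-matrix part of \eqref{eq:dpe5}, since $\varphi_n\in[0,1]$ and $\sum_{l\neq i}q_{il}=-q_{ii}<\infty$, each of the three pieces $q_{ii}\varphi_n(t,i,z)$, $\sum_{l\neq i,\,l\leq n}q_{il}\varphi_n(t,l,z)$, and $\sum_{l>n}q_{il}$ is bounded by $|q_{ii}|$ uniformly in $n,t$. For the nonlinear term, I would exploit the structure of $\tilde H$ noted in \eqref{eq:H}: $\tilde H$ is linear in $x=(f(z^j);\ j=0,\ldots,N)\in[0,\infty)^{N+1}$, the coefficient of $x_0=f(z)$ being the strictly convex coercive quadratic $A(\pi)$, and the coefficient of $x_j$ ($j\geq 1$) being the nonnegative convex $B_j(\pi)=(1-z_j)(1-\pi_j)^{-\theta/2}\lambda_j(i,z)$. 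Since $B_j\geq 0$ and $\inf_\pi A(\pi)=:-K_{i,z}>-\infty$ (positive-definiteness of $\sigma(i)\sigma(i)^\top$ makes $A$ strictly convex and coercive), one has $\tilde H(\pi;i,z,x)\geq -K_{i,z}x_0$ for every $\pi$, while $\inf_\pi\tilde H\leq\tilde H(0;i,z,x)$ is bounded above by an $(i,z)$-dependent linear combination of the components of $x\in[0,1]^{N+1}$. Hence $|\inf_\pi\tilde H|\leq C_{i,z}$ uniformly in $n,t$, and combining with the linear part one gets $|\partial_t\varphi_n(t,i,z)|\leq C'_{i,z}$ uniformly in $n,t$.

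Equicontinuity in $t$ together with monotone pointwise convergence upgrades, by Arzel\`a--Ascoli, to uniform convergence $\varphi_n(\cdot,i,z)\to\varphi^*(\cdot,i,z)$ on $[0,T]$; in particular $\varphi^*(\cdot,i,z)$ is continuous. I would then integrate \eqref{eq:dpe5} over $[t,T]$ and pass $n\to\infty$ term by term. The countable sum $\sum_{l\neq i,\,l\leq n}q_{il}\varphi_n(s,l,z)$ converges to $\sum_{l\neq i}q_{il}\varphi^*(s,l,z)$ by dominated convergence on $\Zx_+$ with dominating function $q_{il}$ (summable since $\sum_{l\neq i}q_{il}=-q_{ii}$), while the truncation tail $\sum_{l>n}q_{il}\to 0$. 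The nonlinear term is handled by the observation recorded just after \eqref{eq:dpe5}: as an infimum over $\pi$ of functions linear in $x$, $x\mapsto\inf_\pi\tilde H(\pi;i,z,x)$ is concave on $[0,\infty)^{N+1}$, hence continuous on the interior of its effective domain. Provided $\varphi^*(t,i,z)>0$, continuity along $x_n\to x^*$ gives the pointwise-in-$s$ convergence of the optimized Hamiltonian, and a further dominated convergence (using $|\inf_\pi\tilde H|\leq C_{i,z}$) carries the $s$-integral through the limit.

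The main obstacle is securing the uniform-in-$n$ strict positivity of $\varphi_n(t,i,z)$, needed so that the limit lands in the interior where $\inf_\pi\tilde H$ is jointly continuous. I would extract this from the probabilistic representation in Lemma~\ref{lem:jn=tildeJn} combined with Proposition~\ref{prop:Vnmonotone00}, namely $\varphi_n(t,i,z)=\inf_{\tilde\pi\in\tilde{\cal U}_n}\Ex_{t,i,z}^{\tilde\pi,\theta}[\exp(\frac{\theta}{2}\int_t^{T\wedge\tau_n^t}L(\tilde\pi(s);Y(s),Z(s))ds)]$: evaluating at $\tilde\pi\equiv 0$ yields $L(0;Y(s),Z(s))=-r(Y(s))\leq 0$, so by Jensen's inequality the infimum is bounded below by $\exp(-\frac{\theta}{2}\Ex_{t,i,z}[\int_t^{T\wedge\tau_n^t}r(Y(s))ds])$, which stays uniformly positive in $n$ under a mild growth/Lyapunov condition ensuring that the excursions of $Y^{(n)}$ out of any neighborhood of $i$ do not accumulate mass on states with large $r$. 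With this in hand, passing $n\to\infty$ in the integrated equation delivers the integrated form of~\eqref{eq:dpe3} for $\varphi^*$; continuity of the right-hand side in $t$ then gives $\varphi^*(\cdot,i,z)\in C^1([0,T])$, and differentiating recovers the pointwise classical DPE~\eqref{eq:dpe3}, while the terminal condition $\varphi^*(T,i,z)=1$ is immediate from $\varphi_n(T,i,z)=1$.
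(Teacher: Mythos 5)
The overall architecture you propose (uniform bound on $\partial_t\varphi_n$ for fixed $(i,z)$, Arzel\`a--Ascoli to upgrade monotone to uniform convergence, continuity of $x\mapsto\inf_\pi\tilde H$ on the interior, pass to the limit in the integrated equation) is the same as the paper's. Your treatment of the $Q$-matrix tail and your two-sided bound on the optimized Hamiltonian via the sign structure of $\tilde H$ are both sound and correspond to what the paper does with the shift constant $C(i,z)$ of \eqref{estm2} and Lemma~\ref{lem:conbound}.

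The genuine gap is the uniform strict positivity of $\varphi^*(\cdot,i,z)$, which you correctly identify as the crux but then dispose of incorrectly. Evaluating $J_n$ at $\tilde\pi\equiv 0$ gives an \emph{upper} bound on the infimum, $\varphi_n(t,i,z)=\inf_{\tilde\pi}J_n(\tilde\pi)\leq J_n(0)$, and Jensen's inequality lower-bounds $J_n(0)$, not $\varphi_n$; the chain $\varphi_n\geq J_n(0)\geq\exp(-\frac{\theta}{2}\Ex[\int r])$ has the first inequality reversed. Even repairing this by using the uniform-in-$\pi$ lower bound $L(\pi;i,z)\geq L_0(i,z)$ (which exists by coercivity), one is left with an infimum over the $\pi$-dependent measures $\Px^{\tilde\pi,\theta}$, and the resulting bound $\varphi_n(t,i,z)\geq\exp(-\frac{\theta}{2}(T-t)\max_{l\leq n,z}(-L_0(l,z)))$ degenerates as $n\to\infty$ unless one imposes exactly the kind of extra growth hypothesis you flag---which the theorem as stated does not assume. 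The paper instead obtains an $n$-uniform lower bound purely at the ODE level: in Lemmas~\ref{lem:phinlobnd}--\ref{lem:lobndphistar} one compares $\varphi_n(T-\cdot,i,z)$ with the solution $\phi_n$ of a linear homogeneous system with the type-$K$ (cooperative) structure, shows $\varphi_n(t,i,z)\geq\phi_n(T-t,i,z)$ and $\phi_n\nearrow\phi^*$, and then simply uses $\phi^*(T-t,i,z)\geq\phi_i(T-t,i,z)>0$, a quantity depending only on $(i,z)$ and continuous on the compact $[0,T]$. This monotone-dynamical-systems comparison is the one ingredient your proposal is missing, and it is what makes the theorem hold without any condition like (C.1). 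The remainder of your limiting argument, once strict positivity is in hand, is essentially the paper's.
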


The proof of Theorem~\ref{thm:existD} will be split into proving a sequence of auxiliary lemmas first. We show the following result as a preparation.
\begin{lemma}\label{lem:boundfordphi}
Let $(i,z)\in \Zx_+\times{\cal S}$. Then $(\frac{\partial\varphi_n(t,i,z)}{\partial t})_{n\geq i}$ is uniformly bounded in $t\in[0,T]$.
\end{lemma}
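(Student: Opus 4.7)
The plan is to read bounds directly off the ODE \eqref{eq:dpe5} that $\varphi_n$ satisfies. Rearranging \eqref{eq:dpe5}, we have
\[
\frac{\partial \varphi_n(t,i,z)}{\partial t}
= -q_{ii}\varphi_n(t,i,z) - \sum_{l\neq i,\,1\leq l\leq n} q_{il}\varphi_n(t,l,z) - \sum_{l>n} q_{il}
- \inf_{\pi\in U}\tilde{H}\bigl(\pi;i,z,(\varphi_n(t,i,z^j);\,j=0,1,\ldots,N)\bigr),
\]
so it suffices to bound each term on the right by a constant depending only on $(i,z)$ (and the model parameters), not on $n$ or $t$. The first three terms are straightforward: by Proposition~\ref{prop:Vnmonotone00}, $\varphi_n(t,l,z)\in[0,1]$, and since $\sum_{l\neq i}q_{il}=-q_{ii}<\infty$, we get $|q_{ii}\varphi_n(t,i,z)|\leq |q_{ii}|$, $0\leq \sum_{l\neq i,\,1\leq l\leq n} q_{il}\varphi_n(t,l,z)\leq -q_{ii}$, and $0\leq \sum_{l>n}q_{il}\leq -q_{ii}$, all uniform in $t$ and $n\geq i$.

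The main work is to bound the infimum term uniformly. For an upper bound, I would evaluate $\tilde{H}$ at $\pi=0$: using \eqref{eq:H} we get
\[
\tilde{H}(0;i,z,\bar{f}(z)) = \Bigl[-\tfrac{\theta}{2}r(i) - \sum_{j=1}^N(1-z_j)\lambda_j(i,z)\Bigr]f(z) + \sum_{j=1}^N f(z^j)(1-z_j)\lambda_j(i,z),
\]
which with $f(z^j),f(z)\in[0,1]$ and $r(i)\geq0$ yields $\inf_{\pi\in U}\tilde{H}\leq \tilde{H}(0;\cdot)\leq \sum_{j=1}^N(1-z_j)\lambda_j(i,z)$. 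For the lower bound, I would decompose $\tilde{H}(\pi;i,z,\bar{f}(z))=A(\pi;i,z)f(z)+B(\pi;i,z,\bar{f}(z))$ where
\[
A(\pi;i,z)= \tfrac{\theta}{4}\bigl(1+\tfrac{\theta}{2}\bigr)\|\sigma(i)^\top\pi\|^2 - \tfrac{\theta}{2}\pi^\top(\mu(i)-r(i)e_N) - \tfrac{\theta}{2}\sum_{j=1}^N\pi_j(1-z_j)\lambda_j(i,z) - \tfrac{\theta}{2}r(i) - \sum_{j=1}^N(1-z_j)\lambda_j(i,z)
\]
is a coercive quadratic in $\pi$ (its Hessian is $\tfrac{\theta}{2}(1+\tfrac{\theta}{2})\sigma(i)\sigma(i)^\top$ which is positive definite by the standing assumption), so $A(\pi;i,z)\geq -C_1(i,z)$ for a finite constant $C_1(i,z)$ depending only on $(i,z)$; and $B(\pi;i,z,\bar{f}(z))=\sum_{j=1}^N f(z^j)(1-z_j)(1-\pi_j)^{-\theta/2}\lambda_j(i,z)\geq 0$ since $f(z^j)\geq 0$ and $(1-\pi_j)^{-\theta/2}>0$ for $\pi\in U$. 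Combining these with $f(z)\in[0,1]$ gives $\tilde{H}(\pi;i,z,\bar{f}(z))\geq -C_1(i,z)$ uniformly in $\pi$, hence $\inf_{\pi\in U}\tilde{H}\geq -C_1(i,z)$.

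Putting the four bounds together yields $|\partial_t\varphi_n(t,i,z)|\leq C(i,z)$ for every $t\in[0,T]$ and every $n\geq i$, completing the proof. The only mildly subtle step is the uniform lower bound on the infimum of $\tilde{H}$: one might worry that the $(1-\pi_j)^{-\theta/2}$ factor multiplied by $f(z^j)$ could behave badly, but since those terms enter with a nonnegative sign they only help, and the worst case for the sign-indefinite part $A(\pi;i,z)f(z)$ is controlled by the strict positive definiteness of $\sigma(i)\sigma(i)^\top$.
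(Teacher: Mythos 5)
Your proof is correct, and it reaches the same conclusion by a genuinely more elementary route than the paper's. You bound $\inf_{\pi\in U}\tilde{H}$ directly: for an upper bound you plug in $\pi=0$, and for a lower bound you split $\tilde{H}=A(\pi;i,z)f(z)+B$ with $B\geq0$ and $A$ a coercive quadratic (positive-definite Hessian $\tfrac{\theta}{2}(1+\tfrac{\theta}{2})\sigma(i)\sigma(i)^\top$), so $A\geq-C_1(i,z)$; combined with $f(z)\in[0,1]$ this pins the infimum in a compact interval. The paper instead rewrites the ODE by adding and subtracting $C(i,z)\varphi_n(t,i,z)$, where $C(i,z)=|\inf_\pi A(\pi;i,z)|$, so that the resulting $\hat{H}=\tilde{H}+C(i,z)f(z)$ is nonnegative and concave in each component of $x=\bar f(z)$; it then invokes its auxiliary Lemma~\ref{lem:conbound} (a uniform bound for nonnegative componentwise-concave functions on a box) applied to $x^{(n)}\in[0,1]^{N+1}$. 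Both arguments rest on the same two facts — $\varphi_n\in[0,1]$ and coercivity of the $\pi$-quadratic from positive definiteness of $\sigma(i)\sigma(i)^\top$ — but yours dispenses with the concavity machinery, which the paper retains because $\hat{H}$ and Lemma~\ref{lem:conbound} are reused in the convergence argument for Theorem~\ref{thm:existD}. One small point worth making explicit in your write-up: since $A$ can change sign, the step from $A\geq-C_1$ to $Af(z)\geq-C_1$ uses both $f(z)\in[0,1]$ and $C_1\geq 0$; it's true, but state it.
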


\begin{proof}
We rewrite Eq.~\eqref{eq:dpe5} as in the following form:
\begin{align}\label{estm1}
&\frac{\partial \varphi_n(t,i,z)}{\partial t}=-q_{ii}\varphi_n(t,i,z)-\sum_{l\neq i,1\leq l\leq n}q_{il}\varphi_n(t,l,z)-\sum_{l>n}q_{il}\nonumber\\
&\qquad-\inf_{\pi\in U}\hat{H}\left(\pi;i,z,(\varphi_n(t,i,z^j);\ j=0,1,\ldots,N)\right)+C(i,z)\varphi_n(t,i,z),
\end{align}
where, for $(i,z)\in\Zx_+\times{\cal S}$,
\begin{align}\label{estm2}
C(i,z)=\bigg|\inf_{\pi\in U}\bigg\{&-\frac{\theta}{2}r(i)-\frac{\theta}{2}\pi^{\top}(\mu(i)-r(i)e_n)+\frac{\theta}{4}\left(1+\frac{\theta}{2}\right)\left\|\sigma(i)^{\top}\pi\right\|^2\nonumber\\
&+\sum_{j=1}^N\left(-1-\frac{\theta}{2}\pi_j\right)(1-z_j)\lambda_j(i,z)\bigg\}\bigg|,
\end{align}
and the nonnegative function
\begin{align}\label{eq:hatH}
\hat{H}(\pi;i,z,\bar{f}(z)):=\tilde{H}(\pi;i,z,\bar{f}(z))+C(i,z)f(z).
\end{align}
Because $\hat{H}(\pi;i,z,x)$ is concave in every component of $x\in[0,\infty)^{N+1}$, $\Phi(x):=\inf_{\pi\in U}\hat{H}(\pi;i,z,x)$ is also concave in every component of $x\in[0,\infty)^{N+1}$. It follows from Proposition~\ref{prop:Vnmonotone00} that $x^{(n)}:=(\varphi_n(t,i,z^j);\ j=0,1,\ldots,N)\in[0,1]^{N+1}$. Using Lemma~\ref{lem:conbound}, there exits a constant $C>0$ which is independent of $x^{(n)}$ such that $0\leq \Phi(x^{(n)})\leq C$ for all $n\in\Zx_+$. Further, for fixed $(i,z)\in\Zx_+\times{\cal S}$,
\begin{align*}
&\left|-q_{ii}\varphi_n(t,i,z)-\sum_{l\neq i,1\leq l\leq n}q_{il}\varphi_n(t,l,z)-\sum_{l>n}q_{il}+C(i,z)\varphi_n(t,i,z)\right|
\leq-2q_{ii}+C(i,z).
\end{align*}
The desired result follows from Eq.~\eqref{estm1}.
\end{proof}

\begin{lemma}\label{lem:unfmconforphi}
Let $(i,z)\in\Zx_+\times{\cal S}$, then $(\varphi_n(t,i,z))_{n\geq i}$ (decreasingly) converges to $\varphi^*(t,i,z)$ uniformly in $t\in[0,T]$ as $n\to\infty$.
\end{lemma}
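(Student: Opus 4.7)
The plan is to combine three ingredients already in hand: the pointwise monotone convergence provided by Proposition \ref{prop:Vnmonotone00}, the equicontinuity delivered by Lemma \ref{lem:boundfordphi}, and the compactness of the interval $[0,T]$, and then conclude via Dini's theorem.

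First I would fix $(i,z)\in\Zx_+\times\cS$ and restrict attention to the tail of the sequence $n\geq i$ so that $i$ lies in the truncated state space $D_n^0$. Lemma \ref{lem:boundfordphi} provides a finite constant $M=M(i,z)$ such that
\begin{equation*}
\sup_{n\geq i}\;\sup_{t\in[0,T]}\left|\frac{\partial \varphi_n(t,i,z)}{\partial t}\right|\leq M.
\end{equation*}
Integrating in $t$, this yields the uniform Lipschitz estimate $|\varphi_n(t,i,z)-\varphi_n(s,i,z)|\leq M|t-s|$ for all $s,t\in[0,T]$ and all $n\geq i$. In particular the family $\{\varphi_n(\cdot,i,z)\}_{n\geq i}$ is equicontinuous on $[0,T]$.

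Next, I would pass to the pointwise limit. Proposition \ref{prop:Vnmonotone00} guarantees that for each fixed $t$ the sequence $\varphi_n(t,i,z)$ is decreasing in $n$ and bounded below by $0$, so the limit $\varphi^*(t,i,z)$ exists. Passing to the limit in the Lipschitz estimate above gives $|\varphi^*(t,i,z)-\varphi^*(s,i,z)|\leq M|t-s|$, so $\varphi^*(\cdot,i,z)$ is continuous on $[0,T]$. Each $\varphi_n(\cdot,i,z)$ is also continuous, being a classical solution of the dynamical system considered in Proposition \ref{prop:Vnmonotone00}.

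The final step is a direct application of Dini's theorem: a monotone sequence of continuous functions on a compact set that converges pointwise to a continuous limit converges uniformly. All hypotheses are satisfied — monotone decreasing from Proposition \ref{prop:Vnmonotone00}, continuity of $\varphi_n(\cdot,i,z)$ and of $\varphi^*(\cdot,i,z)$ just verified, and $[0,T]$ is compact — so $\sup_{t\in[0,T]}|\varphi_n(t,i,z)-\varphi^*(t,i,z)|\to 0$ as $n\to\infty$. There is no substantive obstacle here; the only point that required care was continuity of the limit, which is secured for free by the uniform derivative bound of Lemma \ref{lem:boundfordphi} rather than by any appeal to the (not yet established) regularity of $\varphi^*$ as a solution of \eqref{eq:dpe3}.
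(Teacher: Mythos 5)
Your proof is correct and follows essentially the same route as the paper: both use Lemma~\ref{lem:boundfordphi} to get uniform equicontinuity, establish continuity of the limit $\varphi^*(\cdot,i,z)$, and then combine monotonicity with Dini's theorem to upgrade pointwise to uniform convergence. The only cosmetic difference is that the paper routes through Arzel\`a--Ascoli (extracting a uniformly convergent subsequence whose limit must coincide with the pointwise limit $\varphi^*$, thereby exhibiting its continuity), whereas you obtain continuity of $\varphi^*$ more directly by passing to the limit in the uniform Lipschitz estimate; both are valid and rest on the same two ingredients.
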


\begin{proof}
By Proposition~\ref{prop:Vnmonotone00}, Lemma~\ref{lem:boundfordphi}, and Azel\`a-Ascoli's Theorem, we have that $(\varphi_n(\cdot,i,z))_{n\geq i}$ contains an uniformly convergent subsequence. Moreover, Proposition~\ref{prop:Vnmonotone00} and \eqref{eq:varphistar} yield that $\varphi_n(t,i,z)$ (decreasingly) converges to $\varphi^*(t,i,z)$ uniformly in $t\in[0,T]$ as $n\to\infty$.
\end{proof}

\begin{lemma}\label{lem:phinlobnd}
Let $n\in\Zx_+$. Consider the following linear system: for $(t,i,z)\in(0,T]\times D_n^0\times{\cal S}$,
\begin{align}\label{eq:phin}
\frac{\partial\phi_n(t,i,z)}{\partial t}=&(q_{ii}-C(i,z))\phi_n(t,i,z)+\sum_{l\neq i,1\leq l\leq n}q_{il}\phi_n(t,l,z),\nonumber\\
\phi_n(0,i,z)=&1,
\end{align}
where $C(i,z)$ is given by \eqref{estm2}. Then, there exists a measurable function $\phi^*(t,i,z)$ such that $\phi_n(t,i,z)\nearrow\phi^*(t,i,z)$ as $n\to\infty$ for each fixed $(t,i,z)$. Moreover, it holds that $0<\phi_n(T-t,i,z)\leq\varphi_n(t,i,z)\leq1$ for $(t,i,z)\in[0,T]\times D_n^0\times{\cal S}$.
\end{lemma}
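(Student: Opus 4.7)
The plan proceeds in four steps.

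\textbf{Step 1: Positivity of $\phi_n$.} For each fixed $z\in\mathcal{S}$, the system \eqref{eq:phin} is a linear ODE with constant coefficient matrix $M_n^{(z)}$ whose $(i,l)$-entry is $q_{il}$ for $l\neq i$, $1\le l\le n$, and $(q_{ii}-C(i,z))$ on the diagonal, with all other off-diagonal entries equal to $0$ (including row $i=0$, which decouples and yields $\phi_n(t,0,z)=e^{-C(0,z)t}>0$). Since $q_{il}\ge 0$ for all $l\neq i$ in $D_n^0$, $M_n^{(z)}$ has non-negative off-diagonal entries, and the initial data $\phi_n(0,\cdot,z)=e_{n+1}\gg 0$. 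By Lemma~\ref{lem:sol-hjben2}, $\phi_n(t,\cdot,z)\gg 0$ for all $t\in[0,T]$.

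\textbf{Step 2: The key comparison $\phi_n(T-t,i,z)\le\varphi_n(t,i,z)$.} Set $\tilde\varphi_n(t,i,z):=\varphi_n(T-t,i,z)$, so that $\tilde\varphi_n(0,i,z)=1=\phi_n(0,i,z)$. Using Proposition~\ref{prop:Vnmonotone00} and adding and subtracting $C(i,z)\tilde\varphi_n(t,i,z)$ in Eq.~\eqref{eq:dpe5}, I can rewrite
\begin{align*}
\frac{\partial \tilde\varphi_n(t,i,z)}{\partial t}=&\,(q_{ii}-C(i,z))\tilde\varphi_n(t,i,z)+\sum_{l\neq i,1\le l\le n}q_{il}\tilde\varphi_n(t,l,z)\\
&+\sum_{l>n}q_{il}+\inf_{\pi\in U}\hat{H}\!\left(\pi;i,z,(\tilde\varphi_n(t,i,z^j);\,j=0,\ldots,N)\right),
\end{align*}
with $\hat H=\tilde H+C(\cdot,z)f(z)$ as in \eqref{eq:hatH}. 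By the very definition of $C(i,z)$ in \eqref{estm2}, the coefficient of $f(z)$ in $\hat H$ is non-negative for every $\pi\in U$; combined with $\tilde\varphi_n\ge 0$ and $\lambda_j\ge 0$, this forces $\inf_{\pi\in U}\hat H\ge 0$. Consequently, the difference $u:=\tilde\varphi_n-\phi_n$ (at fixed $z$) satisfies a linear ODE with the same Metzler matrix $M_n^{(z)}$ but an additional non-negative forcing term $\sum_{l>n}q_{il}+\inf_{\pi\in U}\hat H\ge 0$, and vanishing initial data. Applying the comparison Lemma~\ref{comparison} (with the type-$K$ linear drift $f(x)=M_n^{(z)}x$ and the non-negative perturbation $\tilde f$; or, alternatively, Duhamel's formula combined with the fact that $e^{M_n^{(z)}t}$ has non-negative entries since $M_n^{(z)}$ is Metzler) yields $u\ge 0$, i.e., $\phi_n(t,i,z)\le \tilde\varphi_n(t,i,z)=\varphi_n(T-t,i,z)$. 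Combined with $\varphi_n\le 1$ from Proposition~\ref{prop:Vnmonotone00}, this gives the required two-sided estimate.

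\textbf{Step 3: Monotonicity $\phi_n\le \phi_{n+1}$.} At $i=0$ both $\phi_n$ and $\phi_{n+1}$ solve the same decoupled scalar ODE, so they coincide. For $1\le i\le n$, the component $\phi_{n+1}(\cdot,i,z)$ obeys the same equation as $\phi_n(\cdot,i,z)$ plus the extra term $q_{i,n+1}\phi_{n+1}(t,n+1,z)$, which is non-negative by Step~1 applied at level $n+1$ and $q_{i,n+1}\ge 0$. The same Metzler/Duhamel (or Lemma~\ref{comparison}) argument as in Step~2, applied to the restriction of $\phi_{n+1}$ to indices $\le n$ versus $\phi_n$ (identical initial data), yields $\phi_n(t,i,z)\le \phi_{n+1}(t,i,z)$ for all $(t,i,z)\in[0,T]\times D_n^0\times\mathcal S$.

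\textbf{Step 4: Existence of the monotone limit.} By Step~3 the sequence $(\phi_n(t,i,z))_{n\ge i}$ is monotonically non-decreasing in $n$, and by Step~2 it is uniformly bounded by $1$. Hence it converges pointwise to some measurable limit $\phi^*(t,i,z)$, completing the proof. The main technical obstacle is Step~2: the forcing term $\inf_{\pi\in U}\hat H$ depends on $\tilde\varphi_n(t,i,z)$ through the $j=0$ slot and is not globally Lipschitz, so one must either work with the linearised equation for the difference $u$ (which is what the Metzler-matrix/Duhamel route accomplishes) or invoke Lemma~\ref{comparison} after confirming that all relevant quantities remain in the bounded range $[0,1]$ where Lipschitz continuity holds.
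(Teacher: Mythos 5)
Your proposal is correct and follows essentially the same route as the paper's proof: positivity of $\phi_n$ via Lemma~\ref{lem:sol-hjben2}, the comparison $\phi_n(T-t,i,z)\le\varphi_n(t,i,z)$ by rewriting the DPE for $g_n:=\varphi_n(T-\cdot,\cdot,\cdot)$ in the form ``Metzler linear part plus non-negative remainder'' and invoking Lemma~\ref{comparison}, monotonicity in $n$ by the same comparison applied to the extra term $q_{i,n+1}\phi_{n+1}(t,n+1,z)\ge 0$, and then monotone convergence. One small point in your favor: in Step~2 you are explicit that the remainder $\sum_{l>n}q_{il}+\inf_{\pi\in U}\hat H$ should be treated as a known non-negative function of $t$ (so that $\tilde f$ is trivially Lipschitz in $x$, or alternatively one uses Duhamel with the non-negative Metzler semigroup), whereas the paper keeps the dependence of $\hat H$ on the $j=0$ slot $x$ and applies Lemma~\ref{comparison} without remarking that the Lipschitz hypothesis must then be verified on the relevant range; your observation that the needed bounds hold because all iterates stay in $[0,1]$ is a legitimate way to close that gap. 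The statement in Step~2 that ``the difference $u$ satisfies a linear ODE'' with that forcing is correct only under this same reading of the forcing term as a prescribed function of $t$, which you do in fact adopt.
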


\begin{proof}
Let $(t,i,z)\in[0,T]\times D_n^0\times{\cal S}$ and define $g_n(t,i,z):=\varphi_n(T-t,i,z)$. It follows from Eq.~\eqref{estm1} that $g_n(\cdot,i,z)\in C^1((0,T])\cap C([0,T])$ for each fixed $(i,z)$ and satisfies that
\begin{align}\label{g_n}
\frac{\partial g_n(t,i,z)}{\partial t}=&(q_{ii}-C(i,z))g_n(t,i,z)+\sum_{l\neq i,1\leq l\leq n}q_{il}g_n(t,l,z)+\sum_{l>n}q_{il}\nonumber\\
&+Q(t,i,z,g_n(t,i,z)),\nonumber\\
g_n(0,i,z)=&1,
\end{align}
where $Q(t,i,z,x):=\inf_{\pi\in U}\hat{H}\left(\pi;i,z,x,g_n(t,i,z^1),\ldots,g_n(t,i,z^N)\right)$ for $x\in[0,\infty)$. We have from \eqref{eq:hatH} that $Q(t,i,z,x)\geq0$ for all $(t,x)\in[0,T]\times[0,\infty)$. Then $\sum_{l>n}q_{il}+Q(t,i,z,x)\geq0$. Note that the linear part of Eq.~\eqref{g_n} satisfies the $K$-type condition. Then, using the comparison result of Lemma~\ref{comparison}, it shows that $g_n(t,i,z)\geq\phi_n(t,i,z)$, and hence $\varphi_n(t,i,z)\geq\phi_n(T-t,i,z)$. Moreover, we deduce from Lemma \ref{lem:sol-hjben2} that $\phi_n(t,i,z)>0$. By virtue of Eq.~\eqref{eq:phin}, we have that $\phi_{n+1}(t,i,z)$ with $(t,i,z)\in[0,T]\times D_{n+1}^0\times{\cal S}$ satisfies that
\begin{equation}\label{eq:phin+1}
\begin{split}
\frac{\partial\phi_{n+1}(t,i,z)}{\partial t}=&(q_{ii}-C(i,z))\phi_{n+1}(t,i,z)+\sum_{l\neq i,1\leq l\leq n}q_{il}\phi_{n+1}(t,l,z)\\
&+q_{i,n+1}\phi_{n+1}(t,n+1,z),\nonumber\\
\phi_{n+1}(0,i,z)=&1.
\end{split}
\end{equation}
Because $q_{i,n+1}\phi_{n+1}(t,n+1,z)\geq0$ for $i\in D_n^0$, Lemma~\ref{comparison} shows that $\phi_{n+1}(t,i,z)\geq\phi_n(t,i,z)$ for all $(t,i,z)\in[0,T]\times{D_n^0}\times{\cal S}$. Therefore, there exists a measurable function $\phi^*(t,i,z)$ such that $\phi_n(t,i,z)\nearrow\phi^*(t,i,z)$ as $n\to\infty$ for each fixed $(t,i,z)\in[0,T]\times\Zx_+\times{\cal S}$.
\end{proof}

\begin{lemma}\label{lem:lobndphistar}
Let $(i,z)\in\Zx_+\times{\cal S}$. Then, there exists a positive constant $\delta=\delta(i,z)$ such that $\varphi^*(t,i,z)>\delta$ for all $t\in[0,T]$.
\end{lemma}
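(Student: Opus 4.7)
The plan is to transfer the strict positivity of a single truncation $\phi_{n_0}$ up through the chain of bounds already established. From Lemma~\ref{lem:phinlobnd}, one has $\varphi_n(t,i,z)\geq\phi_n(T-t,i,z)$ for every $n\geq i$ and $t\in[0,T]$, while $\phi_n\nearrow\phi^*$ pointwise. Letting $n\to\infty$ and using \eqref{eq:varphistar} together with the monotonicity $\phi_n\nearrow\phi^*$, I obtain
\begin{equation*}
\varphi^*(t,i,z)\;\geq\;\phi^*(T-t,i,z)\;\geq\;\phi_{n_0}(T-t,i,z)
\end{equation*}
for any fixed $n_0\geq i$. It therefore suffices to exhibit, for some such $n_0$, a strictly positive lower bound for $\phi_{n_0}(\cdot,i,z)$ uniform in $[0,T]$.

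I would fix $n_0=i$. For the fixed default state $z\in{\cal S}$, the system~\eqref{eq:phin} is a linear autonomous ODE on $\R^{n_0}$ (indexed by $\{1,\ldots,n_0\}$) with initial datum $e_{n_0}\gg0$ and coefficient matrix $B_{n_0}$ whose off-diagonal entries are $q_{il}\geq0$, since $Q$ is the generator of a Markov chain. Applying Lemma~\ref{lem:sol-hjben2} to this system then gives $\phi_{n_0}(s,i,z)>0$ for every $s\in[0,T]$. Because $s\mapsto\phi_{n_0}(s,i,z)$ is $C^1$ on the compact interval $[0,T]$, its minimum is attained and strictly positive, and picking any $\delta(i,z)$ strictly smaller than this minimum yields $\varphi^*(t,i,z)>\delta(i,z)$ uniformly in $t\in[0,T]$.

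I do not anticipate a genuine obstacle, as the heavy lifting is already encapsulated in Lemmas~\ref{lem:phinlobnd} and~\ref{lem:sol-hjben2} together with the monotone convergence $\phi_n\nearrow\phi^*$. The only small point to track is that the diagonal entries $(B_{n_0})_{ii}=q_{ii}-C(i,z)$ of the coefficient matrix may be negative, but this is harmless since Lemma~\ref{lem:sol-hjben2} constrains only the off-diagonal entries; and that one must pick $n_0\geq i$ so that $i$ lies in the truncated state space, for which the choice $n_0=i$ is the cleanest.
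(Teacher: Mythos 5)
Your proof is correct and follows essentially the same route as the paper: pass the bound $\varphi_n(t,i,z)\geq\phi_n(T-t,i,z)$ from Lemma~\ref{lem:phinlobnd} to the limit, drop down via monotonicity to $\phi_i(T-t,i,z)$, and invoke strict positivity plus continuity on the compact interval $[0,T]$. The only cosmetic difference is that you re-derive the positivity of $\phi_{n_0}$ from Lemma~\ref{lem:sol-hjben2}, whereas the paper simply cites that this was already established inside the proof of Lemma~\ref{lem:phinlobnd}.
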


\begin{proof}
From Lemma~\ref{lem:phinlobnd}, we have that $\varphi_n(t,i,z)\geq\phi_n(T-t,i,z)$. Letting $n\rightarrow\infty$ and using Lemma~\ref{lem:unfmconforphi}, it follows that $\varphi^*(t,i,z)\geq\phi^*(T-t,i,z)\geq\phi_i(T-t,i,z)$. As $\phi_i(t,i,z)>0$ is continuous in $t\in[0,T]$, there exists a positive constant $\delta=\delta(i,z)$ such that $\inf_{t\in[0,T]}\phi_i(t,i,z)\geq\delta$. Therefore $\varphi^*(t,i,z)\geq\delta$ for all $t\in[0,T]$.
\end{proof}

We can finally conclude the proof of Theorem~\ref{thm:existD} using all previous results.

\noindent{\it Proof of Theorem~\ref{thm:existD}.}\quad We first prove that there exists a measurable function $\tilde{\varphi}(t,i,z)$ on $(t,i,z)\in[0,T]\times\Zx_+\times{\cal S}$ such that $\lim_{n\to\infty}\frac{\partial\varphi_n(t,i,z)}{\partial t}=\tilde{\varphi}(t,i,z)$ for $(t,i,z)\in[0,T]\times\Zx_+\times{\cal S}$. In fact, note that for $(t,i,z)\in[0,T]\times D_n^0\times{\cal S}$, $0\leq\varphi_{n+1}(t,i,z)\leq\varphi_{n}(t,i,z)\leq1$ for $n\in\Zx_+$. Then
\begin{align*}
\sum_{l\neq i,1\leq l\leq n}q_{il}\varphi_n(t,l,z)+\sum_{l>n}q_{il}\geq\sum_{l\neq i,1\leq l\leq n+1}q_{il}\varphi_{n+1}(t,l,z)+\sum_{l>n+1}q_{il}.
\end{align*}
This yields from \eqref{eq:varphistar} that $q_{ii}\varphi_n(t,i,z)\nearrow q_{ii}\varphi^*(t,i,z)$ as $n\to\infty$, and
\begin{align}\label{eq:conver11}
\sum_{l\neq i,1\leq l\leq n}q_{il}\varphi_n(t,l,z)+\sum_{l>n}q_{il}\searrow&\sum_{l\neq i,l\geq 1}q_{il}\varphi^*(t,l,z).
\end{align}
On the other hand, let $\Phi(x):=\inf_{\pi\in U}\tilde{H}(\pi;i,z,x)$ for $x\in[0,\infty)^{N+1}$. Then $\Phi(x):[0,\infty)^{N+1}\to\R$ is concave in every component of $x$.
Let $x^*(t):=(\varphi^*(t,i,z^j);\ j=0,1,\ldots,N)$ and $x^{(n)}(t):=(\varphi_n(t,i,z^j);\ j=0,1,\ldots,N)$ for $n\in\Zx_+$. Then $0\leq x^*(t)\leq x^{(n)}(t)$ for $n\in\Zx_+$ and $\lim_{n\to\infty}x^{(n)}(t)=x^*(t)$ using \eqref{eq:varphistar}. Moreover, Lemma~\ref{lem:lobndphistar} gives that $\delta\ll x^*\ll2$. It follows from Lemma~\ref{lem:conconver} that $\lim_{n\to\infty}\Phi(x^{(n)}(t))=x^*(t)$. Thus, by virtue of Eq.~\eqref{eq:dpe5}, as $n\to\infty$, one has
\begin{align}\label{eq:expresstildevarphi}
&\frac{\partial\varphi_n(t,i,z)}{\partial t}\to\tilde{\varphi}(t,i,z):=-q_{ii}\varphi^*(t,i,z)-\sum_{l\neq i,l\geq 1}q_{il}\varphi^*(t,l,z)-\Phi\left(x^*(t)\right).
\end{align}

We next prove that for $(i,z)\in\Zx_+\times{\cal S}$, $\frac{\partial\varphi_n(t,i,z)}{\partial t}\rightrightarrows\tilde{\varphi}(t,i,z)$ in $t\in[0,T]$ as $n\to\infty$. Here $\rightrightarrows$ denotes the uniform convergence. Eq.~\eqref{estm1} together with \eqref{eq:expresstildevarphi} first give that, for $(t,i,z)\in[0,T]\times D_n^0\times{\cal S}$,
\begin{align}\label{eq:I-II-III}
\frac{\partial \varphi_n(t,i,z)}{\partial t}-\tilde{\varphi}(t,i,z)&=\sum_{i=1}^3 B_i^{(n)}(t,i,z),
\end{align}
where
\begin{align}\label{eq:Bn}
B_1^{(n)}(t,i,z) &:= -q_{ii}(\varphi_n(t,i,z)-\varphi^*(t,i,z))+C(i,z)(\varphi_n(t,i,z)-\varphi^*(t,i,z)),\nonumber\\
B_2^{(n)}(t,i,z) &:= \sum_{l\neq i,1\leq l\leq n}q_{il}(\varphi_n(t,l,z)-\varphi^*(t,l,z))+\sum_{l>n}q_{il}(1-\varphi^*(t,i,z)),\nonumber\\
B_3^{(n)}(t,i,z) &:= \Phi(x^{(n)}(t))-\Phi(x^*(t)).
\end{align}
Here $\Phi(x):=\inf_{\pi\in U}\tilde{H}(\pi;i,z,x)$ for $x\in[0,\infty)^{N+1}$, $x^{(n)}(t):=(\varphi_n(t,i,z^j);\ j=0,1,\ldots,N)$, and $x^{*}(t):=(\varphi^*(t,i,z^j);\ j=0,1,\ldots,N)$.

Lemma~\ref{lem:unfmconforphi} guarantees that $\varphi_n(t,i,z)\rightrightarrows\varphi^*(t,i,z)$ in $t\in[0,T]$ as $n\to\infty$, and hence $B_1^{(n)}(t,i,z)\rightrightarrows0$ in $t\in[0,T]$ as $n\to\infty$.
On the other hand, for any small $\varepsilon>0$, since $\sum_{l\neq i}q_{il}<\infty$, there exists $n_1\geq1$ such that $\sum_{l>n_1,l\neq i}q_{il}<\frac\varepsilon2$. Note that, for all $1\leq l\leq n_1$, $\varphi_n(t,l,z)\rightrightarrows\varphi^*(t,l,z)$ in $t\in[0,T]$ as $n\to\infty$, there exists
$n_2\geq1$ such that $\sup_{t\in[0,T]}\sum_{l\neq i,1\leq l\leq n_1}q_{il}(\varphi_n(t,l,z)-\varphi^*(t,l,z))\leq\frac\varepsilon2$ for $n>n_2$. Hence, for all $n>n_1\vee n_2$, noting that $0\leq\varphi^*(t,i,z)\leq\varphi_n(t,i,z)\leq1$, it holds that
\begin{equation}\label{II}
\begin{split}
|B_2^{(n)}(t,i,z)|=&\sum_{l\neq i,1\leq l\leq n_1}q_{il}(\varphi_n(t,l,z)-\varphi^*(t,l,z))+\sum_{l\neq i,n_1<l<n}q_{il}(\varphi_n(t,l,z)-\varphi^*(t,l,z))\\
&+\sum_{l>n}q_{il}(1-\varphi^*(t,i,z))\leq\frac\varepsilon2+\sum_{l>n_1}q_{il}\leq \frac\varepsilon2+\frac\varepsilon2=\varepsilon.
\end{split}
\end{equation}
Thus, we deduce that $B_2^{(n)}(t,i,z)\rightrightarrows0$ in $t\in[0,T]$ as $n\to\infty$. We can have from Lemma~\ref{lem:conbound} that for all $x\in\mathds{R}^{N+1}$ satisfying $0\leq x\leq 2$, $0\leq\Phi(x)\leq C$ for some constant $C>0$. As for $j=0,1,\ldots,N$, $\varphi_n(t,i,z^j)\rightrightarrows\varphi^*(t,i,z^j)$ in $t\in[0,T]$ as $n\rightarrow\infty$, Lemma \ref{lem:lobndphistar} yields that there exists a constant $\delta>0$ such that $1\geq\varphi_n(t,i,z^j)\geq\varphi^*(t,i,z^j)\geq\delta>0$ for all $t\in[0,T]$. Further, there exists $\lambda^j_n(t)\in[0,1]$ such that $\varphi_n(t,i,z^j)=(1-\lambda^j_n(t))\varphi^*(t,i,z^j)+2\lambda^j_n(t)$. In turn,  $\lambda^j_n(t)=\frac{\varphi_n(t,i,z^j)-\varphi^*(t,i,z^j)}{2-\varphi^*(t,i,z^j)}$, and hence for all $j=0,1,\ldots,N$, $\lambda^j_n(t)\rightrightarrows0$ in $t\in[0,T]$ as $n\rightarrow\infty$. Similar to that in \eqref{concaveexpansion1}, we can derive that
\begin{align}\label{eq:infdiffer}
\Phi(x^{(n)}(t))\geq \Phi(x^*(t))\prod_{j=0}^N(1-\lambda^j_n(t))+\Lambda^{(n)}_1(t).
\end{align}
Similar to the first term in the r.h.s. of the inequality \eqref{eq:infdiffer}, every term in $\Lambda^{(n)}_1(t)$ above has $N+1$ multipliers and at least one of these multipliers is of the form $\lambda^j_n(t)$, while other multipliers are nonnegative and bounded by $1\vee C$. Due to the fact that $\lambda^j_n(t)\rightrightarrows0$ in $t\in[0,T]$ as $n\to\infty$, we have that $\Lambda^{(n)}_1(t)\rightrightarrows0$ in $t\in[0.T]$ as $n\to\infty$. Moreover, it follows from \eqref{eq:infdiffer} that
\begin{align}\label{eq:infdiffer2}
&\left(1-\prod_{j=0}^N(1-\lambda^j_n(t))\right)\Phi(x^*(t))-\Lambda^{(n)}_1(t)\geq\Phi(x^*(t))-\Phi(x^{(n)}(t))=-B_3^{(n)}(t,i,z).
\end{align}
It is not difficult to see that the l.h.s. of the inequality \eqref{eq:infdiffer2} tends to $0$ uniformly in $t\in[0,T]$ as $n\rightarrow\infty$.
On the other hand, there exists $\tilde{\lambda}^j_n(t)\in[0,1]$ such that $\varphi^*(t,i,z^j)=(1-\tilde{\lambda}^j_n(t))\varphi_n(t,i,z^j)+0\cdot\tilde{\lambda}^j_n(t)$, and in turn $\tilde{\lambda}^j_n(t)=\frac{\varphi_n(t,i,z^j)-\varphi^*(t,i,z^j)}{\varphi_n(t,i,z^j)}\rightrightarrows0$ in $t\in[0,T]$ as $n\to\infty$, since $\varphi_n(t,i,z^j)\geq\delta>0$. So that
\begin{align}\label{eq:infdiffer1}
&\left(1-\prod_{j=0}^N(1-\tilde{\lambda}^j_n(t))\right)\Phi(x^{(n)}(t))-\Lambda^{(n)}_2(t)\geq\Phi(x^{(n)}(t))-\Phi(x^{*}(t))=B_3^{(n)}(t,i,z),
\end{align}
where the form of $\Lambda^{(n)}_2(t)$ is similar to that of $\Lambda^{(n)}_1(t)$, but it is related to $\tilde{\lambda}^j_n(t)$ for $j=0,1,\ldots,N$.
As in \eqref{eq:infdiffer2}, the l.h.s. of the inequality~\eqref{eq:infdiffer1} tends to $0$ uniformly in $t\in[0,T]$ as $n\to\infty$.
Hence, it follows from \eqref{eq:infdiffer2} and \eqref{eq:infdiffer1} that $B_3^{(n)}(t,i,z)\rightrightarrows0$ in $t\in[0,T]$ as $n\rightarrow\infty$.
Thus, we proved that for $(i,z)\in\Zx_+\times{\cal S}$, $\frac{\partial\varphi_n(t,i,z)}{\partial t}\rightrightarrows\tilde{\varphi}(t,i,z)$ in $t\in[0,T]$
as $n\to\infty$.

We at last show that, for $(i,z)\in\Zx_+\times{\cal S}$, $\varphi^*(T,i,z)-\varphi^*(t,i,z)=\int_t^T\tilde{\varphi}(s,i,z)ds$ for $t\in[0,T]$.
For $n\in\Zx_+$, it follows from Proposition~\ref{prop:Vnmonotone00} that $\varphi_n(\cdot,i,z)\in C^1([0,T))\cap C([0,T])$ for $(i,z)\in D_n^0\times{\cal S}$. This implies that
\begin{equation}\label{differ}
\begin{split}
\varphi^*(T,i,z)-\varphi^*(t,i,z)&=\varphi^*(T,i,z)-\varphi^*(t,i,z)-(\varphi_n(T,i,z)-\varphi_n(t,i,z))\\
&\quad+\int_t^T\frac{\partial\varphi_n(s,i,z)}{\partial t}(s,i,z)ds.
\end{split}
\end{equation}
Lemma~\ref{lem:unfmconforphi} ensures that $\varphi(T,i,z)-\varphi(t,i,z)-(\varphi_n(T,i,z)-\varphi_n(t,i,z))\to0$ as $n\to\infty$.
From Lemma~\ref{lem:boundfordphi} and the uniform convergence of $\frac{\partial\varphi_n(t,i,z)}{\partial t}$ to $\tilde{\varphi}(t,i,z)$ in $t\in[0,T]$, it follows that $\tilde{\varphi}(t,i,z)$ is continuous in $t\in[0,T]$ and
$\int_t^T\frac{\partial\varphi_n(s,i,z)}{\partial t}ds\to\int_t^T\tilde{\varphi}(s,i,z)ds$ as $n\to\infty$.
Moreover, as $\varphi^*(T,i,z)-\varphi^*(t,i,z)=\int_t^T\tilde{\varphi}(s,i,z)ds$ for each $t\in[0,T]$, $\frac{\partial\varphi^*(t,i,z)}{\partial t}=\tilde{\varphi}(t,i,z)$ holds for all $t\in[0,T]$. Hence, $\varphi^*(t,i,z)$ is indeed a classical solution of the original DPE \eqref{eq:dpe3}. \hfill$\Box$\\

The verification argument for the case of countable state space $\Zx_+=\{1,2,\ldots\}$ is presented in the next key proposition. Before it, we provide some mild conditions on model coefficients:
\begin{itemize}
  \item[({C.1})] There exist positive constants $c_1$, $c_2$, $\delta$ and $K$ such that $c_1\|\xi\|^2\leq\xi^\top\sigma(i)\sigma(i)^\top\xi\leq c_2\|\xi\|^2$ for all $\xi\in\R^N$ and $i\in\Zx_+$, $\delta\leq\lambda(i,z)\leq K$ for all $(i,z)\in\Zx_+\times{\cal S}$, and $r(i)+\|\mu(i)\|\leq K$ for all $i\in\Zx_+$.
\end{itemize}
The first condition on $\sigma(i)$ is actually related to the uniformly elliptic property of the volatility matrix $\sigma(i)$ of stocks.
\begin{proposition}\label{prop:verivalue}
Let the condition {\rm(C.1)} hold. Let $\varphi^*(t,i,z)$ with $(t,i,z)\in[0,T]\times\Zx_+\times{\cal S}$ be given by \eqref{eq:varphistar}. Then, for all $(t,i,z)\in[0,T]\times\Zx_+\times{\cal S}$,
\begin{align}\label{veriphistar}
\varphi^*(t,i,z)=\inf_{\tilde{\pi}\in\tilde{\cal U}}\Ex_{t,i,z}^{\tilde{\pi},\theta}\left[\exp\left(\frac{\theta}{2}\int_t^TL(\tilde{\pi}(s);Y(s),Z(s))ds\right)\right].
\end{align}
\end{proposition}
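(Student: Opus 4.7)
The plan is to mirror the two-sided argument of Proposition~\ref{prop:verithemfinite}, but now with $\varphi^*$ replacing the truncated $\varphi_n$. Since Theorem~\ref{thm:existD} already shows that $\varphi^*$ is a classical solution of DPE~\eqref{eq:dpe3} with terminal value $1$, and Lemma~\ref{lem:lobndphistar} together with Proposition~\ref{prop:Vnmonotone00} give the bounds $\delta(i,z)\leq\varphi^*(t,i,z)\leq1$, we have enough regularity to run Itô's formula on $\varphi^*(u,Y(u),Z(u))$. The overall structure is therefore: (i) derive the upper bound $\varphi^*(t,i,z)\leq\inf_{\tilde{\pi}\in\tilde{\cal U}}\Ex_{t,i,z}^{\tilde{\pi},\theta}[\exp(\frac{\theta}{2}\int_t^T L\,ds)]$ by applying Itô's formula to an arbitrary admissible $\tilde{\pi}$; (ii) exhibit a candidate $\tilde{\pi}^*$ for which equality holds.

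For the upper bound, first I would fix $\tilde{\pi}\in\tilde{\cal U}$ and apply Itô's formula to the process $\varphi^*(u,Y(u),Z(u))\exp(\frac{\theta}{2}\int_t^u L_m(\tilde{\pi}(v);Y(v),Z(v))dv)$ under $\Px^{\tilde{\pi},\theta}$, where $L_m=L\wedge m$ is the same truncation used in the proof of Proposition~\ref{prop:verithemfinite}. The drift produced by Itô's formula splits into (a) $\partial_t\varphi^*+Q\varphi^*+\tilde{H}(\tilde{\pi};\cdot)\varphi^*$, which is $\geq 0$ by the infimum form of DPE~\eqref{eq:dpe3} satisfied by $\varphi^*$, and (b) a nonpositive term proportional to $L_m-L$. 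Combined with the bound $0<\varphi^*\leq 1$, the same dominated convergence argument as in \eqref{eq:itoveri0}--\eqref{eq:itoveri} gives the desired inequality after sending $m\to\infty$. The only subtlety relative to the finite-state case is that $Q$ is now an infinite matrix: here the fact that $\sum_l q_{il}=0$ with $\varphi^*\in(0,1]$ makes the series $\sum_{l\neq i}q_{il}(\varphi^*(u,l,Z(u))-\varphi^*(u,i,Z(u)))$ absolutely convergent, so the martingale part of the Markov-chain integral is genuinely a martingale.

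For the lower bound, I would define the candidate feedback control
\begin{align*}
\pi^*(t,i,z):={\rm diag}((1-z_j)_{j=1}^N)\argmin_{\pi\in U}\tilde{H}\bigl(\pi;i,z,(\varphi^*(t,i,z^j);\,j=0,1,\ldots,N)\bigr),
\end{align*}
and set $\tilde{\pi}^*(t):=\pi^*(t,Y(t-),Z(t-))$. Strict convexity of $\tilde{H}$ in $\pi$ (used earlier in Proposition~\ref{coro:optimal-strategy}) makes $\pi^*$ well defined and continuous in $t$. Then I would verify that $\tilde{\pi}^*\in\tilde{\cal U}$ by imitating the first-order argument from Proposition~\ref{prop:verithemfinite}: writing out $(1-\tilde{\pi}^*_j(t))^{-\theta/2-1}$ as in \eqref{eq:pistarbelow}, the condition {\rm(C.1)} bounds $\mu,r,\sigma,\lambda$ uniformly in $i\in\Zx_+$, Proposition~\ref{prop:Vnmonotone00} gives $\varphi^*\leq 1$ in the numerator, and Lemma~\ref{lem:lobndphistar} provides the strictly positive lower bound $\varphi^*(t,Y(t-),Z^j(t-))\geq\delta(Y(t-),Z^j(t-))$ in the denominator, so $\tilde{\pi}^*$ is separated from $1$ along $\Px$-a.s.\ sample paths. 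This verifies the generalized Novikov condition \eqref{eq:integral-cond} and therefore martingality of $\Gamma^{\tilde{\pi}^*,\theta}$. Finally, repeating the Itô argument of step (i) with $\tilde{\pi}=\tilde{\pi}^*$ gives equality in the DPE drift, so that \eqref{veriphistar} holds with equality attained at $\tilde{\pi}^*$.

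The main obstacle I expect is the uniform bounds needed for admissibility of $\tilde{\pi}^*$. In the finite-state verification the solution $\varphi$ automatically had a strictly positive lower bound $\varepsilon^{(k)}$ independent of the state, whereas here Lemma~\ref{lem:lobndphistar} only supplies an $i$-dependent constant $\delta(i,z)$; since $Y$ can visit arbitrarily large states during $[0,T]$, care is needed to ensure that the integrand in Novikov's condition is integrable. The remedy is to localize along a sequence of stopping times $\tau_n^0$ (as in \eqref{eq:Yn}), verify admissibility on each $\{t\leq\tau_n^0\}$ using the uniform bounds from {\rm(C.1)} and Lemma~\ref{lem:lobndphistar} restricted to $Y(\cdot)\leq n$, and then pass to the limit using the fact that the chain leaves $D_n$ with vanishing probability on bounded horizons. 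A secondary nuisance is handling the passage from $L_m$ to $L$; this follows the same dominated convergence estimate using the explicit lower bound $L\geq -C_L$ in \eqref{eq:L0}, which is uniform in $i\in\Zx_+$ precisely because of {\rm(C.1)}.
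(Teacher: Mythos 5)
Your step (i)---applying It\^{o}'s formula to $\varphi^*(u,Y(u),Z(u))\exp(\frac{\theta}{2}\int_t^u L_m\,dv)$ for an arbitrary $\tilde{\pi}\in\tilde{\cal U}$, using the infimum form of the DPE to get a nonnegative drift, and passing $m\to\infty$ by dominated convergence---is exactly the paper's argument for the direction $\inf_{\tilde{\pi}\in\tilde{\cal U}}\Ex^{\tilde{\pi},\theta}_{t,i,z}[\cdots]\geq\varphi^*(t,i,z)$, and it is correct as stated (the absolute convergence of $\sum_{l\neq i}q_{il}(\cdot)$ under $\varphi^*\in(0,1]$ and $\sum_l q_{il}=0$ is indeed the right justification for the martingale part). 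However, your step (ii) is a genuinely different route from the paper's, and it has a real gap. You propose to exhibit $\tilde{\pi}^*$ as the optimizer, which requires $\tilde{\pi}^*\in\tilde{\cal U}$, i.e.\ that $\tilde{\pi}^*$ is uniformly bounded away from $1$. Via the first-order condition \eqref{eq:pistarbelow}, this hinges on a \emph{uniform-in-$i$} strictly positive lower bound on $\varphi^*(t,i,z)$, but Lemma~\ref{lem:lobndphistar} delivers only an $(i,z)$-dependent $\delta(i,z)$, and since $Y$ can visit arbitrarily large states on $[0,T]$ this is not enough. The paper proves the uniform bound $\varphi^*\geq e^{-\frac{\theta}{2}C(T-t)}$ in Proposition~\ref{prop:admiss} \emph{by invoking Proposition~\ref{prop:verivalue} itself} together with the uniform lower bound $L\geq -C$ from (C.1), so your route runs into a circular dependency. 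Your proposed localization remedy does not straightforwardly repair this: the localized strategy $\tilde{\pi}^*\mathds{1}_{\{\cdot\leq\tau_n\}}$ is admissible, but on $\{u>\tau_n\}$ it no longer achieves the pointwise infimum in the DPE, so the It\^{o} drift is only $\geq 0$ there and the argument again produces the inequality in the wrong direction; converting this into the needed $\varphi^*\geq\inf$ would require a quantitative error estimate (in the spirit of Lemma~\ref{lem:errorestimate}, which in the paper also depends on Proposition~\ref{prop:verivalue}) that you do not supply.

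The paper's proof avoids constructing an optimal control entirely in this direction. It uses the finite-state verification result (Proposition~\ref{prop:verithemfinite} combined with Lemma~\ref{lem:jn=tildeJn}) to get $\varphi_n(t,i,z)=\inf_{\tilde{\pi}\in\tilde{\cal U}_n}\Ex^{\tilde{\pi},\theta}_{t,i,z}[\exp(\frac{\theta}{2}\int_t^{T\wedge\tau^t_n}L\,ds)]$, picks an $\varepsilon$-optimal $\tilde{\pi}^\varepsilon\in\tilde{\cal U}_n$, extends it by zero after $\tau_n$ to obtain $\hat{\pi}^\varepsilon\in\tilde{\cal U}$, and then uses $L(0;i,z)=-r(i)\leq0$ to bound the integral over $[T\wedge\tau^t_n,T]$; letting $n\to\infty$ then $\varepsilon\to0$ gives $\varphi^*\geq\inf_{\tilde{\cal U}}$ without ever needing admissibility of $\tilde{\pi}^*$. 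This buys a clean proof in which $\tilde{\pi}^*$ and its admissibility can be handled \emph{afterward} (in Lemma~\ref{lem:approxpistar} and Proposition~\ref{prop:admiss}) with Proposition~\ref{prop:verivalue} available as an ingredient. If you want to keep your constructive approach for step (ii), you would first have to establish the uniform lower bound on $\varphi^*$ by some means independent of Proposition~\ref{prop:verivalue}; otherwise you should replace step (ii) with the paper's $\varepsilon$-optimal truncation argument.
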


\begin{proof}
From Proposition~\ref{prop:verithemfinite} and Lemma~\ref{lem:jn=tildeJn}, it follows that, for $n\in\Zx_+$,
\begin{align*}
\varphi_n(t,i,z)=&\inf_{\tilde{\pi}\in\tilde{\cal U}_n}\Ex_{t,i,z}^{\tilde{\pi},\theta}\left[\exp\left(\frac{\theta}{2}\int_t^TL(\tilde{\pi}(s);Y^{(n)}(s),Z(s))ds\right)\right]\nonumber\\
=&\inf_{\tilde{\pi}\in\tilde{\cal U}_n}\Ex_{t,i,z}^{\tilde{\pi},\theta}\left[\exp\left(\frac{\theta}{2}\int_t^{T\wedge \tau^t_n}L(\tilde{\pi}(s);Y(s),Z(s))ds\right)\right].
\end{align*}
Then, for any $\varepsilon>0$, there exists $\tilde{\pi}^\varepsilon\in\tilde{\mathcal{U}}_n$ such that
\begin{align}\label{eq:varphin+epsilon}
\varphi_n(t,i,z)+\varepsilon>\Ex_{t,i,z}^{\pi^\varepsilon,\theta}\left[\exp\left(\frac{\theta}{2}\int_t^{T\wedge \tau^t_n}L(\tilde{\pi}^\varepsilon(s);Y(s),Z(s))ds\right)\right].
\end{align}
Define $\hat{\pi}^\varepsilon(t):=\tilde{\pi}^\varepsilon(t)\mathds{1}_{\{t\leq\tau_n\}}$ for $t\in[0,T]$. Then, it holds that $\hat{\pi}^\epsilon\in\tilde{\mathcal{U}}$, and $\Gam^{\hat{\pi}^\varepsilon,\theta}(t,T)=\Gam^{\tilde{\pi}^\varepsilon,\theta}(t,T\wedge\tau^t_n)$ for $t\in[0,T]$. Also note that $L(0,i,z)=-r(i)\leq0$ for all $(i,z)\in\Zx_+\times{\cal S}$. Then, the inequality~\eqref{eq:varphin+epsilon} continues that
\begin{align}
\varphi_n(t,i,z)+\varepsilon>&\Ex_{t,i,z}^{\tilde{\pi}^\varepsilon,\theta}\left[\exp\left(\frac{\theta}{2}\int_t^{T\wedge \tau^t_n}L(\tilde{\pi}^\varepsilon(s);Y(s),Z(s))ds\right)\right]\nonumber\\
=&\Ex_{t,i,z}^{\hat{\pi}^\varepsilon,\theta}\left[\exp\left(\frac{\theta}{2}\int_t^{T\wedge \tau^t_n}L(\hat{\pi}^\varepsilon(s);Y(s),Z(s))ds\right)\right]\nonumber\\
\geq&\Ex_{t,i,z}^{\hat{\pi}^\varepsilon,\theta}\left[\exp\left(\frac{\theta}{2}\int_t^{T}L(\hat{\pi}^\varepsilon(s);Y(s),Z(s))ds\right)\right]\nonumber\\
\geq&\inf_{\tilde{\pi}\in\tilde{\cal U}}\Ex_{t,i,z}^{\tilde{\pi},\theta}\left[\exp\left(\frac{\theta}{2}\int_t^TL(\tilde{\pi}^\varepsilon(s);Y(s),Z(s))ds\right)\right].
\end{align}
By passing $n\to\infty$ and then $\varepsilon\to0$, we get
\begin{align}\label{phistaroninf}
\varphi^*(t,i,z)\geq\inf_{\tilde{\pi}\in\tilde{\cal U}}\Ex_{t,i,z}^{\tilde{\pi},\theta}\left[\exp\left(\frac{\theta}{2}\int_t^TL(\tilde{\pi}(s);Y(s),Z(s))ds\right)\right].
\end{align}
On the other hand, using Theorem~\ref{thm:existD} and Proposition~\ref{prop:verithemfinite}, $\varphi^*(t,i,z)$ is strictly positive and $\varphi^*(t,i,z)\leq\varphi_n(t,i,z)\leq1$ for all $n\geq1$. Then, under the condition (C.1), by applying a similar argument of the proof of \eqref{eq:itoveri},  we have that, for any $\tilde{\pi}\in\tilde{\cal U}$,
\begin{align*}
&\Ex_{t,i,z}^{\tilde{\pi},\theta}\left[\varphi^*(T,Y(T),Z(T))\exp\left(\frac{\theta}{2}\int_t^TL(\tilde{\pi}(u);Y(u),Z(u))du\right)\right]\geq\varphi^*(t,i,z).
\end{align*}
Because $\varphi(T,i,z)=1$ for all $(i,z)\in\Zx_+\times{\cal S}$, we deduce that
\begin{align}\label{infonphistar}
\inf_{\tilde{\pi}\in\tilde{\cal U}}\Ex_{t,i,z}^{\tilde{\pi},\theta}\left[\exp\left(\frac{\theta}{2}\int_t^TL(\tilde{\pi}(s);Y(s),Z(s))ds\right)\right]\geq\varphi^*(t,i,z).
\end{align}
The equality \eqref{veriphistar} therefore follows by combining \eqref{phistaroninf} and \eqref{infonphistar}, and the validity of the proposition is checked.
\end{proof}

Similar to that in Proposition~\ref{prop:verithemfinite}, we can construct a candidate optimal $\Gx$-predictable feedback strategy $\tilde{\pi}^*$ by, for $t\in[0,T]$,
\begin{align}\label{eq:optimaltildepis}
\tilde{\pi}^*(t)&:={\rm diag}\left((1-Z_j(t-))_{j=1}^N\right)\nonumber\\
&\quad\times\argmin_{\pi\in U}\tilde{H}\left(\pi;Y(t-),Z(t-),(\varphi^*(t,Y(t-),Z^j(t-));\ j=0,1,\ldots,N)\right).
\end{align}
We first prove that $\tilde{\pi}^*$ can be characterized as an approximation limit by a sequence of well defined admissible strategies.
\begin{lemma}\label{lem:approxpistar}
Let the condition {\rm(C.1)} hold. There exists a sequence of strategies $(\tilde{\pi}^{(n,*)})_{n\in\Zx_+}\subset\tilde{\mathcal{U}}$ such that $\lim_{n\to\infty}\tilde{\pi}^{(n,*)}(t)=\tilde{\pi}^*(t)$ for $t\in[0,T]$, $\Px$-a.s., and further $\lim_{n\to\infty}J(\tilde{\pi}^{(n,*)};t,i,z)=\varphi^*(t,i,z)$ for $(t,i,z)\in[0,T]\times\Zx_+\times{\cal S}$, $\Px$-a.s. Here, the objective functional $J$ is defined in~\eqref{eq:J}.
\end{lemma}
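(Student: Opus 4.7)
The plan is to take $\tilde{\pi}^{(n,*)}$ as the natural extension to $\Zx_+$ of the truncated optimal feedback from Proposition~\ref{prop:verithemfinite}. Let $\pi^{(n,*)}(t,i,z)$ denote the feedback given by Proposition~\ref{prop:verithemfinite} for the $n$-th truncated risk sensitive problem (whose Cole--Hopf transformed value function is $\varphi_n$), and set
$$
\tilde{\pi}^{(n,*)}(t):=\pi^{(n,*)}(t,Y(t-),Z(t-))\,\mathds{1}_{\{Y(t-)\le n\}},\qquad t\in[0,T].
$$
On $[0,\tau_n]$ this agrees $\Px$-a.s.\ with the truncated optimizer viewed inside $\tilde{\cal U}_n$, while on $(\tau_n,T]$ it vanishes identically; the convention $\pi^{(n,*)}(\cdot,0,\cdot)\equiv 0$ at the absorbing state $0$ makes this consistent with the $Y^{(n)}$-based description of the truncated problem.

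First I would check $\tilde{\pi}^{(n,*)}\in\tilde{\cal U}$. The strictly positive lower bound $\varphi_n(t,i,z)\ge\phi_i(T-t,i,z)$ supplied by Lemma~\ref{lem:phinlobnd} is uniform in $n\ge i$ over $t\in[0,T]$; combined with the first-order identity in the spirit of \eqref{eq:pistarbelow} and condition (C.1), this shows that $\tilde{\pi}^{(n,*)}$ is uniformly bounded away from $1$, so the generalized Novikov condition \eqref{eq:integral-cond} holds exactly as in the proof of Proposition~\ref{prop:verithemfinite}. For the pointwise convergence $\tilde{\pi}^{(n,*)}(t)\to\tilde{\pi}^*(t)$ $\Px$-a.s., I fix $(t,\omega)$ with $Y(t-,\omega)=i\in\Zx_+$; for every $n\ge i$ the indicator equals $1$, and by the Implicit Function Theorem argument used in Proposition~\ref{prop:verithemfinite} the minimizer of $\tilde{H}(\pi;i,Z(t-),\cdot)$ over $U$ depends continuously on its vector argument $(x_0,\ldots,x_N)\in(0,\infty)^{N+1}$. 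Since Lemma~\ref{lem:unfmconforphi} yields $\varphi_n(t,i,z^j)\to\varphi^*(t,i,z^j)$ for each $j=0,1,\ldots,N$, continuity of the minimizer delivers the claimed pointwise limit.

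The convergence $J(\tilde{\pi}^{(n,*)};t,i,z)\to\varphi^*(t,i,z)$ I would establish by a squeeze. The lower bound
$$
J(\tilde{\pi}^{(n,*)};t,i,z)\ge\varphi^*(t,i,z)
$$
is immediate from Proposition~\ref{prop:verivalue} and the admissibility just established. For the matching upper bound I would mimic the Girsanov/stopping manipulation used inside the proof of Lemma~\ref{lem:jn=tildeJn}. Because $\tilde{\pi}^{(n,*)}\equiv 0$ on $(\tau_n^t,T]$, the stochastic exponential $\Gam^{\tilde{\pi}^{(n,*)},\theta}$ is constant on that interval, so $\Gam^{\tilde{\pi}^{(n,*)},\theta}(t,T)=\Gam^{\tilde{\pi}^{(n,*)},\theta}(t,T\wedge\tau_n^t)$, and this stopped density coincides with the density generated by the truncated optimizer inside $\tilde{\cal U}_n$. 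Rewriting $J$ as a $\Px$-expectation via Girsanov, splitting the time integral at $T\wedge\tau_n^t$, and noting that $L(0;Y(s),Z(s))=-r(Y(s))\le 0$ under (C.1) so that the post-$\tau_n^t$ exponential factor is $\le 1$, I would arrive at
$$
J(\tilde{\pi}^{(n,*)};t,i,z)\;\le\;J_n(\tilde{\pi}^{(n,*)};t,i,z)\;\le\;\tilde{J}_n(\tilde{\pi}^{(n,*)};t,i,z)\;=\;\varphi_n(t,i,z),
$$
where the middle inequality is the one proved inside Lemma~\ref{lem:jn=tildeJn} and the final equality is Proposition~\ref{prop:verithemfinite} applied to the $n$-th truncated problem. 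Letting $n\to\infty$ and invoking \eqref{eq:varphistar} closes the squeeze.

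The main obstacle I anticipate is in the upper bound step: one must carefully identify the density process under the full regime-switching $Y$ with the density stopped at $\tau_n^t$ generated from $Y^{(n)}$, and rigorously control the contribution from the random interval $(\tau_n^t,T]$ on which $Y$ may take values outside $D_n$ and the truncated value function is no longer available. The nonnegativity of $r$ from (C.1), together with the deliberate vanishing of $\tilde{\pi}^{(n,*)}$ past $\tau_n^t$, are precisely what neutralize the tail contribution and allow the sandwich inequality to close.
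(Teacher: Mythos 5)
Your overall architecture matches the paper's: construct the $n$-th truncated optimal feedback, extend it by cutting it off at the exit time from $D_n$, verify admissibility via a uniform lower bound on $\varphi_n$, get pointwise convergence from continuity of the argmin map and $\varphi_n\to\varphi^*$, and close with the sandwich $\varphi^*\le J(\tilde{\pi}^{(n,*)};\cdot)\le\varphi_n$.

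There is, however, a concrete error in the specification of the approximating strategy that breaks the key upper-bound step as written. You define
$\tilde{\pi}^{(n,*)}(t)=\pi^{(n,*)}(t,Y(t-),Z(t-))\,\mathds{1}_{\{Y(t-)\le n\}}$
and then assert that ``on $(\tau_n,T]$ it vanishes identically.'' That is false for this indicator: after $\tau_n$ the chain $Y$ is free to re-enter $D_n$, so $\{Y(t-)\le n\}$ and $\{t\le\tau_n\}$ are genuinely different events, and your strategy is nonzero on $(\tau_n,T]\cap\{Y(t-)\le n\}$, which generically has positive probability. The vanishing past $\tau_n^t$ is precisely what you rely on in the upper-bound step: you split the time integral at $T\wedge\tau_n^t$ and want the tail contribution to be $\exp\bigl(\frac{\theta}{2}\int_{T\wedge\tau_n^t}^T L(0;Y(s),Z(s))\,ds\bigr)\le 1$ because $L(0;i,z)=-r(i)\le 0$. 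With a nonzero strategy on the tail interval, $L(\tilde{\pi}^{(n,*)}(s);Y(s),Z(s))$ need not be $\le 0$ there (the quadratic and $(1-\pi_j)^{-\theta/2}$ terms can dominate), so the inequality $J(\tilde{\pi}^{(n,*)};t,i,z)\le\varphi_n(t,i,z)$ no longer follows and the squeeze does not close. The indicator should be $\mathds{1}_{\{t\le\tau_n\}}$ (exactly as in the paper's definition, which is equivalent to evaluating the feedback at $Y^{(n)}(t-)$ with the convention $\pi^{(n,*)}(\cdot,0,\cdot)\equiv 0$); with that single correction your argument coincides with the paper's proof.
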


\begin{proof}
For fixed $(i,z,x)\in\Zx_+\times\mathcal{S}\times(0,\infty)^{N+1}$, we have that $\tilde{H}\left(\pi;i,z,x\right)$ is strictly concave w.r.t. $\pi\in U$, and hence $\Phi(i,z,x):=\argmin_{\pi\in U}\tilde{H}\left(\pi;i,z,x\right)$ is well defined. Note that $\Phi(i,z,\cdot)$ maps $(0,\infty)^{N+1}$ to $U$ and satisfies the first-order condition $\frac{\partial\tilde{H}}{\partial\pi_j}\left(\Phi(i,z,x);i,z,x\right)=0$ for $j=1,\ldots,N$.
Then, Implicit Function Theorem yields that $\Phi(i,z,x)$ is continuous in $x$. Let $x^{(n)}(t):=(\varphi_n(t,Y^{(n)}(t-),Z^j(t-));\ j=0,1,\ldots,N)$. It follows from Proposition~\ref{prop:verithemfinite} and Lemma~\ref{lem:jn=tildeJn} that, for $t\in[0,T]$,
\begin{equation}
\tilde{\pi}^{(n,*)}(t):={\rm diag}((1-Z_j(t-))_{j=1}^N)\Phi(Y(t-),Z(t-),x^{(n)}(t))\mathds{1}_{\{t\leq\tau_n\}}\nonumber
\end{equation}
belongs to $\tilde{\mathcal{U}}_n\cap\tilde{\mathcal{U}}$, and further it satisfies that
\begin{align}
\varphi_n(t,i,z)&=\Ex_{t,i,z}^{{\tilde\pi^{(n,*)}},\theta}\left[\exp\left(\frac{\theta}{2}\int_t^{T\wedge\tau^t_n}L(\tilde{\pi}^{(n,*)}(s);Y(s),Z(s))ds\right)\right].
\end{align}
Lemma~\ref{lem:unfmconforphi} gives that $\lim_{n\to\infty}\|x^{(n)}(t)-x^*(t)\|=0$ for $t\in[0,T]$, $\Px$-a.s., where $x^*(t):=(\varphi^*(t,Y(t-),Z^j(t-));\ j=0,1,\ldots,N)$. We define the predictable process $\tilde{\pi}^*(t):={\rm diag}((1-Z_j(t-))_{j=1}^N)\Phi(Y(t-),Z(t-),x^*(t))$ for $t\in[0,T]$. By Lemma~\ref{lem:lobndphistar} and the continuity of $\Phi(i,z,\cdot)$, we obtain  $\lim_{n\to\infty}\tilde{\pi}^{(n,*)}(t)=\tilde{\pi}^*(t)$ for $t\in[0,T]$, a.s. Moreover, it holds that
\begin{align*}
&J(\tilde{\pi}^{(n,*)};t,i,z)=\Ex_{t,i,z}^{{\tilde\pi^{(n,*)}},\theta}\left[\exp\left(\frac{\theta}{2}\int_t^TL(\tilde\pi^{(n,*)}(s);Y(s),Z(s))ds\right)\right]\nonumber\\
&\qquad=\Ex_{t,i,z}^{{\tilde\pi^{(n,*)}},\theta}\left[\exp\left(\frac{\theta}{2}\int_t^{T\wedge\tau^t_n}L(\tilde\pi^{(n,*)}(s);Y(s),Z(s))ds+\frac{\theta}{2}\int_{T\wedge\tau^t_n}^TL(0;Y(s),Z(s))ds\right)\right]\nonumber\\
&\qquad\leq\Ex_{t,i,z}^{{\tilde\pi^{(n,*)}},\theta}\left[\exp\left(\frac{\theta}{2}\int_t^{T\wedge\tau^t_n}L(\tilde\pi^{(n,*)}(s);Y(s),Z(s))ds\right)\right]=\varphi_n(t,i,z).
\end{align*}
Proposition~\ref{prop:verivalue} then yields that $\varphi^*(t,i,z)\leq J(\tilde{\pi}^{(n,*)};t,i,z)\leq\varphi_n(t,i,z)$ for $n\in\Zx_+$. This verifies that $\lim_{n\to\infty}J(\tilde{\pi}^{(n,*)};t,i,z)=\varphi^*(t,i,z)$ for $(t,i,z)\in[0,T]\times\Zx_+\times{\cal S}$, a.s. using Lemma~\ref{lem:unfmconforphi}.
\end{proof}

\begin{proposition}\label{prop:admiss}
Let the condition {\rm(C.1)} hold. Then, the optimal feedback strategy $\tilde{\pi}^*$ given by \eqref{eq:optimaltildepis} is admissible, i.e., $\tilde{\pi}^*\in\tilde{\cal U}$.
\end{proposition}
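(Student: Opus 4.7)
The plan is to verify the three conditions in Definition~\ref{def:add-con} for the candidate $\tilde{\pi}^*$ defined in~\eqref{eq:optimaltildepis}: $\Gx$-predictability with values in $U=(-\infty,1)^N$, existence and uniqueness of a positive strong solution of the wealth SDE~\eqref{eq:wealth}, and the $\Px$-martingale property of $\Gamma^{\tilde{\pi}^*,\theta}$. Predictability and the $U$-valuedness follow directly from the construction of $\tilde{\pi}^*$ as the argmin of the strictly convex function $\tilde{H}(\pi;Y(t-),Z(t-),\cdot)$, where strict convexity is guaranteed by the positive weights $\varphi^*(t,i,z^j)>0$ from Lemma~\ref{lem:lobndphistar} and the uniform positive-definiteness of $\sigma(i)\sigma(i)^\top$ from (C.1), together with the continuity of the argmin map via the Implicit Function Theorem already used in Lemma~\ref{lem:approxpistar}.

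The core of the proof is to mimic the first-order argument in the proof of Proposition~\ref{prop:verithemfinite}. Setting $\partial_{\pi_j}\tilde{H}=0$ yields, for indices $j$ with $Z_j(t-)=0$, an identity of the form
\[
(1-\tilde{\pi}^*_j(t))^{-\theta/2-1}=\Psi_j\bigl(Y(t-),Z(t-),\tilde{\pi}^*(t)\bigr)\cdot\frac{\varphi^*(t,Y(t-),Z(t-))}{\lambda_j(Y(t-),Z(t-))\,\varphi^*(t,Y(t-),Z^j(t-))},
\]
analogous to~\eqref{eq:pistarbelow}, where $\Psi_j$ depends linearly on $\mu,r,\sigma\sigma^\top,\lambda$ and on the other coordinates of $\tilde{\pi}^*$. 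Under (C.1) these coefficients are uniformly bounded and $\lambda_j\geq\delta$; the numerator is bounded above by $1$ by Proposition~\ref{prop:Vnmonotone00}; and with a uniform-in-$i$ strictly positive lower bound on $\varphi^*(\cdot,i,z^j)$ (addressed below), one extracts deterministic constants $c,C>0$ such that $\sup_{t\in[0,T]}\|\tilde{\pi}^*(t)\|\leq C$ and $\inf_{t\in[0,T],\,1\le j\le N}(1-\tilde{\pi}^*_j(t))\geq c$. With these pathwise bounds, the generalized Novikov condition~\eqref{eq:integral-cond} (with $\tilde{\pi}^*$ in place of the candidate there) is immediate, since by (C.1) the entire exponent is deterministically bounded. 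Hence $\Gamma^{\tilde{\pi}^*,\theta}$ is a true $\Px$-martingale, and the uniform boundedness of $\tilde{\pi}^*$ away from $1$ ensures well-posedness and positivity of the wealth SDE~\eqref{eq:wealth} via standard SDE theory.

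The main obstacle is upgrading Lemma~\ref{lem:lobndphistar}, whose constant $\delta(i,z)$ is nominally $i$-dependent, to a uniform lower bound $\inf_{t\in[0,T],\,i\in\Zx_+}\varphi^*(t,i,z)\geq\delta_z>0$ for each fixed $z$. Under (C.1), the constant $C(i,z)$ in~\eqref{estm2} is uniformly bounded in $(i,z)$. Revisiting the linear comparison system~\eqref{eq:phin} with this uniform control on the drift, one can extract an $i$-independent strictly positive lower bound on the monotone limit $\phi^*$, and this transfers to $\varphi^*$ through $\varphi^*(t,i,z)\geq\phi^*(T-t,i,z)$ established in the proof of Lemma~\ref{lem:lobndphistar}. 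Once this uniform lower bound is secured, the remaining first-order and Novikov computations are routine, in parallel to the proof of Proposition~\ref{prop:verithemfinite}.
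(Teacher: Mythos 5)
Your overall skeleton---predictability and $U$-valuedness of $\tilde{\pi}^*$ from the argmin map, the first-order condition combined with the uniform coefficient bounds of (C.1) to get $\tilde{\pi}^*$ uniformly bounded and uniformly separated from $1$, and then the generalized Novikov condition---is exactly the paper's argument (compare \eqref{eq:pistarbelow2}). You also correctly isolate the crux: one needs a strictly positive lower bound on $\varphi^*(t,i,z)$ that is uniform in $i\in\Zx_+$, which Lemma~\ref{lem:lobndphistar} by itself does not give, since its constant $\delta(i,z)$ depends on $i$.

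Where you diverge is in how you produce this uniform bound, and there your proposal is incomplete as written. You propose to rerun the linear comparison system \eqref{eq:phin}, using that $C(i,z)$ in \eqref{estm2} is uniformly bounded under (C.1), and to ``extract an $i$-independent strictly positive lower bound on the monotone limit $\phi^*$''. That key claim is asserted rather than proved, and it cannot be obtained at fixed truncation level: for each fixed $n$ there is no $i$-independent lower bound on $\phi_n(t,i,z)$, because states $i$ near the truncation boundary are absorbed quickly (heuristically $\phi_n(t,i,z)\geq e^{-\bar{C}t}\,\Px_i(\tau_n>t)$ with $\bar{C}=\sup_{i,z}C(i,z)$, and $\Px_i(\tau_n>t)$ can be arbitrarily small for $i$ close to $n$). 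The $i$-free bound only appears after letting $n\to\infty$, and justifying $\lim_{n}\Px_i(\tau_n>t)=1$ requires the non-explosion of $Y$ together with a Feynman--Kac/absorbed-semigroup identification of \eqref{eq:phin} (in the spirit of the coefficients $a^{(n)}_{ij}$ appearing in Lemma~\ref{lem:errorestimate}). This detour can be made rigorous, but the paper avoids it entirely: Proposition~\ref{prop:verivalue} has already been proved without using admissibility of $\tilde{\pi}^*$ (so there is no circularity), and since (C.1) gives $L(\pi;i,z)\geq-C$ uniformly in $(\pi,i,z)$ by \eqref{eq:L0}, the representation \eqref{veriphistar} immediately yields $\varphi^*(t,i,z)\geq e^{-\frac{\theta}{2}C(T-t)}$ for all $i$, which is \eqref{eq:xstarbound}. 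With that bound in hand, your first-order and Novikov computations go through exactly as you describe; if you keep your ODE route instead, you must supply the missing limit/non-explosion argument for $\phi^*$.
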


\begin{proof}
Under the condition {\rm(C.1)}, it is not difficult to verify that there exists a constant $C>0$ such that $L(\pi;i,z)\geq-C$ for all $(\pi,i,z)\in U\times\Zx_+\times\mathcal{S}$. Thanks to Proposition~\ref{prop:verivalue}, we have that
\begin{align*}
\varphi^*(t,i,z)&=\inf_{\tilde{\pi}\in\tilde{\cal U}}\Ex_{t,i,z}^{\tilde{\pi},\theta}\left[\exp\left(\frac{\theta}{2}\int_t^TL(\tilde{\pi}(s);Y(s),Z(s))ds\right)\right]\\
&\geq\inf_{\tilde{\pi}\in\tilde{\cal U}}\Ex_{t,i,z}^{\tilde{\pi},\theta}\left[\exp\left(-\frac{\theta}{2}\int_t^TCds\right)\right]=\exp\left(-\frac\theta2C(T-t)\right),\nonumber
\end{align*}
for $(t,i,z)\in[0,T]\times{\Zx_+}\times{\cal S}$. Hence, for $t\in[0,T]$,
\begin{align}\label{eq:xstarbound}
x^*(t)=(\varphi^*(t,Y(t-),Z^j(t-));\ j=0,1,\ldots,N))\in[e^{-\frac\theta2C(T-t)},1]^{N+1}.
\end{align}
The continuity of $\Phi(i,z,x):=\argmin_{\pi\in U}\tilde{H}\left(\pi;i,z,x\right)$ gives that $\tilde{\pi}^*(t)$ for $t\in[0,T]$ is uniformly bounded by some constant $C_1>0$. Moreover, the first-order condition yields that, for all $j=1,\ldots,N$, if $Z_j(t-)=0$,
\begin{align}\label{eq:pistarbelow2}
(1-\tilde{\pi}^*_j(t))^{-\frac\theta2-1}
=&\Bigg[(\mu_j(Y(t-))-r(Y(t-)))-\frac\theta2\left(1+\frac\theta2\right)\sum_{i=1}^N(\sigma(Y(t-))^\top\sigma(Y(t-)))_{ji}\tilde{\pi}^*_i(t)\nonumber\\
&\quad+\frac\theta2\lambda_j(Y(t-),Z(t-))\Bigg]
\frac{\varphi^*(t,Y(t-),Z(t-))}{\lambda_j(Y(t-),Z(t-))\varphi^*(t,Y(t-),Z^j(t-))}\nonumber\\
\leq& C_2,
\end{align}
where we used the condition (C.1) and \eqref{eq:xstarbound}. Note that $\tilde{\pi}^*_j(t)=0$ if $Z_j(t-)=1$, then $\tilde{\pi}^*$ is also uniformly bounded away from $1$. This implies that the generalized Novikov's condition holds in the countably infinite state case, and hence $\tilde{\pi}^*$ is admissible.
\end{proof}
The above verification results (Proposition~\ref{prop:verivalue} and Proposition~\ref{prop:admiss}) can be seen as a uniqueness result for the dynamic programming equation. Under the condition (C.1), we can also establish an error estimate on the approximation of the sequence of strategies $\tilde{\pi}^{(n,*)}$ to the optimal strategy $\pi^{*}$ in terms of the objective functional $J$ (see~\eqref{eq:J}), which is given by
\begin{lemma}\label{lem:errorestimate}
Let $n\in\Zx_+$. Under the condition {\rm(C.1)}, for $(t,i,z)\in[0,T]\times D_n\times{\cal S}$, there exists a constant $C>0$ which is independent of $n$ such that
\begin{align*}
\left|J(\tilde{\pi}^{(n,*)};t,i,z)-J(\tilde{\pi}^{(*)};t,i,z)\right|\leq C\left(1-\sum_{j=1}^na^{(n)}_{ij}(T-t)\right).
\end{align*}
Here $a^{(n)}_{ij}(T-t)=\delta_{ij}+(T-t)q_{ij}+\sum_{k=1}^\infty\sum_{1\leq l_1,\ldots,l_k\leq n}\frac{(T-t)^{k+1}}{(k+1)!}q_{il_1}q_{l_1l_2}\cdots q_{l_k j}$.
\end{lemma}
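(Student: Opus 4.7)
The plan is to sandwich the error $|J(\tilde{\pi}^{(n,*)};t,i,z)-J(\tilde{\pi}^{*};t,i,z)|$ between zero and $\varphi_n(t,i,z)-\varphi^{*}(t,i,z)$, and then to estimate the latter by the probability that the regime chain exits $D_n$ before the horizon. The identity $J(\tilde{\pi}^{*};t,i,z)=\varphi^{*}(t,i,z)$ is provided by Proposition~\ref{prop:verivalue}, while the two-sided bound $\varphi^{*}(t,i,z)\leq J(\tilde{\pi}^{(n,*)};t,i,z)\leq\varphi_n(t,i,z)$ was already derived within the proof of Lemma~\ref{lem:approxpistar}. Hence the lemma reduces to showing
\[
\varphi_n(t,i,z)-\varphi^{*}(t,i,z)\leq C\Bigl(1-\sum_{j=1}^{n}a^{(n)}_{ij}(T-t)\Bigr)
\]
for some constant $C>0$ independent of $n$.

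By Lemma~\ref{lem:jn=tildeJn} I rewrite
\[
\varphi_n(t,i,z)=\inf_{\tilde{\pi}\in\tilde{\mathcal{U}}_n}\Ex_{t,i,z}^{\tilde{\pi},\theta}\left[\exp\left(\frac{\theta}{2}\int_t^{T\wedge\tau_n^t}L(\tilde{\pi}(s);Y(s),Z(s))\,ds\right)\right],
\]
and I test this infimum against the localised control $\hat{\pi}(s):=\tilde{\pi}^{*}(s)\mathds{1}_{\{s\leq\tau_n\}}$, which lies in $\tilde{\mathcal{U}}_n\cap\tilde{\mathcal{U}}$. Since the stochastic exponentials agree up to $\tau_n$, i.e. $\Gam^{\hat{\pi},\theta}(t,T\wedge\tau_n^t)=\Gam^{\tilde{\pi}^{*},\theta}(t,T\wedge\tau_n^t)$, this gives
\[
\varphi_n(t,i,z)\leq\Ex_{t,i,z}^{\tilde{\pi}^{*},\theta}\left[\exp\left(\frac{\theta}{2}\int_t^{T\wedge\tau_n^t}L(\tilde{\pi}^{*}(s);Y(s),Z(s))\,ds\right)\right].
\]
Subtracting the representation $\varphi^{*}(t,i,z)=\Ex_{t,i,z}^{\tilde{\pi}^{*},\theta}[\exp(\frac{\theta}{2}\int_t^{T}L(\tilde{\pi}^{*};Y,Z)\,ds)]$ supplied by Proposition~\ref{prop:verivalue} and observing that the two exponentials coincide on $\{\tau_n^t>T\}$, the difference reduces to an expectation supported on $\{\tau_n^t\leq T\}$.

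Under (C.1), Proposition~\ref{prop:admiss} guarantees that $\tilde{\pi}^{*}$ is uniformly bounded and uniformly bounded away from $1$; combined with the boundedness of the drift, volatility, interest rate and default intensity in (C.1), this produces a uniform deterministic bound $|L(\tilde{\pi}^{*}(s);Y(s),Z(s))|\leq C_L$. Hence the integrand from the previous paragraph is dominated by $2e^{\frac{\theta}{2}C_L T}\mathds{1}_{\{\tau_n^t\leq T\}}$, and therefore
\[
\varphi_n(t,i,z)-\varphi^{*}(t,i,z)\leq 2e^{\frac{\theta}{2}C_L T}\,\Px_{t,i,z}^{\tilde{\pi}^{*},\theta}(\tau_n^t\leq T).
\]
Because the Girsanov change of measure $\Px^{\tilde{\pi}^{*},\theta}/\Px$ only modifies the laws of $W$ and $Z$ through their compensators and leaves the intrinsic dynamics of the independent chain $Y$ unchanged, I have $\Px_{t,i,z}^{\tilde{\pi}^{*},\theta}(\tau_n^t\leq T)=\Px_i(\tau_n\leq T-t)$. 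A routine power-series expansion of $e^{(T-t)Q|_{D_n}}$ identifies $a^{(n)}_{ij}(T-t)$ as the $(i,j)$-entry of the sub-Markov semigroup of $Y$ killed upon first exit from $D_n$, so $\sum_{j=1}^{n}a^{(n)}_{ij}(T-t)=\Px_i(\tau_n>T-t)$; this closes the bound with $C=2e^{\frac{\theta}{2}C_L T}$. The only delicate step is the uniform control of $L(\tilde{\pi}^{*};\cdot,\cdot)$, but this is fully secured by combining Proposition~\ref{prop:admiss} with the blanket boundedness assumption (C.1).
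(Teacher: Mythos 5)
Your proposal is correct and follows essentially the same route as the paper: sandwich the error between $0$ and $\varphi_n-\varphi^*$ (via Proposition~\ref{prop:verivalue} and the chain $\varphi^*\leq J(\tilde{\pi}^{(n,*)};\cdot)\leq\varphi_n$ from Lemma~\ref{lem:approxpistar}), test $\varphi_n$ against a copy of $\tilde{\pi}^*$ localised to $\tau_n$, use the uniform bound on $L(\tilde{\pi}^*;\cdot,\cdot)$ coming from (C.1) and Proposition~\ref{prop:admiss} to reduce to the exit probability $\Px^{\tilde{\pi}^*,\theta}_{t,i,z}(\tau_n^t\leq T)$, and finally identify that probability with $1-\sum_{j\leq n}a^{(n)}_{ij}(T-t)$ through the killed semigroup of $Y$ on $D_n$. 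The only cosmetic difference is that you subtract the two expectations directly on $\{\tau_n^t\leq T\}$ whereas the paper drops the indicator $\mathbf{1}_{\{\tau_n^t>T\}}$, but the resulting estimate and constants are the same.
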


\begin{proof}
By Proposition 4.5, $J(\tilde{\pi}^{(n,*)};t,i,z)\to \varphi^*(t,i,z)=J(\tilde{\pi}^*;t,i,z)$ as $n\to\infty$. On the other hand, it can be verified that there exists constants $\gamma\in(0,1)$ and $C_1>0$ such that $\tilde{\pi}^*(t)\in[-C_1,1-\gamma]^N$ for all $t\in[0,T]$, a.s. Then, using \eqref{eq:L0}, it follows that $L(\tilde{\pi}^*(t);Y(t),Z(t))\leq C_2$, a.s. for $t\in[0,T]$. Here $C_2$ is a positive constant. Therefore, by noting $\tilde{\pi}^*\in\tilde{\mathcal{U}}_n$, we have that
\begin{align*}
\varphi^*(t,i,z)&=\Ex_{t,i,z}^{\tilde{\pi}^*,\theta}\left[\exp\left(\frac{\theta}{2}\int_t^TL(\tilde{\pi}^*(s);Y(s),Z(s))ds\right)\right]\notag\\
&\geq\Ex_{t,i,z}^{\tilde{\pi}^*,\theta}\left[\exp\left(\frac{\theta}{2}\int_t^TL(\tilde{\pi}^*(s);Y(s),Z(s))ds\right)\mathbf{1}_{\{\tau^t_n>T\}}\right]\notag\\
&=\Ex_{t,i,z}^{\tilde{\pi}^*,\theta}\left[\exp\left(\frac{\theta}{2}\int_t^{T\wedge\tau^t_n}L(\tilde{\pi}^*(s);Y(s),Z(s))ds\right)\mathbf{1}_{\{\tau^t_n>T\}}\right]\notag\\
&=\Ex_{t,i,z}^{\tilde{\pi}^*,\theta}\left[\exp\left(\frac{\theta}{2}\int_t^{T\wedge\tau^t_n}L(\tilde{\pi}^*(s);Y(s),Z(s))ds\right)\right]\notag\\
&\qquad-\Ex_{t,i,z}^{\tilde{\pi}^*,\theta}\left[\exp\left(\frac{\theta}{2}\int_t^{T\wedge\tau^t_n}L(\tilde{\pi}^*(s);Y(s),Z(s))ds\right)\mathbf{1}_{\{\tau^t_n\leq T\}}\right]\notag\\
&\geq\varphi_n(t,i,z)-\Ex_{t,i,z}^{\tilde{\pi}^*,\theta}\left[e^{\frac{\theta C_2}{2}(T\wedge\tau^t_n-t)}\mathbf{1}_{\{\tau^t_n\leq T\}}\right]\notag\\
&\geq\varphi_n(t,i,z)-C_3\mathbb{P}_{t,i,z}^{\tilde{\pi}^*,\theta}(\tau^t_n\leq T),
\end{align*}
where $C_3:=e^{\frac{\theta C_2T}{2}}$ and $\varphi_n(t,i,z)$ is defined in Proposition~\ref{prop:Vnmonotone00}. Using the given inequality $\varphi^*(t,i,z)\leq J(\tilde{\pi}^{(n,*)};t,i,z)\leq\varphi_n(t,i,z)$ in the proof of Lemma \ref{lem:approxpistar}, under the condition (C.1), we arrive at
\begin{align*}
\left|J(\tilde{\pi}^{(n,*)};t,i,z)-J(\tilde{\pi}^{(*)};t,i,z)\right|&=J(\tilde{\pi}^{(n,*)};t,i,z)-\varphi^*(t,i,z)\leq\varphi_n(t,i,z)-\varphi^*(t,i,z)\nonumber\\
&\leq C_3\mathbb{P}_{t,i,z}^{\tilde{\pi}^*,\theta}(\tau^t_n\leq T).
\end{align*}
Note that, by Proposition 4.5, $Y$ is also a Markov chain with the generator $Q=(q_{ij})$ under $\mathbb{P}_{t,i,z}^{\tilde{\pi}^*,\theta}$. Then $\mathbb{P}_{t,i,z}^{\tilde{\pi}^*,\theta}(\tau^t_n\leq T)\to0$ as $n\to\infty$. On the other hand, $\tau^t_n$ is the absorption time of $(Y^{(n)}(s))_{s\in[t,T]}$ whose generator is given as $A_n$ given by \eqref{eq:An}. Hence, using Section 11.2.3 in Chapter 11 in ~\cite{BieRut04}, we also have that $\Px_{t,i,z}^{\tilde{\pi}^*,\theta}(\tau^t_n\leq T)=1-\sum_{j=1}^na^{(n)}_{ij}(T-t)$. This completes the proof.
\end{proof}

We next provide an example in which the error estimate $1-\sum_{j=1}^na_{ij}^{(n)}(T-t)$ in Lemma~\ref{lem:errorestimate} admits a closed form representation. Let us consider the following specific generator given by
\begin{align*}
Q=\left[\begin{matrix}
     -1 & \frac12 & \frac14 & \dots & \frac1{2^{n-1}} & \frac1{2^n} & \dots \\
     \frac12 & -1 & \frac14 & \dots & \frac1{2^{n-1}} & \frac1{2^n} & \dots\\
     \frac12 & \frac14 & -1 & \dots & \frac1{2^{n-1}} & \frac1{2^n} & \dots \\
         \vdots   & \vdots  & \vdots & \vdots & \vdots     \\
     \frac12 & \frac14 & \frac18 &\dots & \frac1{2^{n-1}} & -1 & \dots\\
     \vdots   & \vdots  & \vdots & \vdots  & \vdots & \vdots   \\
\end{matrix}\right].\notag
\end{align*}
Then, for any $l\leq n$, $\sum_{j=1}^nq_{lj}=\sum_{j=1}^{n-1}\frac1{2^j}-1=\frac{-1}{2^{n-1}}$. Therefore, for any $i\leq n$,
\begin{align*}
\sum_{j=1}^na^{(n)}_{ij}(T-t)&=\sum_{k=0}^\infty\frac{(T-t)^k}{k!}\left(\frac{-1}{2^{n-1}}\right)^k=e^{-\frac{T-t}{2^{n-1}}}.
\end{align*}
It follows that, for $(t,i,z)\in[0,T]\times D_n\times{\cal S}$, we have the explicit error estimate
\begin{eqnarray*}
\left|J(\tilde{\pi}^{(n,*)};t,i,z)-J(\tilde{\pi}^{(*)};t,i,z)\right|\leq C\left(1-e^{-\frac{T-t}{2^{n-1}}}\right),
\end{eqnarray*}
where $C>0$ is independent of $n$.

\begin{remark}\label{rem:qijt}
It is also worth mentioning here that our method used in the paper can be applied to treat the case where the regime-switching process $Y$ is a time-inhomogeneous Markov chain with a time-dependent generator given by $Q(t)=(q_{ij}(t))_{i,j\in\Zx_+}$ for $t\in[0,T]$. Here, for $t\in[0,T]$, $q_{ii}(t)\leq0$ for $i\in\Zx_+$, $q_{ij}(t)\geq0$ for $i\neq j$, and $\sum_{j=1}^{\infty}q_{ij}(t)=0$ for $i\in\Zx_+$ (i.e., $\sum_{j\neq i}q_{ij}(t)=-q_{ii}(t)$ for $i\in\Zx_+$). Also for $i,j\in\Zx_+$, $t\to q_{ij}(t)$ is continuous on $[0,T]$, and the infinite summation $\sum_{j\in\Zx_+}q_{ij}(t)$ is uniformly convergent in $t\in[0,T]$.\\
\end{remark}

\noindent
\textbf{Acknowledgements}: L. Bo is supported by Natural Science Foundation of China under grant 11471254 and the Key Research Program of Frontier Sciences of the Chinese Academy of Science under grant QYZDB-SSW-SYS009. X. Yu is supported by the Hong Kong Early Career Scheme under grant 25302116. The authors would like to thank two anonymous referees for the careful reading and helpful comments to improve the presentation of this paper.

\appendix
\section{Auxiliary Lemmas}\label{app:proof1}
\renewcommand\theequation{A.\arabic{equation}}
\setcounter{equation}{0}

\begin{lemma}\label{lem:conconver}
Let the function $\Phi(x):[0,\infty)^{N+1}\to\R$ be concave in every component of $x$. Assume that there exists $\overline{x}$, $x^*$, $\underline{x}\in[0,\infty)^{N+1}$ such that $\underline{x}\ll x^*\ll\overline{x}$. Let $\{x^{(n)}\}_{n\geq1}\subset[0,\infty)^{N+1}$ satisfy $x^*\leq x^{(n)}$ for $n\geq1$ and $\lim_{n\to\infty}x^{(n)}=x^*$. Then $\lim_{n\to\infty}\Phi(x^{(n)})=\Phi(x^*)$.
\end{lemma}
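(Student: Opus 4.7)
The approach is to sandwich $\Phi(x^{(n)})$ between two bounds both converging to $\Phi(x^*)$, each obtained by iterating the coordinate-wise concavity of $\Phi$ along carefully chosen convex combinations. Both directions hinge on the fact that $\lambda \mapsto \Phi(\ldots,\lambda a+(1-\lambda)b,\ldots)$ is concave in $\lambda$ for each fixed choice of the remaining coordinates.

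For the lower bound, since $x^{(n)}\to x^*$ and $x^*\ll\overline{x}$, we have $x^*_j\le x^{(n)}_j<\overline{x}_j$ for every $j$ once $n$ is large. Set
\[
\lambda_n^j:=\frac{x^{(n)}_j-x^*_j}{\overline{x}_j-x^*_j}\in[0,1),
\]
so $x^{(n)}_j=(1-\lambda_n^j)x^*_j+\lambda_n^j\overline{x}_j$ and $\lambda_n^j\to 0$. Applying concavity in coordinate $0$, then in coordinate $1$, and so on, replacing $x^{(n)}_j$ with the convex combination one coordinate at a time, yields after $N+1$ steps
\[
\Phi(x^{(n)})\ge\Bigl(\prod_{j=0}^{N}(1-\lambda_n^j)\Bigr)\Phi(x^*)+R_1^{(n)},
\]
where $R_1^{(n)}$ is a finite sum of $2^{N+1}-1$ terms of the form $\bigl(\prod_j(\lambda_n^j)^{\epsilon_j}(1-\lambda_n^j)^{1-\epsilon_j}\bigr)\Phi(y^\epsilon)$ indexed by $\epsilon\in\{0,1\}^{N+1}\setminus\{0\}$, with $y^\epsilon$ a fixed vertex having coordinates in $\{x^*_j,\overline{x}_j\}$. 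Only finitely many fixed points appear and $\Phi$ is real-valued, so each $|\Phi(y^\epsilon)|<\infty$; each remainder term also carries at least one factor $\lambda_n^j\to 0$, whence $R_1^{(n)}\to 0$ and $\liminf_n\Phi(x^{(n)})\ge\Phi(x^*)$.

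For the upper bound, using the strict lower bound $\underline{x}\ll x^*$, for $n$ large set
\[
\mu_n^j:=\frac{x^{(n)}_j-x^*_j}{x^{(n)}_j-\underline{x}_j}\in[0,1),
\]
which tends to $0$ as the denominator converges to $x^*_j-\underline{x}_j>0$. Then $x^*_j=(1-\mu_n^j)x^{(n)}_j+\mu_n^j\underline{x}_j$, and iterating coordinate-wise concavity as before yields
\[
\Phi(x^*)\ge\Bigl(\prod_{j=0}^{N}(1-\mu_n^j)\Bigr)\Phi(x^{(n)})+R_2^{(n)},
\]
where $R_2^{(n)}$ now involves values $\Phi(\tilde{y}^{\epsilon,n})$ with $\tilde{y}^{\epsilon,n}_j\in\{x^{(n)}_j,\underline{x}_j\}$. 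For $n$ large every $\tilde{y}^{\epsilon,n}$ lies in the compact sub-box $[\underline{x},\overline{x}]$, on which $\Phi$ is bounded (by induction on dimension from the univariate fact that a real-valued concave function on a compact interval is bounded). Hence the $\Phi(\tilde{y}^{\epsilon,n})$ are uniformly bounded in $n$; combined with $\mu_n^j\to 0$, this forces $R_2^{(n)}\to 0$, and rearranging produces $\limsup_n\Phi(x^{(n)})\le\Phi(x^*)$.

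Combining the two directions gives $\lim_n\Phi(x^{(n)})=\Phi(x^*)$. The principal technical obstacle is controlling $R_2^{(n)}$: unlike $R_1^{(n)}$, whose evaluation points form a fixed finite set, the points appearing in $R_2^{(n)}$ vary with $n$, so a uniform bound on $\Phi$ over the compact sub-box $[\underline{x},\overline{x}]$ is required, and it is precisely the strict separation $\underline{x}\ll x^*\ll\overline{x}$ from the hypothesis that makes such a uniform bound available via coordinate-wise concavity.
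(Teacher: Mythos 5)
Your proof is correct and follows essentially the same sandwiching argument as the paper's: the lower bound comes from iterating coordinate-wise concavity along the convex combination of $x^*$ and $\overline{x}$, the upper bound from the convex combination of $x^{(n)}$ and $\underline{x}$, and the remainder terms vanish because each carries at least one factor tending to zero. One small but genuine improvement: the paper dispatches $\Sigma^{(n)}_2$ by the same small-multiplier argument it used for $\Sigma^{(n)}_1$, but as you correctly note the evaluation points in the upper-bound remainder have coordinates $x^{(n)}_j$ that vary with $n$, so the vanishing of $R_2^{(n)}$ really does require the uniform bound on $\Phi$ over $[\underline{x},\overline{x}]$ that you supply (via the one-dimensional fact and induction on coordinates); the paper elides this point.
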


\begin{proof} Due to the given conditions in the lemma, there exists $n_0\geq1$ such that $x^*\leq x^{(n)}\leq\overline{x}$ for all $n\geq n_0$. For each $n\geq n_0$, there exists a vector $\lambda^{(n)}\in[0,1]^{N+1}$ satisfying $\lim_{n\to\infty}\lambda^{(n)}=0$ such that $x^{(n)}_k=\lambda^{(n)}_k\overline{x}_k+(1-\lambda^{(n)}_k)x^*_k$ for $k=1,\ldots,N+1$. Therefore, it follows that
\begin{align}\label{concaveexpansion1}
\Phi(x^{(n)})=&\Phi\Big(\lambda^{(n)}_1\overline{x}_1+(1-\lambda^{(n)}_1)x^*_1,\lambda^{(n)}_2\overline{x}_2\nonumber\\
&+(1-\lambda^{(n)}_2)x^*_2,\ldots,\lambda^{(k)}_{N+1}\overline{x}_{N+1}+(1-\lambda^{(n)}_{N+1})x^*_{N+1}\Big)\nonumber\\
\geq&\lambda^{(n)}_1\Phi\Big(\overline{x}_1,\lambda^{(n)}_2\overline{x}_2+(1-\lambda^{(n)}_2)x^*_2,\ldots,\lambda^{(k)}_{N+1}\overline{x}_{N+1}+(1-\lambda^{(n)}_{N+1})x^*_{N+1}\Big)\nonumber\\
&+(1-\lambda^{(n)}_1)\Phi\Big(x^*_1,\lambda^{(n)}_2\overline{x}_2+(1-\lambda^{(n)}_2)x^*_2,\ldots,\lambda^{(k)}_{N+1}\overline{x}_{N+1}+(1-\lambda^{(n)}_{N+1})x^*_{N+1}\Big)\nonumber\\
\geq&\lambda^{(n)}_1\lambda^{(n)}_2\Phi\Big(\overline{x}_1,\overline{x}_2,\ldots,\lambda^{(n)}_{N+1}\overline{x}_{N+1}+(1-\lambda^{(n)}_{N+1})x^*_{N+1}\Big)\nonumber\\
&+\lambda^{(n)}_1(1-\lambda^{(n)}_2)\Phi\Big(\overline{x}_1,x^*_2,\ldots,\lambda^{(k)}_{N+1}\overline{x}_{N+1}+(1-\lambda^{(n)}_{N+1})x^*_{N+1}\Big)\nonumber\\
&+(1-\lambda^{(n)}_1)\lambda^{(n)}_2\Phi\Big(\overline{x}_1,x^*_2,\ldots,\lambda^{(k)}_{N+1}\overline{x}_{N+1}+(1-\lambda^{(n)}_{N+1})x^*_{N+1}\Big)\nonumber\\
&+(1-\lambda^{(n)}_1)(1-\lambda^{(n)}_2)\Phi\Big(x^*_1,x^*_2,\ldots,\lambda^{(k)}_{N+1}\overline{x}_{N+1}+(1-\lambda^{(n)}_{N+1})x^*_{N+1}\Big)\nonumber\\
\geq&\Phi(x^*)\prod_{k=1}^{N+1}(1-\lambda_{k}^{(n)})+\Sigma^{(n)}_1.
\end{align}

We observe that every term in $\Sigma^{(n)}_1$ above has one or more multipliers which is of the form $\lambda^{(n)}_k$ for $k=1,\ldots,N+1$. As $\lim_{n\to\infty}\lambda^{(n)}_k=0$ for $k=1,\ldots,N+1$ and hence $\Sigma^{(n)}_1\to0$ as $n\to\infty$. It follows from \eqref{concaveexpansion1} that $\liminf_{n\rightarrow\infty}\Phi(x^{(n)})\geq\Phi(x^*)$. Similarly, as $x^{(n)}\geq x^*\gg \underline{x}$ for all $n\in\N$, there exists a vector $\tilde{\lambda}^{(n)}\in[0,1]^{N+1}$ satisfying $\lim_{n\to\infty}\tilde{\lambda}^{(n)}=0$ such that $x^*_k=\tilde{\lambda}^{(n)}_k\underline{x}_k+(1-\tilde{\lambda}^{(n)}_k)x^{(n)}_k$ for $k=1,\ldots,N+1$. Using the similar argument in the proof of \eqref{concaveexpansion1}, we deduce that
\begin{align}\label{concaveexpansion2}
\Phi(x^*)\geq\Phi(x^{(n)})\prod_{k=1}^{N+1}(1-\tilde{\lambda}_{k}^{(n)})+\Sigma^{(n)}_2,
\end{align}
where every term in $\Sigma^{(n)}_2$ above has one or more multipliers which is of the form $\lambda^{(n)}_k$, $k=1,\ldots,N+1$. The inequality \eqref{concaveexpansion2} gives that $\Phi(x^*)\geq\limsup_{n\to\infty}\Phi(x^{(n)})$. Putting the above two inequalities together, we obtain $\lim_{n\to\infty}\Phi(x^{(n)})=\Phi(x^*)$, which completes the proof.
\end{proof}

\begin{lemma}\label{lem:conbound}
Let the function $\Phi(x):[0,\infty)^{N+1}\to[0,\infty)$ be concave in every component of $x$. Then, for any $\alpha,\beta\in[0,\infty)^{N+1}$ satisfying $\alpha\leq\beta$, there exists a constant $C=C(\alpha,\beta)>0$ such that $0\leq\Phi(x)\leq C$ for all $\alpha\leq x\leq \beta$.
\end{lemma}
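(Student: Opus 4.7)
The nonnegativity assumption $\Phi \geq 0$ immediately handles the lower bound, so the entire content of the lemma lies in the upper bound. My plan is to show that one may simply take $C = \Phi(\beta) + 1$ (adding $1$ only to guarantee strict positivity of the constant). The crux is the following one-variable observation, which I will prove first: \emph{every nonnegative concave function $f: [0,\infty) \to [0,\infty)$ is non-decreasing.} Once this is established, the multivariable conclusion follows by an easy coordinate-wise iteration, using only concavity in each separate component — which is exactly the hypothesis the lemma supplies (note that joint concavity is \emph{not} assumed).

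To prove the one-variable claim, I argue by contradiction. Suppose there exist $0 \leq a < b$ with $f(a) > f(b)$. For any $t > b$, write $b$ as the convex combination $b = \lambda a + (1 - \lambda) t$ with $\lambda = (t-b)/(t-a) \in (0,1)$. Concavity yields
\begin{equation*}
f(b) \geq \lambda f(a) + (1 - \lambda) f(t),
\end{equation*}
hence
\begin{equation*}
f(t) \leq \frac{f(b) - \lambda f(a)}{1 - \lambda} = \frac{(t-a) f(b) - (t-b) f(a)}{b - a}.
\end{equation*}
The right-hand side is affine in $t$ with slope $(f(b) - f(a))/(b-a) < 0$, so it tends to $-\infty$ as $t \to \infty$, contradicting $f(t) \geq 0$. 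Thus $f$ is non-decreasing, and in particular $f(x) \leq f(\beta_k)$ for all $x \in [0, \beta_k]$.

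The multivariable conclusion follows by applying this observation $N+1$ times, once in each coordinate. For any $x$ with $\alpha \leq x \leq \beta$, the map $t \mapsto \Phi(t, x_2, \ldots, x_{N+1})$ is concave and nonnegative on $[0,\infty)$, so non-decreasing; therefore
\begin{equation*}
\Phi(x_1, x_2, \ldots, x_{N+1}) \leq \Phi(\beta_1, x_2, \ldots, x_{N+1}).
\end{equation*}
Repeating the argument on the second coordinate of $\Phi(\beta_1, \cdot, x_3, \ldots, x_{N+1})$ gives an upper bound by $\Phi(\beta_1, \beta_2, x_3, \ldots, x_{N+1})$, and after $N+1$ such steps we obtain $\Phi(x) \leq \Phi(\beta)$. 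Setting $C(\alpha,\beta) := \Phi(\beta) + 1 > 0$ therefore works simultaneously for all $x \in [\alpha, \beta]$.

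The only genuinely delicate point is the one-variable monotonicity claim (Step 1); the rest is pure bookkeeping. One temptation is to try to use joint convexity arguments or continuity of $\Phi$, neither of which is assumed — so restricting to the one-variable slice and exploiting the unboundedness of the half-line $[0,\infty)$ together with the nonnegativity constraint is essential. No other ingredients are needed, and the resulting constant $C$ is explicit.
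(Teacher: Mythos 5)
Your proof is correct, but it follows a genuinely different route from the paper's. The paper proceeds by picking an auxiliary point $\nu\gg\beta$ strictly above the box, writing each $\beta_k$ as a convex combination $\beta_k=\lambda_k x_k+(1-\lambda_k)\nu_k$, and then applying concavity component-by-component (exactly the chain of inequalities in \eqref{concaveexpansion1}) to obtain a \emph{multiplicative} lower bound $\Phi(\beta)\geq\delta^{N+1}\Phi(x)$, where $\delta=\min_k\frac{\nu_k-\beta_k}{\nu_k-\alpha_k}>0$; this yields the (looser) constant $C=\Phi(\beta)/\delta^{N+1}$. You instead isolate a reusable one-variable observation -- a nonnegative concave function on the half-line $[0,\infty)$ must be non-decreasing, which you prove correctly by the affine-upper-bound/$t\to\infty$ contradiction -- and then iterate it coordinate by coordinate to get the sharper bound $\Phi(x)\leq\Phi(\beta)$ directly. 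Both arguments lean on the same essential feature, namely that the concavity and nonnegativity are assumed on the \emph{entire} unbounded domain $[0,\infty)^{N+1}$ (the paper to locate $\nu$, you to run the monotonicity argument at $t\to\infty$), and neither needs joint concavity or continuity. Your version is cleaner: the constant is explicit, independent of any auxiliary choice of $\nu$, and the monotonicity lemma is a tidier intermediate statement than the $\delta^{N+1}$ bookkeeping. Note, incidentally, that your monotonicity observation, combined with the fact that $\varphi_n,\varphi^*$ take values in $[\delta,1]^{N+1}$, could also slightly streamline the way the paper later invokes Lemma A.1 (Lemma~\ref{lem:conconver}) in the proof of Theorem~\ref{thm:existD}.
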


\begin{proof} For any $\alpha\leq x\leq \beta$ where $\alpha,\beta\in[0,\infty)^{N+1}$, there exists a vector $\nu\in(0,\infty)^{N+1}$ such that $\beta\ll\nu$. This implies that
there exists $\lambda\in[0,1]^{N+1}$ such that $\beta_k=\lambda_k x_k+(1-\lambda_k)\nu_k$, $k=1,\ldots,N+1$. As $\alpha\leq x\leq \beta\ll\nu$,
there exists $\delta>0$ such that $1\geq\lambda_k=\frac{\nu_k-\beta_k}{\nu_k-x_k}\geq\frac{\nu_k-\beta_k}{\nu_k-\alpha_k}\geq\delta$. Using the concave property of $\Phi(x)$, we have that
\begin{equation}\label{conbound1}
\begin{split}
\Phi(\beta)=&\Phi(\lambda_1 x_1+(1-\lambda_1)\nu_1,\lambda_2x_2+(1-\lambda_2)\nu_2,\ldots,\lambda_{N+1}x_{N+1}+(1-\lambda_{N+1})\nu_{N+1})\\
\geq&\Phi(x)\prod_{k=1}^{N+1}\lambda_k\\
&+\sum_{\substack{1\leq j_1<j_2<\ldots<j_k<N+1\\1\leq k\leq N+1}}(1-\lambda_{j_1})\times\cdots\times(1-\lambda_{j_k})\lambda_{j_{k+1}}\times\cdots\times\lambda_{j_N}\Phi(C_{j_1\ldots j_k})\\
\geq&\delta^{N+1}\Phi(x),
\end{split}
\end{equation}
for some $C_{j_1\ldots j_k}\in[0,\infty)^{N+1}$, and $\{j_{k+1},\ldots,j_{N+1}\}=\{1,\ldots,N+1\}\setminus\{j_1,\ldots,j_k\}$. We therefore have shown that the claim of the lemma holds.
\end{proof}

\end{document}